\documentclass[11pt,letterpaper]{article}
\usepackage{setspace}
\usepackage[margin=1in, paper=letterpaper]{geometry}

\usepackage[utf8]{inputenc}
\usepackage{amsmath,amssymb,amsfonts}
\usepackage{amsthm}
\usepackage{hyperref}
\usepackage{tikz}

\usepackage{xspace,xcolor}
\usepackage{delimset}
\usepackage{thmtools}
\usepackage{cleveref}
\usepackage{comment}
\usepackage{pifont}

\usepackage{natbib}
\usepackage[inline]{enumitem}
\setlist[enumerate]{label=(\arabic*)}

\newtheorem{theorem}{Theorem}
\newtheorem{lemma}{Lemma}
\newtheorem{corollary}{Corollary}
\newtheorem{definition}{Definition}

\newtheorem{observation}{Observation}

\newcommand{\size}{\delim{|}{|}}
\newcommand{\ceil}{\delim{\lceil}{\rceil}}
\newcommand{\floor}{\delim{\lfloor}{\rfloor}}
\newcommand{\set}{\delimpair{\{}{[m]|}{\}}}
\newcommand{\brc}{\brk[c]}
\newcommand{\tpl}{\brk[a]}
\newcommand{\bitf}{\delimpair{[}{[i]|}{]}}
\renewcommand{\mod}{\,\mathrm{mod}\,}

\newcommand*{\AC}{{AC\textsuperscript{0}}\xspace}
\let\tty\texttt
\let\eps\epsilon
\def\Th{{\it \Theta}}
\DeclareMathOperator{\DD}{\mathcal{D}}

\newcommand*{\RANK}{\textup{\tt rank}\xspace}
\newcommand*{\SELECT}{\texttt{\tt select}\xspace}
\newcommand*{\SUM}{\texttt{\tt sum}\xspace}
\newcommand*{\SEARCH}{\texttt{\tt search}\xspace}
\newcommand*{\MSB}{\texttt{\tt msb}\xspace}
\newcommand*{\LSB}{\texttt{\tt lsb}\xspace}
\newcommand*{\WT}{\texttt{\tt wt}\xspace}
\newcommand*{\MATCH}{\texttt{\tt match}\xspace}

\newcommand*{\OPT}{\textup{\tt OPT}\xspace}

\DeclareMathOperator{\polylog}{polylog}

\renewcommand*{\O}{\mathcal{O}}

\newcommand*{\checkyes}{\text{\ding{51}}}
\newcommand*{\checkno}{\text{\ding{55}}}
\newcommand*{\checkdagger}{\text{\checkyes\textsuperscript{$\dagger$}}\xspace}

\newcommand*{\Pack}[1]{\Pi_{#1}}
\newcommand*{\sumstyle}{\displaystyle}

\begin{document}

\begin{titlepage}

\title{Compressing Dynamic Fully Indexable Dictionaries in Word-RAM}
\author{Gabriel Marques Domingues\footnote{
{\tt {gm@mail.tau.ac.il}}.
School of Electrical Engineering, Tel Aviv University.
This research was supported by the Israel Science Foundation, grant No. 1948/21.}}
\date{}

\maketitle


\begin{abstract}
We study the problem of constructing a dynamic fully indexable dictionary (FID) in the Word-RAM model using space close to the information-theoretic lower bound.
A FID is a data-structure that encodes a bit-vector $B$ of length $u$ and answers, for $b\in\{0,1\}$,
$\texttt{rank}_b(B, x)=|{\{y\leq x~|~B[y]=b\}}|$ and $\texttt{select}_b(B, r)=\min\{0\leq x<u~|~\texttt{rank}_b(B, x)=r\}$ ($-1$ if empty).
A dynamic FID supports updates that modify a single bit of $B$, i.e., $B[i]\gets b$.

We work in the Word-RAM model with $w$-bit words, assuming $w\geq \operatorname{lg} u$.
Integer multiplication takes $\mathcal{O}(1)$ time.
Our memory model is $\mathcal{M}_B$,
allowing access to a fixed precomputed table of $\tau=\operatorname{polylog}(w)$ words, which can be computed in $\mathcal{O}(w\tau)$ time.

In this paper, we show a dynamic FID
based on the famous fusion-tree data-structure of
P{\u{a}}tra{\c{s}}cu and Thorup [FOCS 2014],
modified to use fewer bits and to support $\texttt{select}_0$.
Let $n$ denote the number of ones in $B$.
We describe a parametric construction:
for every $\epsilon\leq 1/2$,
there is a dynamic FID using $$\operatorname{lg}\binom{u}{n}+\mathcal{O}(nw^{\epsilon}/\epsilon)\text{ bits}$$
taking $\mathcal{O}({1/\epsilon+\log_w(n)})$ time for $\texttt{rank}_0/\texttt{rank}_1/\texttt{select}_0$ and updates,
and $\mathcal{O}({\log_w(n)})$ time for $\texttt{select}_1$.
All time bounds are worst-case.
For $\epsilon={1/\sqrt{\operatorname{lg} w}}$,
we reduce the space to $\operatorname{lg}\binom{u}{n}+\mathcal{O}(n\log w)$ bits.
For $\epsilon=\Theta(1)$, the running time matches the lower bound of
Fredman and Saks [STOC 1989].
This is the first deterministic dynamic FID in the standard Word-RAM model
that achieves $o(n\sqrt{w})$ bits of redundancy in $\mathcal{M}_B$ (e.g., $\epsilon=1/4$),
and optimal worst-case time.
\end{abstract}

\thispagestyle{empty}
\end{titlepage}


\section{Introduction}

In this paper, we consider deterministic data-structures in the Word-RAM model with $w$-bit words.
Supporting
$\RANK/\SELECT$ in Word-RAM is a widely studied problem~\cite{jacobson1988succinct,clark1996efficient,clark1996compact,raman2007succinct,golynski2007optimal,golynski2007size,patrascu2008succincter,grossi2009more,golynski2014optimal,li2023dynamic,liang2025optimal,kuszmaul2026succinct},
especially for implementing predecessor search \cite{fredman1990blasting,fredman1993surpassing,andersson1999fusion,andersson2007dynamic,patrascu2014dynamic,cohen2015minimal,navarro2020predecessor}.

A \emph{fully indexable dictionary} (FID) is a data-structure that encodes a bit-vector $B\in\brc{0,1}^u$ and
supports the following operations~\cite{raman2007succinct,golynski2007size,golynski2008redundancy,grossi2009more,golynski2014optimal,navarro2014fully}:
\begin{enumerate}
\item $\RANK_b(B, x)=\size{\set{y\leq x}{B[y]=b}}$, and
\item $\SELECT_b(B, r)=\min\set{0\leq x < u}{\RANK_b(B, x)=r}$ ($-1$ if empty).
\end{enumerate}
A dynamic FID supports writes to $B$, that is,
updating a bit at position $i$ with value $b$ ($B[i]\gets b$).
Note that $\RANK_0(B, i)=i+1-\RANK_1(B, i)$; from now on, we shall use $\RANK$ interchangeably with $\RANK_1$.
In general, implementing $\SELECT_0$ based on $\RANK$ and $\SELECT_1$ in the dynamic setting is non-trivial, as it usually requires more space~\cite{grossi2009more,navarro2014fully,navarro2014optimal}.
See~\cite{raman2007succinct,navarro2014fully} for applications of $\SELECT_0$.\footnote{%
~\cite{patrascu2008succincter,patrascu2014dynamic,li2023dynamic,liang2025optimal,kuszmaul2026succinct} do not address $\SELECT_0$ explicitly.
}

An equivalent description of a FID is to consider the set $S=\set{0\leq x<u}{B[x]=1}$.
In this view, a FID encodes a set $S$ of $n$ elements in $[u]=\brc{0,1,\cdots, u-1}$,
supporting $\RANK(S, x)=\size{\set{y\in S}{y\leq x}}$ and $\SELECT_0/\SELECT_1$ operations as before.
A dynamic FID supports insertions into $S$ and deletions from $S$.
Membership and predecessor queries to $S$ can be answered by computing $\SELECT_1(S, \RANK(S, x))$.
For further details on predecessor data-structures,
see the survey of~\cite{navarro2020predecessor}.

We focus on the density-sensitive case,
where the data-structure
knows both $u$ and $n=\size{S}$ (i.e., the number of ones of $B$)~\cite{grossi2009more,golynski2014optimal}.
We seek a FID that uses space close to the information-theoretic lower bound to encode $B$ (equivalently $S$), which is given by $\lg\binom{u}{n}$ bits.
For the running time,
a lower bound
is given in~\cite{fredman1989cell,patrascu2014dynamic}
and states that
at least one of $\RANK_1/\SELECT_1/$update
must take $\Omega(\log_w(n))$ time.
The first deterministic dynamic structure with worst-case $\O(\log_w(n))$ time
using $\O(nw)$ space
is the dynamic fusion tree of~\cite{patrascu2014dynamic}.

For dynamic memory allocation, we use the $\mathcal{M}_B$ memory model \cite{raman2003succinct,arroyuelo2016succinct,li2023dynamic}.
In this model, the space occupied by a dynamic data-structure equals
the highest address in memory that it uses.
Dynamically maintaining space near $\lg\binom{u}{n}$ is generally difficult in this model~\cite{li2023dynamic}.
Specifically, if two dynamic data-structures using $s_1$
and $s_2$ bits, respectively,
are concatenated,
the resulting space may exceed $s_1+s_2$ bits
after multiple updates~\cite{li2023dynamic}.
Standard techniques for concatenating dynamic memory segments \cite{raman2003succinct,navarro2014optimal,arroyuelo2016succinct,li2023dynamic,kuszmaul2026succinct}
require either a fixed number of memory segments based on $u$~\cite{li2023dynamic},
or a rebuild whenever $n$ changes significantly~\cite{kuszmaul2026succinct}.

Designing a space-efficient dynamic FID with $\O(\log_w n)$ worst-case time operations has been a challenge.
Apart from~\cite{patrascu2014dynamic},
most deterministic structures are either amortized,
require large precomputed tables,
or rely on periodically rebuilding of static indices~\cite{navarro2014fully,navarro2020predecessor,li2023dynamic,navarro2025worst}.
Our main result is a deterministic dynamic FID that uses
$\lg\binom{u}{n}+\O(nw^{\eps})$ space (for any constant $\eps<1$)
and has worst-case $\O(\log_w n)$ time operations,
while only using $\O(\polylog w)$ precomputed words.
See~\Cref{tbl:results} for a list of recent relevant results
and the results presented in this paper.

\begin{table*}[t]
\centering
\renewcommand{\arraystretch}{1}
\begin{tabular}{|c|c|c|c|c|}\hline
Reference & Dyn. & Redundancy & Table & Time \\\hline
\cite{patrascu2014dynamic} & \checkyes & $\O(nw)$ & $\polylog(w)$ & $\O(\log_w(n))$ \\\hline
\cite{navarro2014fully} & \checkyes & $\O(u\log\log u/\log u)$ & $\sqrt{u}$ & $\O\brk*{\frac{\log u}{\log\log u}}$ \\\hline
\cite{li2023dynamic}& \checkyes & $\O\brk*{{u\sqrt{w}}/{2^{(\log u/\log\log u)^{1/5}}}}$ & ${2^{\polylog(u)}}$ & $\O\brk*{\frac{\log u}{\log\log u}}$ \\\hline
\cite{navarro2025worst}& \checkdagger  & $\O\brk*{n\log\log(u/n)}$ & $\sqrt{u}$ & $\O\brk*{\frac{\log n}{\log\log n}}$ \\\hline
\cite{kuszmaul2026succinct}& \checkdagger  & $\O\brk*{{nw}/{2^{(\log_w n)^{1/3}}}}$ & ${\polylog(u)}$ & $\O_{\text{exp}}(\log_w(n))$  \\\hline
\Cref{lemma:memory}+\cite{kuszmaul2026succinct}& \checkyes  & $\O\brk*{n+{nw}/{2^{(\log_w(n))^{1/3}}}}$ & ${\polylog(u)}$ & $\O_{\text{exp}}(\log_w(n))$  \\\hline
Ours (\Cref{thm:dynamic}) & \checkyes & $\O(nw^{\eps}/\eps)$ & ${\polylog(w)}$ & $\O(1/\eps+\log_w(n))$ \\\hline
Ours (\Cref{thm:sqrt-log}) & \checkyes & $\O(n\log w)$ & ${\polylog(w)}$ & $\O\brk*{\sqrt{\log w}+\log_w(n)}$ \\\hline
\end{tabular}
\caption{
Survey of recent results on dynamic FIDs and $\RANK/\SELECT_1$ data-structures.
Rows marked ``\checkyes'' for ``Dyn.'' are dynamic,
and marked ``\checkdagger'' have amortized updates.
A FID with redundancy $r$ uses $\lg\binom{u}{n}+r$ bits,
seen in the ``Redundancy'' column.
The ``Table'' column is the space used for lookup tables.
The ``Time'' column is for $\RANK$ and updates.
Note that~\cite{kuszmaul2026succinct} is a randomized data-structure;
running times in expectation are denoted by $\O_{\text{exp}}$.
}
\label{tbl:results}
\end{table*}


\subsection{Contributions}\label{sec:results}
In this section,
we list our contributions and give an overview of our techniques.
Recall that a dynamic FID encodes a set $S$ of $n$ elements of $[u]=\brc{0,1, \cdots,u-1}$ and supports $\RANK/\SELECT$ and updates.
We employ the standard Word-RAM model
with $w$-bit words and $\O(1)$ time multiplication,
and use the $\mathcal{M}_B$ memory model, expressing the space in bits.
We assume $w\geq \lg u$.
All results use a precomputed table of $\polylog(w)$ words.
A FID is said to have redundancy $r$ if it uses $\lg\binom{u}{n}+r$ bits,
for all $u$ and $n$ (the redundancy depends on $u$ and $n$).
All running times are worst-case and all data-structures are deterministic.
We prove the following theorem.

\begin{theorem}[Dynamic FID]\label{thm:dynamic}
For every $\eps\leq 1/2$,
there is a dynamic FID
with $\O(nw^{\eps}/\eps)$ redundancy
that takes $\O\brk*{1/\eps+\log_w(n)}$ time for $\RANK/\SELECT_0/$updates
and $\O\brk*{\log_w(n)}$ time for $\SELECT_1$.
\end{theorem}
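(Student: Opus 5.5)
We plan to build the structure in two levels: a top-level dynamic fusion tree in the style of P\u{a}tra\c{s}cu and Thorup over a sparse set of \emph{representatives}, and bottom-level compressed \emph{blocks}, each holding a contiguous run of $\Theta(w^{\eps})$ to $\Theta(w)$ elements of $S$. Concretely, we split $S$ into blocks of $b=\Theta(w)$ consecutive elements and keep the minimum of each block in the top tree. The top tree then stores $\O(n/w)$ keys at $\O(w)$ bits of redundancy each, for $\O(n)$ bits in total. It answers predecessor, rank among the representatives (via subtree counts) and select in $\O(\log_w n)$ worst-case time. To support $\SELECT_0$ we will augment each internal node with the sum over its subtree of the number of zeros, i.e.\ gaps, which is computable from the counts and the key ranges. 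A $\SELECT_0$ query then descends the tree using a packed comparison of prefix sums at each node, which fits in a word because the fan-out is $w^{\Theta(1)}$ and the counts are $\lg u\le w$ bits.

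Inside a block we encode the elements relative to the block's range with the standard Elias--Fano style split. With $m$ elements in a range of size $R$, we store $\lg(R/m)$ low bits of each element explicitly and encode the high parts in unary, giving $m\lg(R/m)+\O(m)$ bits. Summing over blocks and using concavity gives $\lg\binom{u}{n}+\O(n)$, apart from slack. To answer rank and select inside a block in $\O(1/\eps)$ time, we further organize each block as a constant-height ($\O(1/\eps)$ levels) tree of fan-out $w^{\eps}$. Each node stores its children's counts and $w^{\eps}$-bit packed sketches (fusion-style compressed keys), so one level costs $\O(w^{\eps})$ bits per child and $\O(1)$ time via multiplication and the $\polylog(w)$-word table, e.g.\ for packed rank/$\MSB$ on $w^{\eps}$ fields. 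Charging the $\O(w^{\eps})$ bits per level to the elements gives $\O(w^{\eps}/\eps)$ redundancy per element, matching the claim. $\SELECT_1$ inside a block is just a unary-select on the high parts, which takes $\O(1)$ with the table over $\O(\lg w)$-bit chunks plus the inner tree. Updates inside a block rewrite $\O(1/\eps)$ nodes, each of $\O(w)$ bits. Whenever $m$ or the range changes, the low-bit width $\lg(R/m)$ changes and forces a re-encoding, which we will perform lazily by splitting or merging blocks when their size leaves $[b/2,2b]$. To keep updates worst-case, the split or merge is deamortized over the following $\Theta(b)$ operations using standard global-rebuilding-per-block.

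The main obstacle is memory management in $\mathcal{M}_B$. Blocks have varying bit lengths and change size, so we cannot simply concatenate them without fragmentation exceeding the budget. We will first try allocating blocks in $\O(\lg w)$ size classes of fixed capacity (rounded up by a $1+\O(w^{-\eps})$ factor). Each class lives in an array that is kept compact by moving its last block into any hole on deletion, and the top tree's pointers are updated in $\O(1)$. The classes themselves are interleaved via a constant number of growing/shrinking segments managed so that the total highest address stays within $\O(n)$ bits of the sum of the block sizes. This gives total redundancy $\O(nw^{\eps}/\eps)$. The difficulty is showing that this interleaving costs only additive $\O(n)$ bits worst-case, and I expect this to require the most care.
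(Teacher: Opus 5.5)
The gap is in your intra-block structure, which is where the whole difficulty of the theorem lies. You assert that a node of fan-out $w^{\eps}$ can be searched in $\O(1)$ time with fusion-style sketches at a cost of $\O(w^{\eps})$ bits per child. Storing the keys' sketches is indeed cheap. But a query must also compute the sketch of an arbitrary $w$-bit $x$, i.e.\ $\Pack{C}(x)$, where the node's significant positions $C$ are spread over the whole word. The known $\O(1)$-time methods in standard Word-RAM (Fredman--Willard, P\u{a}tra\c{s}cu--Thorup) need a $w$-bit mask of $C$ and $w$-bit data-dependent multipliers. These cannot sit in the fixed $\polylog(w)$-word table, so they cost $\Theta(w)$ bits per node, i.e.\ $\Theta(w^{1-\eps})$ bits per key. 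That is far more than $w^{\eps}/\eps$ for any constant $\eps<1/2$. This is why the P\u{a}tra\c{s}cu--Thorup node costs $\Theta(w^{3/4})$ bits per key, and why $o(\sqrt{w})$ bits per key with $\O(1)$ time is presented as previously open.

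The paper's new ingredient is exactly this step. \Cref{thm:compress} computes $\Pack{C}$ with only $\O(nw^{\eps}/\eps)$ ancillary bits. It applies one shared P\u{a}tra\c{s}cu--Thorup compressor to all $b=nw^{\eps}$-bit blocks of the word in parallel (\Cref{lemma:parallel,thm:perfect}), gathers the results with \Cref{lemma:pack}, and iterates $\O(1/\eps)$ rounds. A lazily maintained remainder array $R$ ensures an update changes only one significant position per round. This costs $\O(1/\eps)$ time per node, so even with it your uniform tree of $1/\eps$ levels would take $\O(1/\eps^{2})$ time. The paper instead uses an exponential tree whose level $i$ uses $\eps_i=2^{-(i+1)}$, so the per-level costs $\O(2^{i})$ sum to $\O(1/\eps)$ (\Cref{thm:small-rank}).

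The memory management, which you leave open, also does not work as sketched. A block of $\Theta(w)$ elements can occupy $\Theta(w^{2})$ bits, so rounding capacities up by $1+\Theta(w^{-\eps})$ wastes $\Theta(w^{1-\eps})$ bits per element. Covering sizes from $\Theta(w)$ to $\Theta(w^{2})$ at that granularity needs $\Theta(w^{\eps}\log w)$ classes, not $\O(\lg w)$. Moreover, concatenating several growing arrays in $\mathcal{M}_B$ (\Cref{lemma:linear}) costs a constant factor on all but one of them, not an additive $\O(n)$. The paper (\Cref{lemma:memory}) takes leaves of $N=w^{2}$ elements and pages them in fixed chunks of $\Theta(w\sqrt{N})$ bits with per-leaf pointer arrays. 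It uses the fact that a leaf requests a new chunk only once every $\Omega(\sqrt{N})$ updates, so the total waste is $\O(nw/\sqrt{N})=\O(n)$.

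There are also smaller issues:
\begin{enumerate*}
\item $w^{\Theta(1)}$ counts of $\lg u$ bits do not fit in one word, so your top-level $\SELECT_0$ descent needs a dynamic searchable partial sum;
\item $\SELECT_0$ inside a block is never addressed; the paper keeps an SPS over $x_i-i$ whose array $D$ is recomputed from $\SELECT_1$ (\Cref{lemma:select0});
\item $\O(1)$-time insertion into an Elias--Fano low-bit array spanning up to $\Theta(w)$ words needs an indirection, such as the paper's out-of-order $V^*$ with a permutation tree.
\end{enumerate*}
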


\medskip\noindent
We divide the proof of this result into two parts:
\begin{enumerate*}
\item reducing the problem to a small FID, and
\item implementing a small FID by constructing separate data-structures for each operation.
\end{enumerate*}

\subsubsection{Cardinality reduction}

We prove the following general lemma for managing memory in $\mathcal{M}_B$
to reduce the problem to a FID for small set.
The idea is to construct a tree with degree $w^{\O(1)}$,
with FIDs at the leaves.
Observe that $n$ is not known a priori.
This lemma replaces previous memory management tricks for FIDs~\cite{navarro2014optimal,li2023dynamic,kuszmaul2026succinct}, allowing for a dynamic number of leaves and with worst-case time guarantees.

\begin{restatable}{lemma}{LemmaMemory}\label{lemma:memory}
For any $N\geq 1$ and $r=\O(Nw)$,
suppose that there is a dynamic FID for a set $\O(N)$ elements using
$\O(r)$ redundancy.
Then, for any $n$, there is a dynamic FID for a set of $n$ elements
with $\O\brk{n+nr/N+nw/\sqrt{N}}$ redundancy
and the running time of each operation increases by
$\O\brk*{\log_w(n)}$ additively.
\end{restatable}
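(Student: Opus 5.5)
The plan is to partition $S$ into consecutive blocks of $\Theta(N)$ elements each. Each block is stored in a leaf FID given by the hypothesis. The blocks are indexed by a dynamic fusion tree of~\cite{patrascu2014dynamic} with degree $w^{\O(1)}$, and its leaves are the blocks. Concretely, we maintain the invariant that every block holds between $N/2$ and $2N$ elements (the last block may hold fewer). A block covering the universe interval $[a_i,a_{i+1})$ is then a FID for a set of $\O(N)$ elements in a universe of size $u_i=a_{i+1}-a_i$. We get $\sum_i \lg\binom{u_i}{n_i}\le\lg\binom{u}{n}$, since the product of binomials counts a subset of the configurations.

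The top structure stores, for each of the $\O(n/N)$ blocks, its boundary $a_i$ and the prefix counts. We use the fusion tree of~\cite{patrascu2014dynamic} augmented with subtree counts, which costs $\O(w)$ bits per block, i.e. $\O(nw/N)\le\O(nw/\sqrt N)$ bits. It supports $\RANK$ by predecessor search on the $a_i$, $\SELECT_1$ by descending on counts, and $\SELECT_0$ by descending on the aggregated quantities $a_i-\mathrm{count}_i$. Each of these costs $\O(\log_w(n/N))$, after which we recurse into one leaf. Updates that overflow a block split it, and underflows merge it with a neighbor or borrow from it.

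A split or merge rebuilds $\O(N)$ elements, which is too slow for worst-case bounds. I would de-amortize in the standard way. When a block reaches $3N/2$ elements we start incrementally transferring elements into a fresh block, doing $\O(1)$ element moves per update (each move being a leaf delete and insert). The transfer completes before the block reaches $2N$. Merges are handled symmetrically. Each block has at most one pending transfer, and during a transfer the top structure stores both parts, so queries consult at most two leaves.

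The main obstacle is memory in $\mathcal{M}_B$: leaves change size dynamically, and the space counts up to the highest used address. Here I would use a two-level allocation. Each leaf's redundancy bound $\O(r)$ tells us its maximum footprint is $L=\lg\binom{u_i}{2N}+\O(r)$, but $u_i$ varies, so fixed-size slots waste space. I would therefore cut each leaf's memory into chunks of $\Theta(\sqrt N\, w)$ bits, all of equal size. These chunks are allocated from a single global pool kept compact: freed chunks are filled by moving the last chunk into the hole, with $\O(1)$ words of back-pointers updated. A pointer table per leaf maps the leaf's logical chunk index to a physical chunk. Since a leaf's address space is sized by its current needs, the table has $\O(\text{leaf bits}/(\sqrt N w))$ entries of $w$ bits each. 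Moving a chunk costs $\O(\sqrt N)$ words, though, so chunk moves must also be spread over updates. Alternatively, I would choose the chunk size $\Theta(w)$ with a $\sqrt N$-ary indirection, and I would try this first.

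The waste per leaf consists of the last partially used chunk, $\O(\sqrt N w)$ bits, plus pointer overhead of $\O(1)$ bits per element, giving the $\O(n)$ term. Summed over $n/N$ leaves this yields $\O(nw/\sqrt N)$, and adding the $\O(r)$ per leaf gives $\O(nr/N)$. Verifying that each chunk-compaction step does only $\O(1)$ word work per update, while keeping the leaf FID's internal addressing valid through the indirection with $\O(1)$ overhead per memory access, is the delicate step I expect to require the most care.
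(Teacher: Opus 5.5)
\paragraph{Overall.} Your architecture matches the paper's: a leaf-oriented tree over $\Theta(N)$-element leaf FIDs, lazy splits and merges, and leaf memory cut into $\Theta(\sqrt N w)$-bit chunks reached through a per-leaf pointer table. But the step you defer as ``delicate'' is the real content of the lemma, and the proposal does not contain the idea that makes it work.

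\paragraph{The missing rate bound.} A compactor that copies $\O(1)$ words per update fills only one hole every $\Theta(\sqrt N)$ updates. Chunk releases by \emph{different} leaves, however, can occur at every update. So nothing in your proposal bounds the backlog of holes, and hence the highest used address.

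The missing observation is a per-leaf rate bound. An update changes a leaf's footprint by only $\O(w)$ bits; for the main term, $\lg\binom{u}{n+1}-\lg\binom{u}{n}\le\lg u$. Hence a given leaf needs or releases a chunk at most once per $\Omega(\sqrt N)$ of its own updates. The paper uses this as follows:
\begin{itemize}
\item New chunks come from a free region of $\O(nw/\sqrt N)$ bits placed right after the chunk pool. The top tree sits after that region and is moved via \Cref{lemma:linear}.
\item Released chunks go on a free-list, and the last used chunk is lazily moved into the oldest free one.
\end{itemize}
The rate bound then gives $\O(1)$ unused chunks per leaf at all times, i.e.\ $\O(n/N)\cdot\O(\sqrt N w)=\O(nw/\sqrt N)$ bits, with $\O(1)$-time allocation. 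Your waste accounting (one partially used chunk per leaf) omits these pending holes. You also never say where the top structure lives in $\mathcal{M}_B$.

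\paragraph{Smaller points.} The fallback you would try first, $\Theta(w)$-bit chunks behind a $\sqrt N$-ary indirection, cannot meet the bound. Each word-sized chunk needs its own pointer into the global pool, of about $\lg(\#\text{chunks})$ bits. For $u=2^w$ and $n=2^{w/2}$ there are $\Theta(n)$ chunks, and the pointers total $\Theta(nw)$ bits. Pointer cost must be amortized over $\Omega(\sqrt N)$ words, which is exactly what makes chunk moves expensive and the rate argument necessary.

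Your source for the $\O(n)$ term is also wrong. A leaf's pointer table is $\O(\sqrt N)$ words, so it belongs to the $nw/\sqrt N$ term. In the paper, the $\O(n)$ term pays for lazy splits and merges. A FID's universe is fixed when it is built, so during a transfer the leaves involved cover overlapping universes, costing at most $\lg\binom{2(u_1+u_2)}{n}=\lg\binom{u_1+u_2}{n}+\O(n)$ bits.

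Relatedly, your split must fully drain the old block into fresh FIDs with the new sub-universes. Suppose instead the old block keeps $[a_i,a_{i+1})$ after handing off part of its elements. Then leaf universes overlap permanently, and your inequality $\sum_i\lg\binom{u_i}{n_i}\le\lg\binom{u}{n}$ no longer applies. The loss can be $\Theta(N\log(u/N))$ bits per leaf.
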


\paragraph*{De-amortizing previous results.}
This lemma is general and can be applied to previous constructions.
For example, we describe how to de-amortize~\cite{kuszmaul2026succinct}.
The construction of~\cite{kuszmaul2026succinct},
called a compressed tabulation-weighted treap,
is a dynamic $\RANK/\SELECT$ data-structure
for $N$ elements,
using $\O(w)$ redundancy
and taking $\O_{\text{exp}}(\log^3{N})$ time for the operations.
Applying~\Cref{lemma:memory} to this tabulation-weighted treap for $N=2^{2(\log_w(n)^{1/3}}$,
yields a dynamic $\RANK/\SELECT$ data-structure with $\O(n+nw/\sqrt{N})$ redundancy,
and $\O_{\text{exp}}(\log_w(n))$ time operations.
When $N$ doubles (resp. halves), we lazily merge neighboring leaves (resp. split leaves), increasing the space by $\O(n)$ bits during this lazy procedure.

\subsubsection{Indices and dictionaries}

At the leaves, we construct a FID for $n=w^{\O(1)}$ elements, using $\lg\binom{u}{n}+nw^{\Th(1)}$ bits.
Our FID uses a lookup table of only $\polylog(w)$ words.
Following the ideas of~\cite{grossi2009more,patrascu2014dynamic,cohen2015minimal},
we divide the problem of constructing a small dynamic FID into implementing three data-structures:
\begin{enumerate}
\item A $\SELECT_1$ dictionary encodes a set $S\subseteq[u]$ and supports $\SELECT_1$ and update (i.e., insertion/deletion of elements of $S$).
\item A {$\RANK$ index}
is given $\O(1)$ time access to the $\SELECT_1$ dictionary
(i.e., assuming $\SELECT_1$ takes $\O(1)$ time)
and supports $\RANK$ and updates
(i.e., updates the index upon insertion/deletion of elements to and from $S$).
\item A {$\SELECT_0$ index}
is given $\O(1)$ time access to the $\SELECT_1$ dictionary
and supports $\SELECT_0$ and updates.
\end{enumerate}

To our knowledge, the best deterministic $\RANK$ index is
the dynamic fusion-node of~\cite{patrascu2014dynamic}.
It is a $\RANK$ index for
at most $w^{1/4}$ elements using $\Th(w)$ bits
and taking $\O(1)$ time for $\RANK$ and update.
We define the following notation for a $\RANK$ index.

\begin{definition}
For parameters $N$, $r$, and $t$,
a $\RANK$ index for a set of $n\leq N$ elements,
using $\O(nr)$ bits and taking $\O(t)$ time for $\RANK$ and update,
is called a $\tpl{N, r, t}$-index.
\end{definition}

We describe a $\SELECT_1$ dictionary
for $n=w^{\O(1)}$ elements
with $\O(\log_w(n))=\O(1)$ time operations
and $\lg\binom{u}{n}+\O(n\log w)$ bits.
Furthermore, we show how to use a given $\RANK$ index to implement a $\SELECT_0$ index, with only $\O(n\log w)$ additional bits (c.f.~\Cref{lemma:select0}).
As in~\Cref{lemma:memory}, we reduce the problem of $\RANK$ indices to just $w$ elements.
We show the following central lemma.
\begin{restatable}{lemma}{LemmaSmallFID}\label{lemma:small-FID}
Given a $\tpl{w, r, t}$-index,
there is a dynamic FID for a set of $n=w^{\O(1)}$ elements,
with $\O(nr+n\log w)$ redundancy,
taking $\O(t)$ time for $\RANK/\SELECT_0/$updates,
and $\O(1)$ time for $\SELECT_1$.
\end{restatable}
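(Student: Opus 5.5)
The plan is to follow the three-way decomposition announced just before the statement: build a $\SELECT_1$ dictionary, a $\RANK$ index and a $\SELECT_0$ index for the small set $S$ with $n=w^{\O(1)}$, and combine them. The first and central piece is a $\SELECT_1$ dictionary using $\lg\binom{u}{n}+\O(n\log w)$ bits with $\O(1)$-time $\SELECT_1$ and update. I would use the standard Elias--Fano style split into high and low parts. Each element $x\in S$ is written as a pair $(\lfloor x/2^{\ell}\rfloor, x \bmod 2^{\ell})$ with $\ell=\lfloor\lg(u/n)\rfloor$. The low parts are stored as a packed array of $n$ fields of $\ell$ bits. The high parts form a multiset over $[\Theta(n)]$, so they can be encoded in $\O(n)$ bits. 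The total is $n\lg(u/n)+\O(n)\le\lg\binom{u}{n}+\O(n)$ bits.

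Since $n=w^{\O(1)}$ is constant-depth relative to $w$, I would organize everything as a tree of degree $w^{\delta}$ and constant height. Each node stores, for its children, counts packed into one word of $\O(w^{\delta}\log w)$ bits, so a child can be located by a word-parallel prefix-sum search (multiplication plus a $\polylog(w)$ table). This costs $\O(\log w)$ bits per element overall. The low parts of a leaf block must be kept sorted under insertion. For this I would store each block of $w^{\O(1)}$ elements as a packed array and shift it in $\O(1)$ word operations, provided each block spans $\O(1)$ words. Updates will keep blocks of size $\Theta(w/\ell)$ elements, using split/merge with the $\O(1)$ slack words per block accounted in the $\O(n\log w)$ redundancy. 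Because the number of blocks is $\O(n\ell/w)$ with slack of $\O(w)$ bits per block, the slack is $\O(n\ell)$, which is too much. I would therefore instead allow blocks of $\Theta(w)$ elements, spanning $\O(\ell)$ words, and shift lazily using a small buffer. Either way, the redundancy target is $\O(n\log w)$, with memory inside the small structure managed by fixed-size slots since $n$ is known up to constant factors.

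Next, the $\RANK$ index. The given $\tpl{w,r,t}$-index handles $w$ elements, so I would partition $S$ (in sorted order) into chunks of $\Theta(w)$ consecutive elements. Each chunk gets its own $\tpl{w,r,t}$-index, which uses $\O(r)$ bits per element and $\O(t)$ time, given access to $\SELECT_1$ restricted to the chunk (obtained via an offset from the dictionary). Above the chunks sits a constant-height tree of degree $w^{\delta}$. Each node stores the first key of each child; a key comparison at a node is done through $\SELECT_1$ calls, keeping the node as a fusion-style rank structure over $w^{\delta}$ keys. To keep this within budget, those node structures are themselves $\tpl{w,r,t}$-indices over the child representatives, so the cost is $\O(r+\log w)$ bits per element and $\O(t)$ time per level. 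Updates split or merge chunks, which means rebuilding one $w$-element index; worst-case time is preserved by de-amortizing that rebuild incrementally.

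Then the $\SELECT_0$ index (the forward-referenced \Cref{lemma:select0}). Write $\SELECT_0(r)=r+j$, where $j$ is the number of ones preceding that zero. Equivalently, $j$ is the largest index with $\SELECT_1(j)-j\le r$ (with the appropriate off-by-one). The sequence $y_j=\SELECT_1(j)-j$ is nondecreasing, and an update shifts a suffix of it by $\pm1$. It is implicit from the dictionary. I would therefore build a second $\RANK$ index over the multiset $\{y_j\}$ using the same chunked construction, with its $\SELECT_1$ access provided by computing $y_j$ from the dictionary in $\O(1)$. The suffix shift is handled by storing keys relative to chunk offsets, which keeps the added cost at $\O(nr+n\log w)$.

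The main obstacle is the worst-case, space-tight maintenance of the sorted packed low bits in the $\SELECT_1$ dictionary. If that step resists, I would fall back to storing the low parts unsorted in insertion slots plus a permutation of $\O(\log w)$ bits per element per block, which gives exactly the $\O(n\log w)$ allowance.
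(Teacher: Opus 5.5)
\paragraph{The gap is in the $\SELECT_0$ index.}
Your decomposition is the paper's. So is your $\RANK$ part: chunks of $\Theta(w)$ elements, each with a $\tpl{w,r,t}$-index, under a constant-height tree whose nodes reuse the same index over child representatives. The problem is the $\SELECT_0$ index.

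You correctly reduce $\SELECT_0$ to a search over the nondecreasing sequence $y_j=\SELECT_1(S,j)-j$. However, each update to $S$ inserts or deletes one $y_j$ \emph{and shifts the whole suffix after it by $\pm1$}. The given $\RANK$ index supports only single-key insertions and deletions, so it cannot absorb a uniform shift of many keys. Shifting by one changes the significant bit positions $\MSB(a\oplus b)$: for example, $\{4,5\}\mapsto\{3,4\}$ moves it from bit $0$ to bit $2$. The node would therefore have to be rebuilt.

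"Storing keys relative to chunk offsets" handles only chunks lying entirely after the update point. The chunk containing the update point can have $\Theta(w)$ of its keys shifted, which means $\Theta(w)$ index updates. The same problem recurs at every upper-level node, whose representative $y$-values also shift in a suffix.

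There is a second unaddressed issue: $\{y_j\}$ is a multiset, since every run of consecutive ones gives equal $y_j$'s, whereas the index is defined for sets.

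\paragraph{The missing ingredient.}
What you need is the searchable partial sum of \Cref{thm:partial-sum}, applied to $\SUM(M,i)=\SELECT_1(S,i)-i$. It is used only after a leaf-oriented tree has reduced the problem to $\O(w/\lg w)$ elements; the upper levels navigate by partial sums of zero-counts, where an update is a $\pm1$ point change, not by keys. The construction works as follows.
\begin{itemize}
\item Each prefix sum is split as $D[i]+T[i]$, where $T$ is a packed array of $\O(\log w)$-bit counters. A suffix shift is then a single word-parallel addition to $T$.
\item The given index is built only over the base values $D[r_i]$ of representatives, i.e., positions with a large gap $A[r_i]\ge 2^{\delta+1}k$. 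These base values are distinct, well separated, and fixed under updates.
\item Overflow of $T$ is prevented by a lazy renormalization that processes $\O(1)$ positions per update, each costing $\O(1)$ index updates.
\item $\SEARCH$ finds the representative through the index, checks at most one neighbour, and finishes with a packed $\RANK$ inside $T$.
\end{itemize}
Crucially, $D$ is never stored, because $D[i]=\SELECT_1(S,i)-i-T[i]$ is computable from the dictionary. This is what keeps the extra cost at $\O(nr+n\log w)$ bits.

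\paragraph{A side remark on the dictionary.}
Your Elias--Fano split is unnecessary. For $n=w^{\O(1)}$, storing the keys verbatim costs $n\lg u\le\lg\binom{u}{n}+n\lg n=\lg\binom{u}{n}+\O(n\log w)$ bits. The paper does essentially your fallback with whole keys: an insertion-ordered array plus an $\O(\log w)$-bit permutation kept in a leaf-oriented tree. It also adds a compaction step you omit, which $\mathcal{M}_B$ requires. On deletion, the last stored key is moved into the hole and its permutation entry is redirected; that entry is located via subtree maxima stored in the tree.
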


Lastly, we improve the $\RANK$ index of~\cite{patrascu2014dynamic},
by introducing a space-time trade-off with parameter $\eps$.
For $\eps=1/4$, this is the first (dynamic) $\RANK$ index
using $o(n\sqrt{w})$ bits with $\O(1)$ time operations,
answering this previously open question~\cite{kuszmaul2026succinct}.
The $\RANK$ index is as follows.
\begin{restatable}{theorem}{ThmSmallRank}\label{thm:small-rank}
For every $\eps\leq 1/2$,
there is a $\tpl{w, \O(w^\eps/\eps), \O(1/\eps)}$-index
in standard Word-RAM.
This index queries the $\SELECT_1$ dictionary
$\O(\log(1/\eps))$ times.
\end{restatable}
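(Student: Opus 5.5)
The plan is to generalize the dynamic fusion node of~\cite{patrascu2014dynamic} to a recursive, multi-level node. The set $S$ ($n\le w$ elements) lives in the $\SELECT_1$ dictionary. The index will store only a compressed trie sketch over $S$, together with a small amount of navigational data.

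Recall how the original node works. It keeps, for $k\le w^{1/4}$ keys, the set of distinguishing bit positions (at most $k-1$). It also keeps the compressed keys, as $k\times k$ bits with don't-cares, packed in one word. $\RANK$ then takes a constant number of word operations: a parallel comparison, one $\SELECT_1$ probe to locate the branching position, and a second parallel comparison. This costs $\Theta(w)$ bits for $w^{1/4}$ keys, i.e.\ $\Theta(w^{3/4})$ bits per key.

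To reach $w^{\eps}$ bits per key, build a tree of fusion-like nodes of fan-out $k=w^{\eps/2}$ (or so) and depth $\O(1/\eps)$ over the $n\le w$ sorted keys. At each node, store the compressed sketch of its $k$ splitter keys relative to the node's subrange. Distinguishing positions are local to the subtree, so we store them as offsets into the sorted set of distinguishing bits. Each of the $k$ positions then needs only $\O(\log w)$ bits, and the sketch matrix takes $k^2$ bits. Each node uses $\O(k^2+k\log w)$ bits, and there are $\O(n/k)$ nodes per level over $\O(1/\eps)$ levels. The total is $\O(nk/\eps)=\O(nw^{\eps}/\eps)$ bits, after choosing $k$ appropriately ($k^2\le w^{\eps}\cdot k$ requires $k\le w^{\eps}$, so take $k=w^{\eps}$, which fits in a word since $k^2\le w$ for $\eps\le1/2$). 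A query descends $\O(1/\eps)$ levels. Each level does $\O(1)$ word operations using the $\polylog(w)$ table, which supplies constants such as repeated-bit masks and $\MSB$/$\LSB$ helpers.

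The $\SELECT_1$ dictionary is needed to verify the branching bit: after the sketch-based guess, compare $x$ against the actual key $\SELECT_1(i)$ to find the first differing bit, then redo the comparison at that node. The claim is that only $\O(\log(1/\eps))$ dictionary probes are needed, not $\O(1/\eps)$, so the probes must not occur at every level. My plan is to first descend using sketches only, obtaining a candidate leaf key. Then compute the true longest-common-prefix position $p$ with $x$ via one probe. Finally, locate the node at which $p$ becomes relevant by binary search over the $\O(1/\eps)$ levels of the root-to-leaf path, each step costing one probe. The node whose distinguishing range contains $p$ determines the correct rank via a masked sketch comparison, as in~\cite{patrascu2014dynamic}. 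The remaining levels are resolved by taking the min/max of the subtree, which costs no further probes once $p$ is known.

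Updates insert or delete a key along one root-to-leaf path. At each node they update the compressed positions and the sketch via shifts and masks, as in the dynamic fusion node. This costs $\O(1)$ per level, hence $\O(1/\eps)$ in total. Insertion may add a new distinguishing bit, which shifts the offsets of all other positions; with $k$ offsets of $\O(\log w)$ bits packed in a word, this is a single parallel add. Fan-out overflow is handled by keeping node degrees in $[k,2k]$ and splitting or merging as in a B-tree, using word-parallel operations. Since $n\le w$, the tree has constant shape per level, and the node arrays are fixed-size blocks.

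I expect the main obstacle to be the relative encoding of distinguishing positions. Positions must be kept as small ranks into a global sorted list of distinguishing bits, itself stored in $\O(n\log w)$ bits, so that no node stores $\lg w$-bit absolute positions redundantly times $k^2$. This list must also stay consistent under insertions that create new branching bits, and that consistency is what forces the careful $\O(\log(1/\eps))$-probe search.
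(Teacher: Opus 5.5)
\textbf{The sketch computation is missing.} Your per-node budget of $\O(k^2+k\log w)$ bits is exactly \Cref{thm:rank-node} with $s=0$. It tacitly assumes that a node can compute the compressed key $\Pack{C}(x)$ of the query in $\O(1)$ time with no storage. In standard Word-RAM that is the entire difficulty.

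The Fredman--Willard/P{\u{a}}tra{\c{s}}cu--Thorup sketch needs a $w$-bit mask of the node's significant positions and $w$-bit multipliers, all depending on $C$. These cannot go in the $\polylog(w)$ table, which may depend only on $w$. Storing them costs $\Theta(w)$ bits per node, i.e.\ $\Theta(nw/k)=\Theta(nw^{1-\eps})$ in total, which exceeds $nw^{\eps}$ for every $\eps<1/2$. That compression is also only available for $k\le w^{1/4}$.

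The ``relative encoding'' you single out as the main obstacle is a red herring. An absolute bit position already fits in $\lg w$ bits. The real problem is extracting $k$ scattered bits of $x$ in $\O(1)$ time from such a compact description.

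The missing idea is \Cref{thm:compress}:
\begin{enumerate}
\item Cut $x$ into blocks of $b=nw^{\eps}$ bits.
\item Apply one shared compression (that of \Cref{thm:perfect}, a simple function) to all blocks at once via \Cref{lemma:parallel}. Only $\O(b)$ ancillary bits are stored, for the set $R$ of positions modulo $b$.
\item Gather the blocks with \Cref{lemma:pack}, whose constants depend only on $w$. This shrinks $w$ bits to $w^{1-\eps}$.
\item Iterate $\O(1/\eps)$ times and finish with \Cref{lemma:permute}.
\end{enumerate}
Lazy deletion in $R$ (the free mask $F$) ensures an update changes only one position per round, so updates cost $\O(1/\eps)$. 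This gives $\O(nw^{\eps}/\eps)$ ancillary bits, proportional to $n$ rather than $w$. It works only for $n\le w^{\eps/2}$ keys, since \Cref{thm:perfect} needs $n^3\le b$; \Cref{coro:index} lifts it to $w^{\eps}$ keys with a two-level tree.

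\textbf{The probe count is unsupported.} With uniform fan-out $w^{\eps}$ you have $\Theta(1/\eps)$ levels. In a rank-partitioned tree, a node cannot pick the correct child from sketches alone. The blind match only yields a splitter with maximal common prefix; the child is fixed by the true branching bit, which costs a $\SELECT_1$ probe at that node. A sketch-only descent can therefore leave the correct subtree, and the leaf key it reaches need not have the globally longest common prefix with $x$. There is no evident monotone predicate along the path to binary search on.

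The paper instead uses an exponential tree. Level $i$ uses the index of \Cref{coro:index} with $\eps_i=2^{-(i+1)}$, and the leaves use $\eps$. This gives only $\O(\log(1/\eps))$ levels, with $\O(1)$ probes each. The per-level times $\O(1/\eps_i)=\O(2^{i})$ and spaces $\O(n/\eps_i)$ sum geometrically to $\O(1/\eps)$ and $\O(n/\eps)$. The leaves' $\O(nw^{\eps}/\eps)$ dominates the space.
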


Putting these results together,
we prove~\Cref{thm:dynamic}.

\begin{proof}[Proof of~\Cref{thm:dynamic}]
Applying~\Cref{lemma:small-FID} together with~\Cref{lemma:memory},
with $N=w^2$ and
given a $\tpl{w,r,t}$-index,
yields a dynamic FID
occupying $\lg\binom{u}{n}+\O(nr+n\log w)$ bits,
taking $\O(t+\log_w(n))$ time for $\RANK/\SELECT_0$ and updates
and $\O(\log_w(n))$ time for $\SELECT_1$.
Using the index of~\Cref{thm:small-rank} with this result
concludes the proof,
observing $w^{\eps}/\eps=\Omega(\log w)$ for any $\eps$.
\end{proof}

The proof of~\Cref{thm:small-rank} consists of two non-trivial steps.
First, we modify the $\RANK$ index of~\cite{patrascu2014dynamic} to use less space,
yielding a $\tpl{w^{\eps},\O(w^\eps/\eps), \O(1/\eps)}$-index (\Cref{coro:index}).
Second, we build the index of~\Cref{thm:small-rank}
using an exponential tree~\cite{andersson2007dynamic},
where the leaves use this index
and each node in level $i$ uses the $\tpl{w^{\eps_i},\O(w^{\eps_i}/\eps_i), \O(1/\eps_i)}$-index where $\eps_i=1/2^{i+1}$.

For $\eps=1/\sqrt{\lg w}$,
we exploit binary search to improve the space.
This yields a $\RANK$ index that uses $o(n)$ space and $o(\log n)$ time for $n=\O(w)$ elements.

\begin{restatable}{theorem}{SqrtLog}\label{thm:sqrt-log}
There is a $\tpl{w, 1/2^{\O(\sqrt{\log w})}, \O(\sqrt{\log w})}$-index
in standard Word-RAM.
This index queries the $\SELECT_1$ dictionary $\O(\sqrt{\log w})$ times.
\end{restatable}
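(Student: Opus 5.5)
Fix $\eps=1/\sqrt{\lg w}$, so that $w^{\eps}=2^{\sqrt{\lg w}}$ and $1/\eps=\sqrt{\lg w}$. The point of \Cref{thm:sqrt-log} is that the redundancy per element must be $1/2^{\O(\sqrt{\log w})}$. This means the index may not store anything per element; it may only store data per \emph{block} of $2^{\Theta(\sqrt{\log w})}$ consecutive elements, and the extra time budget of $\O(\sqrt{\log w})$ has to pay for resolving a query inside a block by \emph{binary search} through the $\SELECT_1$ dictionary.

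\textbf{Blocking.} Partition the (at most $w$) elements, in sorted order, into blocks of size between $b$ and $2b$, where $b=2^{c\sqrt{\lg w}}$ for a suitable constant $c$. Take the first element of each block as its representative. This gives at most $n/b$ representatives.

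\textbf{Index over representatives.} Build a $\RANK$ index over the representatives, namely the $\tpl{w,\O(w^{\eps}/\eps),\O(1/\eps)}$-index of \Cref{thm:small-rank} with $\eps=1/\sqrt{\lg w}$. Its space is $\O((n/b)\,2^{\sqrt{\lg w}}\sqrt{\lg w})$ bits. Choosing $c\geq 2$ makes this $n/2^{\Omega(\sqrt{\log w})}$. Its queries to the $\SELECT_1$ dictionary of the representatives are translated into queries on the real dictionary through the block boundaries.

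\textbf{Block boundaries.} I would not store boundaries per element. Instead, store for each block its starting rank, using $\O(\log w)$ bits per block, which totals $\O((n/b)\log w)=n/2^{\Omega(\sqrt{\log w})}$. Then $\SELECT_1$ of the $j$-th representative is a single dictionary call.

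\textbf{Queries.} To compute $\RANK(x)$:
\begin{enumerate}
\item Find the block of $x$ using the top index, in $\O(\sqrt{\log w})$ time.
\item Binary search within the block using $\SELECT_1$ on ranks in $[\text{start}, \text{start}+2b)$. This takes $\O(\log b)=\O(\sqrt{\log w})$ dictionary queries.
\end{enumerate}
Both the time and the number of $\SELECT_1$ calls are $\O(\sqrt{\log w})$.

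\textbf{Updates.} An insertion or deletion changes the size of one block and shifts the starting ranks of all later blocks. That shift is the main obstacle. Stored absolute ranks would need $\Theta(n/b)$ updates, so I would store the starting ranks in a structure supporting prefix sums with $\O(1)$-time increments. There are at most $w/b$ block sizes of $\O(\sqrt{\log w})$ bits each. I would pack them into $\O(1)$ words, or, more safely, handle them with a dynamic partial-sums structure in the style of~\cite{patrascu2014dynamic}, i.e. a fusion node over word-packed counters. Prefix sums over word-packed fields can be computed by multiplication in $\O(1)$ time when the fields are padded to $\O(\log w)$ bits. This still costs only $\O((n/b)\log w)$ bits.

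When a block exceeds $2b$ or falls below $b$, split it or merge it with a neighbor. The top index then changes by $\O(1)$ insertions or deletions of representatives, and the block-size array changes by a shift of packed fields, which takes $\O(1)$ word operations. If the representative of a block is itself deleted, the successor in that block becomes the new representative; this is a delete plus an insert in the top index. Since blocks are defined by ranks and not by stored keys, no rebuild is needed, and all bounds are worst-case.

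The step I expect to be hardest to make rigorous is the interaction between the top index and the packed block sizes. The top index of \Cref{thm:small-rank} accesses a $\SELECT_1$ oracle on representatives. Each such access must be answered in $\O(1)$ by a prefix sum over the packed sizes followed by one dictionary call, so that the top index's $\O(\log(1/\eps))$ oracle calls stay within budget.
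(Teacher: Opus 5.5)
Your proposal is correct and is essentially the paper's argument. The paper also takes the exponential-tree index of \Cref{thm:small-rank} with $\eps=1/\sqrt{\lg w}$, makes it store only a $2^{-\Theta(\sqrt{\log w})}$ fraction of the elements, and resolves the rest by $\O(\sqrt{\log w})$ binary-search steps through the $\SELECT_1$ dictionary; it phrases this as replacing that tree's leaf-level indices by binary search over leaves enlarged by a $2^{\O(\sqrt{\log w})}$ factor, whereas you use the whole index as a black box over sampled representatives and keep the block-start ranks in a separate packed array.
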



\subsection{Organization}

In~\Cref{sec:prelim}, we define the computational and memory model that we will use
and outline relevant techniques.
In~\Cref{sec:cardinality}, we describe the cardinality reduction and show~\Cref{lemma:memory}.
In~\Cref{sec:FID}, we show~\Cref{lemma:small-FID},
reducing the dynamic FID problem to a small $\RANK$ index.
In~\Cref{sec:small}, we show the small dynamic $\RANK$ index of~\Cref{thm:small-rank} in the Word-RAM with multiplication,
which proves~\Cref{thm:dynamic},
and further show~\Cref{thm:sqrt-log},
In~\Cref{sec:extension},
we discuss other Word-RAM models,
and analyze the static case.
In~\Cref{sec:conclusion}, we summarize our results and discuss open problems.


\section{Preliminaries}\label{sec:prelim}

\paragraph*{Terminology.}
For a FID, we only consider updates that add or remove an element from the set $S$.
For a sequence, updates that change its length are called \emph{indels}:
insertion of an element between index $i-1$ and $i$, so it becomes the new element at index $i$,
and deletion of the element at index $i$, so that the index $i+1$ becomes the element at index $i$.
A write to an entry of the sequence can be implemented as a deletion followed by an insertion.

\paragraph*{Notation.}
Let $[u]=\brc{0,\cdots, u - 1}$.
Denote the size of a set $S$ by $\size{S}$.
We denote an ordered set by $\brc{a_0<a_1<\cdots<a_{n-1}}$.
Denote by $\lg n\triangleq\ceil{\log_2 n}$.
We denote the concatenation of $x$ and $y$ by juxtaposition $xy$ and denote $x$ repeated $n$ times by $x^n$.
Our notation always writes the most-significant bit first.
For an array $A$,
we denote the range $A[i], A[i+1],\cdots, A[j]$ by $A[i:j]$.
In a tree, the degree of node $\nu$, i.e., the number of children,
is denoted $\deg(\nu)$.


\subsection{Computational model}

We employ the Word-RAM model, where the memory is an array of $2^w$ words of $w$-bits.
Memory reads and writes of a single word take $\O(1)$ time.
We assume $w\geq \lg u$.

\paragraph*{Instruction set.}
The standard \AC operations\footnote{An \AC circuit has constant depth and polynomial size, using NOT gates at the input and unbounded fan-in ANDs and ORs~\cite{andersson1999fusion}.}
of a $w$-bit word take $\O(1)$ time, namely
addition, bit-shifts ($\ll$ and $\gg$),
and bit-wise operations: AND ($\wedge$), OR ($\vee$), XOR ($\oplus$), and NOT ($\neg$)~\cite{fredman1990blasting,fredman1993surpassing,andersson1999fusion}.
We consider the ``standard'' Word-RAM model that has constant-time integer multiplication~\cite{fredman1990blasting,hagerup2001deterministic,patrascu2014dynamic}.
Other operations,
such as
most-significant bit ($\MSB$),
least-significant bit ($\LSB$),
Hamming-weight/popcount ($\WT$),
and integer division by a constant,
can all be implemented in terms of multiplication~\cite{fredman1990blasting,granlund1994division,patrascu2014dynamic}.
We note that operations can be restricted to $b$ bits by computing $x\mapsto x\wedge 0^{w-b}1^b$ after every operation.
In~\Cref{sec:other}, we consider other instruction sets.

\paragraph*{Constants.}
We allow access to at most $\polylog(w)$ words of precomputed constants that depend only on $w$ (e.g., $(0^{\sqrt{w}-1}1)^{\sqrt{w}}$), without accounting for it in the space, as long as each word is computable in $\O(w)$ time~\cite{ruvzic2008uniform,patrascu2014dynamic}.
A common use for this is to store a lookup table,
e.g., an array encoding a function $f:[\polylog(w)]\to\brc{0,1}^w$
(e.g., $f(i)=(0^{2^i-1}1)^{w/2^i}$).

\subsubsection{Packed arrays}\label{sec:bit}

Often, we reinterpret a word as an array of $k\leq w/b$ blocks of $b$ bits,
which we call a $k\times b$ packed array.
We denote the $i$-th block by $x\bitf{i}{b}=x[i\times b: i\times b+b-1]$.
When $b$ is implied, we omit $b$ and write $x[i]$.

The usefulness of a packed array of $\O(w)$ bits
resides in the fact that
many array operations (which usually take linear time)
can be implemented on it
in $\O(1)$ time in standard Word-RAM~\cite{fredman1990blasting,fredman1993surpassing,patrascu2014dynamic}.
Namely,
the following operations can be computed $\O(1)$ time:
\begin{enumerate*}[label=(\roman*)]
\item
element-wise operations $z\bitf{i}{b}=x\bitf{i}{b}\diamond y\bitf{i}{b}$ for $\diamond\in\brc{+,\wedge,\vee,\oplus}$,
\item
broadcast operations
$z\bitf{i}{b}=x\bitf{i}{b}\diamond \alpha$ for
a constant $\alpha$ and
$\diamond\in\brc{\ll, \gg, \times}$,
and
\item
parallel comparisons $z\bitf{i}{b}=0^{b-1}1$ iff $x\bitf{i}{b}\leq y\bitf{i}{b}$ and $0^b$ otherwise.
\end{enumerate*}
Some care is required for operations that can overflow (see the proof of~\Cref{lemma:parallel}).
Using $\LSB$,
we extend these to perform
element search (find $i$ such that $x\bitf{i}{b}=e$)
and,
if the elements are sorted, compute $\RANK$,
all in $\O(1)$ time.

We are also able to apply an operation to only a few entries if we have a mask for those entries.
For example,
range indexing $x\bitf{i:j}{b}$
and comparison indexing $x\bitf{i}{b}$ where $x\bitf{i}{b}\leq e$.
Concretely, for a function $q:[k]\to\brc{0,1}$,
consider a conditional write,
$x\bitf{i}{b}\gets v\bitf{i}{b}$ if $q(i)=1$, and $x\bitf{i}{b}\gets x\bitf{i}{b}$ otherwise.
If we have the mask
$M\bitf{i}{b}=q(i)^b$, the conditional write
is implemented by
$x\gets (x\wedge \neg M)\oplus (v\wedge M)$.
More commonly,
we have a mask $M'\bitf{i}{b}=0^{b-1} q(i)$,
but it can be converted in $\O(1)$ time by
$M\bitf{i}{b}=\neg(\neg M'\bitf{i}{b}+1)$, or equivalently, $M=M'\times 1^b$.

With these techniques, we support indels
on a packed array in $\O(1)$ time.
We cite the following trick used in~\cite{patrascu2014dynamic}
to extend this idea to
support indels for a small sequence.

\begin{lemma}[\cite{patrascu2014dynamic}]\label{lemma:indel}
For a sequence $A$ of length $k=\O(w/\log w)$,
we support indels into $A$ in $\O(1)$ time using $\O(k\log w)$ additional bits.
\end{lemma}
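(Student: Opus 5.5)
The plan is the indirection trick of Pătraşcu and Thorup. Store the elements of $A$ in an unordered slot array $P$ of capacity $k$, and keep a separate permutation $\pi$ that records, for each logical index $i$, the slot $\pi(i)\in[k]$ currently holding $A[i]$. Since $k=\O(w/\log w)$, each slot index needs $\lg k=\O(\log w)$ bits. The whole permutation, a $k\times\lg k$ packed array, therefore fits in $\O(k\log w)=\O(w)$ bits, that is, $\O(1)$ words. Alongside it we keep a free-slot mask $F$ of $k$ bits marking which slots of $P$ are unoccupied. Together, $\pi$ and $F$ are the $\O(k\log w)$ additional bits. An access $A[i]$ is answered by extracting the block $\pi\bitf{i}{\lg k}$ with a shift and mask, then reading $P[\pi(i)]$, all in $\O(1)$ time.

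For an insertion at index $i$, first take a free slot $s=\LSB(F)$, clear that bit of $F$, and write the new element into $P[s]$ with a single word write; for an entry wider than a word, a constant number of writes. Next, shift the packed array $\pi$ so that the blocks at positions $\geq i$ move up by one block. This uses the conditional-write technique of \Cref{sec:bit}: the low part is $\pi\wedge M$ and the high part is $(\pi\wedge\neg M)\ll \lg k$, where $M$ masks the blocks at positions $<i$. The mask $M$ is just $2^{i\lg k}-1$, computable with one multiplication and one shift, or read from a precomputed table of $k$ words. Finally, write $s$ into block $i$. A deletion at index $i$ runs the same steps in reverse: read $s=\pi(i)$, set bit $s$ of $F$, and shift the blocks above $i$ down by one block, masking out the overflow. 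Each step is a constant number of word operations, so indels take $\O(1)$ time.

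The main obstacle is making sure the packed manipulations never overflow and stay within $\O(1)$ words. Writing $c\,w$ for the number of bits of $\pi$, I would choose $k$ so that $k\lg k\leq c\,w$ with constant $c$. Then $\pi$ fits in $\lceil c\rceil$ words, and a shift across a word boundary costs a constant number of extra shifts and ORs that carry the spilled block into the next word. I would also check the boundary cases $i=0$ and $i=$ current length, and confirm that slot reuse through $F$ keeps $P$ within capacity $k$, so that no relocation of physical entries is ever needed. That last property is exactly what makes the time worst-case rather than amortized.
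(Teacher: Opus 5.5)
Your indirection layer is the same as the paper's: elements stored out of order, plus a one-word $k\times\O(\log w)$ packed permutation updated by masked shifts. Your time analysis is fine. Where you diverge is deletion. You leave a hole and record it in a free-slot mask $F$, while the paper keeps the physical array $A^*$ dense. On deleting position $i$, the paper moves the last physical entry $A^*[k-1]$ into the vacated slot $P[i]$. It finds the unique $i^*$ with $P[i^*]=k-1$ by parallel element search on the packed array, redirects $P[i^*]$ to that slot, and then deletes block $i$ of $P$. Insertions simply append to the end of $A^*$, so no $F$ is needed.

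That compaction step is what your proposal is missing. It matters because all space in this paper is measured in $\mathcal{M}_B$, i.e., by the highest address used. With holes, the slot array must be sized to the capacity and cannot shrink. For example, fill all $k$ slots, then delete the elements sitting in slots $0,\dots,k-2$: the sequence has length $1$, but slot $k-1$ is still live. So beyond the $m$ entries actually stored, you pay for up to $k-m$ unused full entries. That is not $\O(k\log w)$ once entries are $w$-bit pointers or splitters.

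For the tree nodes of \Cref{sec:navigation}, whose degree stays within a factor $4$ of the capacity, this costs only a constant factor. However, the tight leading terms $k\lg\sigma$ in \Cref{thm:partial-sum} and $n\lg u$ in \Cref{lemma:small-FID} depend on the dense layout. The array $V^*$ in \Cref{lemma:small-FID}, with ``only append/remove at the end,'' is exactly this compaction extended to $n=w^{\O(1)}$. Replacing $F$ with the swap-with-last step repairs your proof, still in $\O(1)$ time.
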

\begin{proof}
We store $A$ out of order, denoted $A^*$,
and a $k\times \O(\log w)$ packed array $P$
such that $A[i]=A^*[P[i]]$.
Note that $P$ fits in $\O(1)$ words, thus indels in $P$ are $\O(1)$ time.
An insertion of $x$ at position $i$ adds $x$ to the end of $A^*$ and inserts its index at position $i$ in $P$ (i.e., $P[i+1:k]\gets P[i:k-1]$ and $P[i]\gets k$).
A deletion at position $i$
moves the element $A^*[k-1]$ to position $P[i]$,
updates at position $i^*$ where $P[i^*]=k-1$ into $P[i^*]\gets i$
and
deletes at position $i$ in $P$ (i.e., $P[i:k-2]\gets P[i+1:k-1]$).
Note $i^*$ can be found in $\O(1)$ time by element search.
\end{proof}

\subsubsection{Multiply-shift}

Another relevant technique
is the multiply-and-shift~\cite{patrascu2014dynamic}.
For $x,\mu\in\brc{0,1}^{kb}$ and $s=\brc{s_i}_{i=0}^{k-1}$,
consider the following expression:
\begin{equation}
\Lambda_{\mu, s}(x)=\sumstyle
\sum_{i=0}^{k-1}\brk!{(x\wedge \mu)\bitf{i}{b} \ll s_i}
\end{equation}
We require
$\brk{\mu\bitf{i}{b}\ll s_i}\wedge\brk{\mu\bitf{j}{b}\ll s_j}=0$,
for every $i\neq j$,
so that the terms of the sum do not overlap.
If the mask $\mu$ and the shift values $s$ obey certain further conditions,
we will show how to implement $\Lambda_{\mu, s}$ in $\O(1)$ time in standard Word-RAM.
The idea is that multiple left shifts
that do not overlap
can be implemented by
a single multiplication with
a precomputed value.
We claim the following lemma.

\begin{lemma}\label{lemma:multiply-shift}
Let $\delta\geq\max_i (i\times b-s_i)$,
so that $s_i+\delta\geq i\times b$ for all $i$.
If for every $i\neq j$,
$\brk{\mu \ll (s_i+\delta-i\times b)}\wedge \brk{\mu \ll (s_j+\delta-j\times b)} = 0$,
then $\Lambda_{\mu, s}$ can be computed in $\O(1)$ time.
\end{lemma}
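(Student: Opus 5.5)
The plan is to realize $\Lambda_{\mu,s}$ as one AND, one multiplication and one right shift. Set $x'=x\wedge\mu$. Precompute the constant
\[
m=\sum_{i=0}^{k-1} 2^{\,s_i+\delta-i\times b},
\]
whose exponents are nonnegative by the choice of $\delta$. Then compute $(x'\times m)\gg\delta$. The constant $m$ depends only on $s$, $b$ and $\delta$, so it is one of the allowed precomputed words, or is computed once in $\O(k)\leq\O(w)$ time.

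The key identity comes from expanding the product:
\[
x'\times m=\sum_{i}x'\ll(s_i+\delta-i\times b)=\sum_i\sum_j \brk!{x'\bitf{j}{b}\ll j\times b}\ll(s_i+\delta-i\times b).
\]
The terms with $j=i$ are exactly $\brk!{x'\bitf{i}{b}\ll s_i}\ll\delta$. Summed over $i$, they give $\Lambda_{\mu,s}(x)\ll\delta$.

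We must show two things: that the sum has no carries, and that the cross terms $j\neq i$ disappear after the final step. For carries, note that $x'\ll(s_i+\delta-i\times b)$ is a submask of $\mu\ll(s_i+\delta-i\times b)$. By hypothesis these masks are pairwise disjoint over $i$, so the $k$ summands have disjoint supports. The product is therefore their bitwise OR and no carries occur.

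The cross terms are the main obstacle. The disjointness hypothesis prevents collisions between them, but it does not by itself place them outside the window that is shifted down. So, as a second step, we also apply the mask
\[
M=\bigvee_i \brk!{\mu\bitf{i}{b}\ll (s_i+\delta)},
\]
which is another precomputed constant, and output $\brk!{(x'\times m)\wedge M}\gg\delta$.

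The last check is that $M$ keeps exactly the diagonal terms. Any set bit of a cross term $(i,j)$, $j\neq i$, lies in the support of $\mu\ll(s_i+\delta-i\times b)$. Any bit of $M$ coming from block $i'$ lies in the support of $\mu\ll(s_{i'}+\delta-i'\times b)$. If $i'\neq i$, these two supports are disjoint by hypothesis, so the two bits cannot coincide. If $i'=i$, the bit of $M$ sits in block $i$ of that shifted copy of $\mu$, while the cross-term bit sits in block $j\neq i$. Again they cannot coincide.

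Hence masking by $M$ retains exactly $\Lambda_{\mu,s}(x)\ll\delta$. The shift by $\delta$ then returns $\Lambda_{\mu,s}(x)$, using $\O(1)$ operations in total. If the product exceeds $w$ bits, we use a double-word multiplication, which standard Word-RAM simulates in $\O(1)$ time.
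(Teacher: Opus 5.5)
Your proposal is correct and is essentially the paper's proof: your constants $m$ and $M$ are exactly the paper's $\alpha_0$ and $\alpha_1$, and your output $(((x\wedge\mu)\times m)\wedge M)\gg\delta$ is the paper's formula. Splitting the product into diagonal and cross terms just spells out the step where the paper uses the disjointness hypothesis to pull the AND with $\alpha_1$ out of the sum.
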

\begin{proof}
Note that this condition
implies that the terms of $\Lambda_{\mu, s}$ do not interfere, i.e.,
$\brk{\mu\bitf{i}{b}\ll s_i}\wedge\brk{\mu\bitf{j}{b}\ll s_j}=0$.
Let $\delta_i=s_i+\delta$.
Given this, we perform the following transformations to $\Lambda_{\mu, s}(x)$.
\begin{align}
\sumstyle
\sum_{i=0}^{k-1}\brk!{(x\wedge \mu)\bitf{i}{b} \ll \delta_i}
&=\sumstyle\sum_{i=0}^{k-1}\brk*{\brk!{\brk{(x\wedge\mu) \gg (i\times b)} \wedge \mu\bitf{i}{b} } \ll \delta_i} \notag\\
&=\sumstyle\sum_{i=0}^{k-1}
\brk*{ \brk!{(x\wedge\mu) \ll (\delta_i-i\times b)} \wedge \brk!{\mu\bitf{i}{b} \ll \delta_i} } \notag\\
&=\sumstyle
\brk*{\sum_{i=0}^{k-1}
\brk!{(x\wedge\mu) \ll (\delta_i-i\times b)} }
\wedge
\brk*{\sum_{i=0}^{k-1}
\brk!{\mu\bitf{i}{b} \ll \delta_i} }~,
\end{align}
where the last equality follows by the required condition on $\mu$.
Let $\alpha_0=\sum_{i=0}^{k-1}(1\ll \brk{\delta_i-i\times b})$
and $\alpha_1=\sum_{i=0}^{k-1}(\mu\bitf{i}{b}\ll \delta_i)$,
and therefore $\Lambda_{\mu, s}(x)=\brk{((x\wedge\mu)\times \alpha_0)\wedge \alpha_1}\gg \delta$.
\end{proof}

This construction uses values that are precomputed from the mask and the sequence of shifts.
In~\Cref{sec:small}, we will need to update these values.
For example, in~\Cref{lemma:multiply-shift},
modifying a single $s_i$ that maintains the condition on $\mu$ and $s$,
then
$\alpha_0$ and $\alpha_1$ can be updated in $\O(1)$ time.


\subsection{Memory model}\label{sec:mb}
We define the space used by a data-structure to be the highest address it uses in the memory array of words.
That is,
even if we do not use some intermediate cells,
the last used cell is what matters in the memory usage.
This is the $\mathcal{M}_B$ model of~\cite{raman2003succinct}.
To measure the space in bits, we account for the number of bits used in the highest-addressed word.
See~\Cref{fig:memory}.
All results in this paper are in $\mathcal{M}_B$.

\begin{figure}[ht]
\centering
\begin{tikzpicture}[scale=0.4]
\draw  (1/2, 1) node[anchor=south] {\small $0$};
\draw  (18+1/2, 1) node[anchor=south] {\small $2^w-1$};
\foreach \i in {0,1,3,5}
    \draw[fill=lightgray]  (\i, 0) rectangle ++(1, 1);
\foreach \i in {2,4}
    \draw  (\i, 0) rectangle ++(1, 1);
\draw[fill=lightgray]  (6, 0) rectangle ++(2, 1) node[midway] {$\cdots$};
\foreach \i in {0,1}
    \draw[fill=lightgray]  (8+\i, 0) rectangle ++(1, 1);
\draw[fill=lightgray,stroke=none] (10, 0) rectangle ++(0.5, 1);
\foreach \i in {0,1}
    \draw  (10+\i, 0) rectangle ++(1, 1);
\draw  (12, 0) rectangle ++(2, 1) node[midway] {$\cdots$};
\foreach \i in {0,1,..., 4}
    \draw  (14+\i, 0) rectangle ++(1, 1);
\draw[<->] (10+1/2, -1/4) -- (0, -1/4);
\draw (2, -2) node[anchor=west] {
$\begin{aligned}
    s\text{ bits } &=
    \floor{s/w} \text{ cells }\\
    &+ \text{partial word of } (s - w\floor{s/w}) \text{ bits }
\end{aligned}$};
\end{tikzpicture}
\caption{
Depiction of memory and space usage in the $\mathcal{M}_B$ model.
Cells in use are marked gray.
}
\label{fig:memory}
\end{figure}
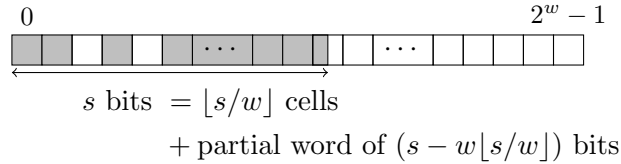

The main difficulty of this model is concatenating two or more dynamic data-structures.
This issue is studied in detail in~\cite{li2023dynamic}.
Next, we provide lemmas that will suffice for most of our uses.
A more delicate approach is necessary to prove~\Cref{lemma:memory}.

\subsubsection{Linear-space structures}\label{sec:linear}

In this section,
we describe procedures for concatenating data-structures
increasing the space by a constant, which we call linear space.
Here, a \emph{data-structure} (an array, a FID, a dictionary, a tree, etc.)
has its own view of an array of cells in $\mathcal{M}_B$.
We assume that the data-structure operations
allocate/de-allocate $\O(w)$ bits per operation.

We present the following lemma to append a linear data-structure to an existing structure.
This lemma can be applied repeatedly to concatenate at most $\O(1)$ linear data-structures.

\begin{lemma}\label{lemma:linear}
Given
a data-structure $\DD_1$ that uses $s_1$ bits
and another data-structure $\DD_2$ that uses $s_2$ bits.
Then, we can concatenate them to use $s_1+\O(s_2)$ bits,
increasing $\O(1)$ time for operations.
\end{lemma}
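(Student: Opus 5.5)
The plan is to place $\DD_1$ at the low end of memory, exactly as it would sit alone, and to embed $\DD_2$ just above it through a level of indirection. $\DD_1$ keeps its own addresses $0,1,\dots$, so its highest address stays $s_1$ bits. $\DD_2$ lives in a region starting right after the current end of $\DD_1$, and its occupied cells may be stored in reverse or permuted order. Since $\DD_1$ can grow into the cells just beyond its end, the region for $\DD_2$ must be allowed to move.

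The standard device is to store $\DD_2$ in a block of capacity $C$ with $s_2\le C=\O(s_2)$, for instance a power of two at least $2s_2$. The block starts at a word offset $o$ that is recorded, together with $C$, in $\O(1)$ words. Every access of $\DD_2$ to its cell $j$ is translated to cell $o+j$, which costs $\O(1)$ time. The total space is then $o+C\le s_1+w+\O(s_2)$, which is $s_1+\O(s_2)$ as long as $o$ is kept within $\O(1)$ words of the end of $\DD_1$.

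Two events can break this invariant. First, $\DD_1$ may grow into the block. Second, $\DD_1$ may shrink so that a gap opens, or the size of $\DD_2$ may drift out of the range $[C/4,C/2]$. In either case we relocate the block to a new offset, possibly with a new capacity. Relocating in one step costs $\Theta(s_2/w)$ time, so the relocation must be de-amortized.

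The approach is to keep slack of $\Theta(s_2)$ bits between $\DD_1$ and the block, and to copy the block incrementally. Each operation moves $\O(1)$ words toward the new location, and a slack of $\Theta(s_2)$ gives $\Theta(s_2/w)$ operations of warning before a collision. Since each operation changes $s_1$ or $s_2$ by only $\O(w)$ bits, this warning suffices. During the copy, a word-threshold pointer records which cells have already moved, and reads and writes to $\DD_2$ are routed to the old or new copy accordingly. With this invariant the gap is $\O(s_2)$, so the total space stays $s_1+\O(s_2)$.

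Shifting the block upward by moving words in order from the top is safe even when the old and new regions overlap, and the same holds for shifting downward by moving from the bottom. This overlapping-copy bookkeeping is the main obstacle. It must handle a new move being triggered before the previous one finishes, which I would resolve by choosing thresholds so that moves never overlap in time. If $s_2<w$, I would treat it as $\O(w)$ bits and not count this toward the bound.
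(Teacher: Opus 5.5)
Your proposal is correct and is essentially the paper's argument: keep $\Theta(s_2)$ bits of slack between $\DD_1$ and $\DD_2$, and use the fact that each operation changes the sizes by only $\O(w)$ bits to relocate $\DD_2$ incrementally, $\O(1)$ words per operation, within $\Theta(s_2/w)$ operations. The paper organizes this differently: in phases of $s_2/(2w)$ operations it builds copies of $\DD_2$ both in the free space below it and in an end region above it, then keeps whichever copy fits, which avoids your overlapping in-place shifts. It also notes that the combined structure allocates only $\O(1)$ words per word allocated by its parts, which is what lets the lemma be applied repeatedly. Your top-first upward shift raises the highest address by $\Theta(s_2)$ bits in a single step, so to keep that property you should grow the target region gradually beforehand (your slack leaves time for this).
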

\begin{proof}
Re-parameterize so that $\DD_1$ and $\DD_2$ allocate/de-allocate at most $w$ bits.
We store $\DD_1$, then a free space of $\O(s_2)$ bits, then $\DD_2$,
and at the end another space of $\O(s_2)$ bits.
In $s_2/(2w)$ operations,
$\DD_1$ and $\DD_2$ have new sizes $s_1'$ and $s_2'$, respectively.
Because the sizes change by at most $w$ bits per operation,
we have
$s_1-s_2/2\leq s_1'\leq s_1+s_2/2$
and $s_2/2\leq s_2'\leq 3s_2/2$.
We pad $\DD_2$ with $s_2/2$ bits to allow it to grow as needed.
There is enough time to keep a copy of $\DD_2$
in the free space and in the end space.
If the free space has more than $2s_2$ bits, we keep the free space copy and discard the others,
and if the free space has less than $s_2/2$ bits, we keep the end space copy.
Note that the concatenation allocates/de-allocates $\O(1)$ words for every word that each structure allocates.
\end{proof}

Lastly,
we consider data-structures $\DD_n$ with a parameter $n$,
with operations that change $n$ by at most $1$.
We outline the standard trick
to avoid rebuilding
a data-structure when $n$ changes.
This idea is also useful for de-amortization,
especially for data-structures that have
computed values that strongly depend on $n$
(e.g.~\Cref{thm:compress}).
The idea is to allocate space for $\DD_{2^k}$ for $0\leq k\leq 2+\lg n$.
The data-structure in use is stored in $k=1+\lg n$.
We keep a copy of the data-structure in each $k=2+\lg n$ and $k=\lg n$ (only a partial copy of the smaller).
If $n$ doubles, we start allocating the next power-of-two,
and if $n$ halves, we de-allocate the last one.
We summarize with this observation.

\begin{observation}\label{obs:linear}
Suppose that, for any parameter $N$,
$\DD_n$ uses $\O(Ns)$ bits if $n=\Th(N)$.
Then, we can de-amortize $\DD_n$ to use $\O(ns)$ bits for any $n$,
adding $\O(1)$ time to the operations.
\end{observation}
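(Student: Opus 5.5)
The plan is to use a geometric family of copies indexed by powers of two. Let $\DD_n$ be the structure with $n=\Th(N)$ using $\O(Ns)$ bits. We reserve consecutive regions $R_0,R_1,\dots$ in memory, where region $R_k$ has room for an instance $\DD_{2^k}$, i.e.\ $\O(2^k s)$ bits. These regions are laid out in increasing order of $k$. The total space up to region $K$ is then $\sum_{k\le K}\O(2^k s)=\O(2^K s)$. As long as the highest region ever touched satisfies $K\le 2+\lg n$, the highest address used is $\O(ns)$, which is what the $\mathcal{M}_B$ model charges.

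We maintain the invariant that the active copy lives in region $k^*=1+\lg n$. In that region $\DD$ is instantiated with parameter $N=2^{k^*}$, which is valid because $n\in[N/4,N]$ there (up to rounding). Alongside it we keep two shadow copies:
\begin{enumerate}
\item one being built in region $k^*+1$, for the case that $n$ grows towards $2^{k^*}$;
\item one in region $k^*-1$, for the case that $n$ shrinks.
\end{enumerate}
Each update to the active structure is also replayed on the shadow copies, together with $\O(1)$ additional steps of incremental copying or rebuilding of the shadows from the active copy. Copying a structure of $\O(2^k s)$ bits, or rebuilding it by reinserting elements, must finish within the $\Th(2^{k})$ updates that pass before $n$ crosses the next power of two. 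This is the standard global-rebuilding argument: we charge $\O(1)$ words of work per update, after rescaling so that the copy cost in operations is $\O(n)$. Reinserting elements one per update achieves this.

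When $n$ crosses a power of two upward, the copy in region $k^*+1$ is already complete and we switch to it. Region $k^*+2$ then begins to be populated and region $k^*-2$ is abandoned. The downward case is symmetric: we switch to region $k^*-1$, and region $k^*+2$ is released, which lowers the highest address. Since only $\O(1)$ copies receive operations, each operation costs only an additive $\O(1)$ overhead. Queries are answered only from the active copy, so they are unaffected.

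I expect the main obstacle to be hysteresis. If $n$ oscillates around a power of two, the shadow copies must already be complete when a switch happens. I would address this by keeping both neighbouring shadows fully synchronized at all times, not only partially built. The shadow for region $k^*-1$ only needs to hold $n\le 2^{k^*-1}$ elements when the switch occurs, and immediately after any switch the new neighbours are rebuilt within the next $\Th(n)$ updates. By choosing switch thresholds with a constant-factor gap (switch up at $n=2^{k^*}\cdot\tfrac34$, down at $2^{k^*}\cdot\tfrac14$), every rebuild has $\Omega(n)$ updates available to complete before it is needed.
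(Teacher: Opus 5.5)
Your proposal is correct and is essentially the paper's argument: regions for $\DD_{2^k}$ are laid out by increasing $k$, the active copy sits at $k=1+\lg n$, and shadow copies sit at the two neighbouring levels, so the highest address is $\O(2^{2+\lg n}s)=\O(ns)$; your hysteresis thresholds make explicit the rebuild timing that the paper leaves implicit in keeping ``only a partial copy of the smaller.'' One small index slip: on an upward switch from $k^*$ the abandoned region is $k^*-1$ (not $k^*-2$), and on a downward switch the released region is $k^*+1$ (not $k^*+2$), after which a new lower shadow at $k^*-2$ must start being built.
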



\section{Cardinality reduction}\label{sec:cardinality}

In this section, we show a general procedure for reducing
the FID problem to small sets.
This is often referred to as \emph{cardinality reduction} (see \citep[{\S}3.2]{navarro2020predecessor} for details).
The key idea
is to use a \emph{leaf-oriented tree} \cite{dietz1989optimal,willard2000examining,andersson2007dynamic,bille2018dynamic}
where all leaves are at the same distance to the root,
and to employ auxiliary data-structures in the internal nodes
to reduce the operations of $\RANK$ and $\SELECT$ to a single leaf.


\subsection{Partial-sums}\label{sec:auxiliary}

The main data-structure we will need to navigate the leaf-oriented tree
is called a \emph{searchable partial-sum} (SPS) \cite{fredman1982complexity,dietz1989optimal,patrascu2004lower,patrascu2006logarithmic,arroyuelo2016succinct,bille2018dynamic}.
An SPS encodes a sequence $A\subseteq[\sigma]^k$, supporting the following operations.
\begin{itemize}
\item $\SUM(A, i)=\sum_{j=0}^{i}A[j]$, and
\item $\SEARCH(A, z)=\max\set{i\in[k]}{z\geq \SUM(A, i)}$ ($-1$ if empty).
\end{itemize}
For a given parameter $1\leq \delta\leq \lg\sigma$, a dynamic SPS supports updates of the form $A[i]\gets A[i]\pm \Delta$ for $\Delta\in[2^\delta]$.
We will also support insertions/deletions in $Z[i]=\SUM(A, i)$.
That is, a deletion at $Z[i]$ replaces $A[i]$ and $A[i+1]$ with $A[i]+A[i+1]$,
and an insertion of $z$ at $Z[i]$ replaces $A[i]$ with $z-\SUM(A, i)$ and $\SUM(A, i+1)-z=A[i]-\brk{z-\SUM(A, i)}$.
The result of~\cite{bille2018dynamic} states that there is an SPS
using $\O(kw)$ bits
with $\O(\log_{w/\delta}(k))$ operations.
We shall describe this data-structure in detail and refine the space bound in~\Cref{sec:partial-sum}.


\subsection{Leaf-oriented trees}\label{sec:navigation}

In a leaf-oriented tree,  all leaves are at the same
level (i.e., distance to the root),
and all nodes in the same level have around the same degree, up to a constant factor.
Concretely, there is a sequence of integers $\brc{d_i}_{i=0}^{\ell-1}$ such that
$d_i\leq \deg(\nu)\leq 4d_i$ for every internal node $\nu$ at level $i$ (except perhaps the root).
These are a generalization of
$B^+$-trees~\cite{comer1979ubiquitous},
$(a,b)$-trees~\cite{huddleston1982new},
$y$-fast trees~\cite{willard1983log},
and exponential trees~\cite{andersson2007dynamic}.
For example, a $(d, 4d)$-tree has $d_i=d$,
and an exponential tree has $d_i=d^{1/2^i}$.
The internal nodes store data-structures that are used to navigate to a leaf.

The general idea is to reduce an array $A$ to a smaller size.
The internal nodes of the tree maintain
a set of ``splitters''
$\brc{z_i}_{i=0}^{\Lambda-1}$ that divide the array.
The leaf $\lambda$ represents the array $A[z_\lambda:z_{\lambda+1}-1]$,
where $z_{-1}=0$ and $z_{\Lambda}$ is the length of $A$.
For traversing the tree, each node maintains its own smaller set of splitters
for the smaller array that subtree represents.
Formally, for a node $\nu$ with degree $d=\deg(\nu)$
it divides the array at positions $\brc{z_i}_{i=0}^{d-1}$.
Each node stores auxiliary data-structures
to reach the correct leaf.
For example,
to index the array $A$ at position $j$,
we use an SPS for an array $R$
where $\SUM(R, i)=z_i$,
and
we compute
$A[j]=(A[z_{i}:z_{i+1}-1])[j-z_i]$,
where $i=\SEARCH(R, j)$.

When inserting/deleting a child into/from a node $\nu$,
if the degree of $\nu$ becomes too large, we split it into two even parts,
and if it becomes too small,
we merge with a sibling into either a single node or two nodes (updating the parent of $\nu$ accordingly).
We show in~\Cref{sec:de-amortization}
how to lazily perform merges and splits,
using $\O(1)$ updates to the data-structures in node $\nu$
to get worst-case update times.
For pointers, we support merging and splitting
as indels in an array pointers,
using~\Cref{lemma:indel}
(we always assume $d_i=\O(w/\log w)$).
Usually, we also employ merges/splits to the leaves
to keep the size of the array between $N$ and $4N$,
for a given parameter $N$,
so that the number of splitters is $\Lambda=\Th(n/N)$.

We apply this structure to reduce the FID to a small set.
In this case, we consider the bit-vector $A[0:u-1]$,
and the splitters divide the universe.
However, instead of every array $A[z_\lambda:z_{\lambda+1}-1]$ having a similar size,
we want them to have a similar weight $\WT(A[z_\lambda:z_{\lambda+1}-1])=\Th(N)$.
The leaf $\lambda$ stores the set
$S_\lambda=\set{y-z_\lambda}{y\in S,~z_\lambda\leq y < z_{\lambda+1}}
\subseteq[z_{\lambda+1}-z_{\lambda}]$ with $z_\Lambda=u$.
Employing merges/splits,
we guarantee $N\leq \size{S_\lambda}\leq 4N$.
We emphasize that $z_\lambda$ are not necessarily in $S$;
this is important to keep the universe size fixed (before merging/splitting).

We detail how to compute $\RANK$ and $\SELECT$.
Consider a node with degree $d$
that divides the set $S$ at positions $\brc{z_i}_{i=0}^{d-1}$.
The $i$-th subtree
represents the set $S_i=\set{y-z_i}{y\in S,~z_i\leq y < z_{i+1}}$.
The node stores the elements $\brc{z_i}_{i=0}^{d-1}$
sorted, supporting indels with~\Cref{lemma:indel}.
We store two searchable partial-sums $N_b$ in the node,
where $\SUM(N_b, i)=\RANK_b(S, z_i)$
($\delta=1$ and $\sigma\geq n$).
Updates to $S$ are $\pm 1$ updates to the partial-sums.
To compute $\SELECT_b$ for a rank $r$, let $i=\SEARCH(N_b, r)$
and $r_i=\RANK_b(S, z_i)=\SUM(N_b, i)$.
Hence, we reduce the problem to $S_i$ as follows.
\begin{align}
\SELECT_b(S, r)
&=\SELECT_b\brk*{\set{y\in S}{z_i\leq y < z_{i+1}},~r - r_i} 
=z_i+\SELECT_b\brk*{S_i,~r-r_i}~.
\end{align}
For $\RANK$,
we store in the node a dynamic $\RANK$ index for the set of splitters
($\SELECT_1$ is $\O(1)$ time by accessing the sorted array).
Merges and splits, respectively, delete and insert splitters,
updating the $\RANK$ index.
To compute the $\RANK$ of $x\in[u]$,
let $i=\RANK(\brc{z_i}_{i=0}^{d-1}, x)-1$, therefore:
\begin{align}
\RANK(S, x)
&=\RANK(S, z_i)+\RANK\brk*{\set{y\in S}{z_i\leq y < z_{i+1}},~x} \label{eq:rank-a}\\
&=\RANK(S, z_i)+\RANK\brk*{S_i,~x-z_i}~.\label{eq:rank-b}
\end{align}

Using the SPS of~\cite{bille2018dynamic} and
the $\RANK$ index of~\cite{patrascu2014dynamic},
the internal nodes have degree $d_i=w^{1/4}$
and use $\O(d_i w)$ bits.
Therefore, the internal nodes use $\O(\Lambda w)=\O(nw/N)$ bits
and take $\O(\log_w(n))$ time to navigate to a leaf.


\subsection{De-amortization}\label{sec:de-amortization}

Let us describe the de-amortization procedures
for merging and splitting nodes.
We show how to perform merges and splits in a lazy fashion to get worst-case update time.
These are standard techniques: see~\cite{willard2000examining,andersson2007dynamic}.
Further, we show that lazily merging/splitting two FIDs uses only $\O(n)$ additional bits.
The goal is that the lazy operations have completed the merge/split
by the time that any of the merged/split sets later decide to further merge/split.

A node holds between $N$ and $4N$ elements.
The lazy operations will be completed in $N$ steps.
The parent node stores the intermediate state of the lazy operations.
Note that in the interim, there are at most $N$ insertions/deletions.
We describe the procedure for general leaf-oriented trees,
working with internal nodes or leaves,
where leaves can represent arrays or FIDs.

If the node reaches $4N$ elements,
we split into two nodes $\nu_1$ and $\nu_2$,
each holding $2N$ elements.
In detail, $\nu$ is the $\nu_1$ node, and we create a new $\nu_2$ node.
We move the last $5$ elements from $\nu$ to $\nu_2$,
so that within at most $(N+4N)/5=N$ operations, the procedure is completed.
For merging two nodes $\nu_1$ and $\nu_2$, with $n_1\leq 2N$ and $n_2\leq 2N$ elements respectively,
create a new node $\nu$.
We move the last $3$ elements from $\nu_1$ and
the first $3$ elements from $\nu_2$ to $\nu$,
so that within at most $(N+4N)/6\leq N$ operations, the procedure is completed.
If $n_1\geq 2N$, we combine both merging and splitting into two nodes of at most $3N$ elements.

The space for the interior nodes is still $\O(nw/N)$ bits.
For the leaves representing FIDs, the space used when splitting is bounded by
$\log_2\binom{u_1}{t}
+\log_2\binom{u_2}{t}
+\log_2\binom{u_1+u_2}{n-2t}
\leq \log_2\binom{2(u_1+u_2)}{n}
=\log_2\binom{u_1+u_2}{n}+\O(n)$ bits,
where we use the following inequalities:
$\sum_i \log_2\binom{u_i}{n_i}\leq \log_2\binom{\sum_i u_i}{\sum_i n_i}$ and
$\log_2\binom{2u}{n}\leq \log_2\binom{u}{n}+\O(n)$.
When merging, the space is
$\log_2\binom{u_1}{n_1-t}
+\log_2\binom{u_2}{n_2-t}
+\log_2\binom{u_1+u_2}{2t}
\leq
\log_2\binom{2(u_1+u_2)}{n_1+n_2}=
\log_2\binom{u_1+u_2}{n_1+n_2}+\O(n_1+n_2)$ bits.

With this, update times are guaranteed worst-case.
Recall that each term also has some redundancy,
but we may simply increase the constant in the redundancy of the underlying FID.
Therefore, an additional $\O(n)$ space is incurred by
the de-amortization.


\subsection{Managing memory}\label{sec:memory}

We describe how to implement a leaf-oriented tree in $\mathcal{M}_B$.
We will separately manage
the internal nodes and
the leaves, that dominate the space.

For the internal nodes,
we use the de-amortization of linear structures as described in~\Cref{obs:linear}
to maintain linear space and worst-case times for arbitrary $n$.
For $N\leq n\leq 4N$, for some parameter $N$,
we pre-allocate the space for each level with a large enough constant
so that the nodes in each level can be allocated as needed,
then concatenate them.
This technique works even if
we use $\O(\log w)$-bit pointers,
say, for a tree that fits in $w^{\O(1)}$ bits.

We now prove~\Cref{lemma:memory}.
Conceptually, each leaf FID has a view of its own $\mathcal{M}_B$ array.
The challenge is to
provide a view of linear memory to each leaf FID
all in a single array and
without incurring a lot of additional space.
The memory management we use is similar to the techniques used in~\cite{raman2003succinct,li2023dynamic,kuszmaul2026succinct},
with the main difference being that we exploit the leaf-oriented tree
to allocate/de-allocate memory per leaf.
This seemingly small change guarantees worst-case times
in our case where the number of leaves can change.

\LemmaMemory*
\begin{proof}
We use a leaf-oriented tree where each leaf $\lambda$ handles a set $S_\lambda$ of $\Th(N)$ elements.
Let $\brc{z_\lambda}_{\lambda=0}^{\Lambda-1}$ be the splitters, as described in~\Cref{sec:navigation},
let $u_\lambda=z_{\lambda+1}-z_\lambda$,  where $z_{-1}=0$ and $z_{\Lambda}=u$, and let $n_\lambda=\size{S_\lambda}$.
The space used by the leaves is:
\begin{equation}
\sumstyle
 \sum_{\lambda=0}^{\Lambda-1} \brk*{\lg\binom{u_\lambda}{n_\lambda}+\O(r)}\leq \lg\binom{\sum_\lambda u_\lambda}{\sum_\lambda n_\lambda}+\O(\Lambda r)
 = \lg\binom{u}{n}+\O(nr/N)~\text{bits.}
\end{equation}
As shown in the previous sections,
the leaves incur additional $\sum_\lambda \O(n_\lambda+r)=\O(n+nr/N)$ bits to the redundancy,
considering merging/splitting nodes.

For managing the memory of the FIDs,
we use chunks of $s=\Th(w\sqrt{N})$ bits; they are stored first in the memory array.
There are at most $\brk{\lg\binom{u_\lambda}{n_\lambda}+\O(r)}/s=\O(Nw)/s=\O(\sqrt{N})$ chunks per leaf $\lambda$.
For each leaf $\lambda$, we keep an array of pointers $P_\lambda$ to its chunks.
This array takes $\O(\sqrt{N}w)$ bits, which we store in an additional chunk.
Accessing an address $a$ in the FID of leaf $\lambda$ translates into
reading address $P_\lambda[\floor{a/s}]+(a\mod s)$.
We round $s$ to a power-of-two to simplify this expression.
Each leaf wastes $\O(1)$ chunks and $\O(w)$ bits.

We cannot use~\Cref{lemma:linear} directly
because an operation may need to allocate a new chunk of $\O(\sqrt{N}w)$ bits.
We exploit the fact that when a leaf allocates a new chunk,
it will take $\O(\sqrt{N})$ operations until it asks for another (and similarly, until it frees another chunk).
After $t$ insertions, the space changes by at most
$\O(tr)$ bits for the redundancy and
$\log_2\binom{u}{n+t}-\log_2\binom{u}{n}=\log_2\binom{u-n}{t}-\log_2\binom{n+t}{t}\leq \log_2\binom{u}{t} \leq t\lg u$ bits
for the main term.
Hence, a leaf may need a new chunk
only after $s/(\lg u+\O(r))\geq s/\O(w)=\Omega(\sqrt{N})$ insertions.
Deletions are analogous.

Leveraging this observation,
after the chunks, we maintain a free space of $\O(nw/\sqrt{N})$ bits,
and only after that, the internal nodes ($\O(nw/N)$ bits) using~\Cref{lemma:linear}.
When a new chunk is allocated in the free space,
we allocate an additional word at every operation.
There is room for $\O(1)$ unused chunks for each leaf
in the free space,
and hence allocation of a chunk takes $\O(1)$ time.
At the cost of adding $\O(w)$ bits to $s$,
we keep a free-list of unused chunks
and lazily move the last used chunk in the free space
to the oldest unused chunk in the free-list.
This adds $\O(1)$ chunks per leaf,
and hence this FID uses
$\lg\binom{u}{n}+\O(n+nr/N)+
\O(w\sqrt{N})\cdot \O(n/N)=\lg\binom{u}{n}+\O(n+nr/N+nw/\sqrt{N})$ bits.
\end{proof}


\section{Small FIDs}\label{sec:FID}

In this section, we construct a dynamic FID given only a $\RANK$ index.
First, we show how to construct a $\SELECT_0$ index using almost the same redundancy as the $\RANK$ index.
Next, we show a dynamic $\SELECT_1$ dictionary for $n=w^{\O(1)}$ elements using $\lg\binom{u}{n}+\O(n\log w)$ bits.
Together, this proves~\Cref{lemma:small-FID}.


\subsection{Succinct partial-sum}\label{sec:partial-sum}

Before we construct our small FID,
we first describe a searchable partial-sum
that will be used in the next sections.
The details of this construction will be useful to implement the $\SELECT_0$ index.
We provide a version of the construction of~\cite{patrascu2004tight,patrascu2006logarithmic},
with the adaptions of \cite{bille2018dynamic} to support indels in the sum.
We assume that $\sigma$ is fixed with $\lg\sigma\leq w$, and that we are allowed $\O(1)$ words of constants that depend on $\lg\sigma$ and $w$.
We claim the following theorem.

\begin{theorem}[Based on {\citep[{\S}8]{patrascu2006logarithmic}} and {\citep[{\S}3]{bille2018dynamic}}]\label{thm:partial-sum}
For $k=\O(w/(\delta+\log w))$,
given a $\tpl{w, r, t}$-index,
there is a searchable partial-sum for a sequence $A\in[\sigma]^k$
using $k\lg\sigma+\O(kr+k\log w)$ bits, with $\O(1)$ time $\SUM$
and $\O(t)$ time $\SEARCH$, updates, and indels in the sum.
\end{theorem}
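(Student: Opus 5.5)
We build a structure in the style of the Pătraşcu–Demaine partial-sum structure (as adapted by Bille et al.), replacing its fusion node with the given $\tpl{w,r,t}$-index. The sequence $A$ is held in two parts: a plain part and a small buffer of recent changes. The plain part is the array $A$ itself, $k$ entries of $\lg\sigma$ bits each, stored with \Cref{lemma:indel} so that indels in the sum cost $\O(1)$. This gives $k\lg\sigma+\O(k\log w)$ bits. Next to it we keep a static prefix-sum array $Z$ with $Z[i]=\SUM(A,i)$, taken at the last rebuild.

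Storing $Z$ naively would cost $k\lg(k\sigma)$ bits, which is too much. We instead encode each $Z[i]$ as a set of splitters in the universe $[k\sigma]$. The $\tpl{w,r,t}$-index over this set then supports $\SEARCH$ as a $\RANK$ query in $\O(t)$ time. The index needs $\O(1)$-time $\SELECT_1$, i.e.\ access to $Z[i]$. We provide this by computing $Z[i]$ on the fly from the plain part and a sparse set of checkpoints: store $Z$ explicitly only at every $\Theta(w/\lg(k\sigma))$-th position, which costs $\O(k\log w)$ bits amortized per element, and recover the intermediate sums with a packed-array prefix sum in $\O(1)$ time. This is the standard word-parallel prefix-sum trick, multiplying by $(0^{b-1}1)^{*}$ after widening the fields by $\O(\log k)$ bits.

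Updates $A[i]\gets A[i]\pm\Delta$ with $\Delta<2^\delta$ go into a buffer in the style of Pătraşcu–Demaine. For each position we keep an accumulated pending delta in a $k\times\O(\delta+\log w)$ packed array. Because $k=\O(w/(\delta+\log w))$, this array fits in $\O(1)$ words. The true sum is $\SUM(A,i)$ from the static part plus the prefix sum of the packed deltas, which is again one multiplication, so $\SUM$ takes $\O(1)$ time. For $\SEARCH(z)$ we first find a candidate $j=\RANK$ of $z$ in the static splitter set. The true answer then lies within a range that the buffered deltas can shift. We resolve it with a parallel comparison of $z$ against the corrected sums $Z[j']+D[j']$ over the packed window. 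Here we need a bound on how far the answer can drift; the fix is to bound the accumulated delta before a flush.

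The main obstacle is de-amortizing the flush. Every $\Theta(1)$ updates we take one position whose buffered delta is nonzero, fold it into the plain $A$ entry, and update the splitters by deleting and reinserting the affected $Z$ values in the index. This takes $\O(t)$ time. The difficulty is that changing $A[i]$ shifts every later $Z[i']$, not just one splitter. To avoid this, we store the splitters relative to their checkpoint block. A block has $\O(w/\log w)$ entries, so the number of blocks is constant once $k=\O(w/\log w)$. We then flush at the level of whole blocks over $\O(1)$ steps, keeping two copies of the block in the style of \Cref{lemma:linear}. Indels in the sum become a single splitter insertion or deletion in the index plus a packed-array indel, each $\O(t)$.

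Summing the space: $k\lg\sigma$ bits for $A$, $\O(kr)$ for the index, and $\O(k\log w)$ for the permutations, buffers, and checkpoints.
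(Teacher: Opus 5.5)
You have a genuine gap at the two places you flag yourself: the drift in $\SEARCH$ and the flush. The idea that closes it is missing. The paper does not index every prefix sum. It indexes only \emph{representative} positions $r$ with $A[r]\ge 2^{\delta+1}k$. It keeps a base array $D$ that is constant on each run $[r_i,r_{i+1})$, and maintains $\SUM(A,i)=D[i]+T[i]$, with all offsets $T[i]=\O(2^{\delta}k^2)$ held in one $k\times\O(\delta+\log w)$ packed word. Two consequences follow.
\begin{itemize}
\item An update $A[i]\gets A[i]\pm\Delta$ is a single broadcast add on $T[i:k-1]$ and touches neither $D$ nor the index. A lazy sweep re-normalizes $\O(1)$ positions per update, each changing at most one representative, so the index does $\O(1)$ updates.
\item In $\SEARCH$, each representative contributes a jump of at least $2^{\delta+1}k$ to the prefix sums, which is at least the offset $T[r]\le 2^{\delta+1}k$. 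So the index query pins the answer to one of two runs. Inside a run all sums share the base $D[r_j]$, so one packed $\RANK$ on $T[r_j:r_{j+1}-1]$ finishes.
\end{itemize}

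In your design neither step goes through, and the space bound also fails.
\begin{itemize}
\item Bounding the accumulated delta does not bound how many positions the answer can cross. Entries of $A$ may be $0$, so $\Theta(k)$ prefix sums can sit inside the ambiguous range. With $A=(5,0,\dots,0,5)$ and $z=5$, a single $+1$ at $A[1]$ moves $\SEARCH$ from $k-2$ to $0$. With zeros the $Z[i]$ are also not distinct, so they do not form a set for the index.
\item Your ``packed window'' of corrected sums does not exist. $Z$ is not stored, and its entries have $\lg(k\sigma)$ bits, so only $\O(w/\lg(k\sigma))$ of them fit in $\O(1)$ words.
\item Folding one delta into $A[i]$ shifts every later $Z$, i.e.\ $\Theta(k)$ index elements. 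Making splitters block-relative still leaves a whole block of shifted splitters. Rebuilding a block's index costs that many $\O(t)$-time updates, not $\O(1)$ steps.
\item Checkpoints every $\Theta(w/\lg(k\sigma))$ positions cost $\Theta(k\lg^2(k\sigma)/w)$ bits. This is $\Theta(k\lg\sigma)$ when $\lg\sigma=\Theta(w)$, so the $k\lg\sigma+\O(kr+k\log w)$ bound fails. With that spacing there are also $\Theta(k)$ blocks, not $\O(1)$, contradicting your later block argument.
\end{itemize}
The paper avoids the space problem by storing the prefix sums ($D$, with $\lg(\sigma w)=\lg\sigma+\O(\log w)$ bits each) instead of $A$. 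Storing $Z$ is within budget once you drop the plain copy of $A$.
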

\begin{proof}
Let $b=2^{\lg\lg(2^{\delta+2}k^2)}\leq 2\lg(2^{\delta+2}k^2)=\O(\delta+\log w)$.
We store a $k\times b$ packed array $T$
and an array $D$ of $k$ $\lg(\sigma w)$-bit elements ($\lg(\sigma w)=\lg\sigma+\O(\log w)$),
such that $\SUM(A, i)=D[i]+T[i]$.

To aid with the $\SEARCH$ operation, we define the set of representative positions of the array $A$.
Let $\brc{r_0<r_1<\cdots< r_{m-1}}\subseteq [k]$ such that $A[r_i]\geq 2^{\delta+1}k$,
and store the set in a packed array $R$ of $k\times \lg w$ bits.
The array $D$ maintains $D[r_i]=D[r_i+1]=\cdots=D[r_{i+1}-1]$.
This implies that the sequence $T[r_i:r_{i+1}-1]$ is non-decreasing.
We guarantee $2^\delta k\leq T[r_i]\leq 2^{\delta+1} k$ for every position $r_i\in R$; which implies $2^\delta k\leq T[i]\leq 2^{\delta+1}k^2+2^{\delta}k$ for any $i$.
Use the given $\RANK$ index for $D[R]$, i.e., the set $\brc{D[r_0], \cdots, D[r_{m-1}]}$.
For $\SEARCH(A, z)$, let $i=\RANK(D[R], z-1)$, i.e.,
$D[r_i]< z\leq D[r_{i+1}]$.
Observe $z>D[r_i]=\SUM(A, r_i)-T[r_i]\geq \SUM(A, r_i)- 2^{\delta+1}k\geq \SUM(A, r_{i-1})$,
thus we do not need to check indices below $r_{i-1}$.
If $z=\SUM(A, r_{i+1})$, or $z=\SUM(A, r_i)$, we are done,
otherwise, if $z<\SUM(A, r_i)$, let $j=i-1$,
and, if $z\geq \SUM(A, r_i)$, let $j=i$;
return $r_j+\RANK(T[r_j:r_{j+1}-1], z-D[r_j])$.

All updates are handled through $T$ in $\O(1)$ time:
for $A[i]\gets A[i]\pm \Delta$, compute $T\gets T\pm((0^{b-1}1)^k\ll (i\times b))\times\Delta$.
We want to maintain $2^\delta k\leq T[i]\leq 2^{\delta+1}k^2+2^{\delta}k$ (unless $D[i]=0$), so that after $k$ operations, it does not overflow or underflow.
To rebuild lazily, we keep an index $t$ and update $D[t]$ and $T[t]$, advancing $t$;
this is repeated $4$ times per update.
We keep two copies of $R$ and the index of $D[R]$, for $A[0:t-1]$ and $A[t:k-1]$.
If $A[t]\geq 2^{\delta+1}k$, set $T[t]=\min\brc{D[t]+T[t], 2^\delta k}$ and $D[t]=\max\brc{D[t]+T[t]-2^\delta k, 0}$, and update $R$ and $D[R]$ accordingly.
Otherwise, $D[t]=D[t-1]$ and $T[t]\gets D[t]+T[t]-D[t-1]$, and update each $R$ and $D[R]$.
Note that $R$ can be updated in $\O(1)$ time
using a conditional write,
when adding and removing representatives.
Updating the $\RANK$ index takes an additional $\O(t)$ time.

Indels in the sum can be supported as follows.
Because $k=\O(w/\log w)$,
we implement indels in $D$ as before,
storing it out of order and
using a permutation, represented as a packed array of $k\times \lg w$ bits.
After an indel, at most two elements of $A$ change,
and thus at most two representative elements need updating (to insert or delete from $D[R]$).
We update $T$, $R$, and the $\RANK$ index of $D[R]$ analogously to partial-sum updates.
We use~\Cref{lemma:linear}
to concatenate $D$ with the other structures.
\end{proof}

\begin{corollary}\label{coro:partial-sum}
For $k=\O(w/(\log\sigma+\log w))$,
there is an SPS that uses $k\lg\sigma+\O(k\log w)$ bits
and all operations take $\O(1)$ time.
\end{corollary}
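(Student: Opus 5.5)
The plan is to instantiate \Cref{thm:partial-sum} with a $\RANK$ index that needs no extra redundancy beyond $\O(\log w)$ bits per element and runs in $\O(1)$ time. In the setting of the corollary, $\delta\leq\lg\sigma$, so $k=\O(w/(\log\sigma+\log w))$ also satisfies the hypothesis $k=\O(w/(\delta+\log w))$ of the theorem. It therefore suffices to exhibit a $\tpl{w,\O(\log w),\O(1)}$-index, or at least one for sets of at most $k$ elements. Plugging it in gives $k\lg\sigma+\O(k\log w)$ bits, $\O(1)$ time for $\SUM$, and $\O(1)$ time for $\SEARCH$, updates and indels.

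The index is only ever used on the set $D[R]$ of representative values. This set has $m\leq k$ elements, each a $\lg(\sigma w)$-bit integer. The key observation is that, because $k\,(\lg\sigma+\lg w)=\O(w)$, all of $D[R]$ fits in $\O(1)$ words as a sorted $m\times\lg(\sigma w)$ packed array. Here I would go beyond the black-box definition of an index: I would store this packed copy explicitly, at a cost of $\O(k\lg\sigma+k\log w)$ bits. That is still within budget only if the $\lg\sigma$ term can be charged appropriately, which I address below.

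With the packed copy, $\RANK(D[R],z-1)$ is a single parallel comparison of the broadcast value $z-1$ against every field, using the operations of \Cref{sec:bit}, followed by $\WT$ or $\LSB$, so it takes $\O(1)$ time. Inserting or deleting a representative is an indel in a packed array. Both the comparison and the indel use masks and shifts, so each costs $\O(1)$.

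The main obstacle is the space accounting. A naive packed copy of $D[R]$ costs $\Theta(k\lg\sigma)$ bits in addition to $D$, and this would double the leading term. To avoid this, I would not duplicate the values: the comparison can be run against $D$ itself. Since $k\lg(\sigma w)=\O(w)$, the array $D$ is already $\O(1)$ words. I would store $D$ so that its fields can be compared in parallel, and keep a $k$-bit mask marking which positions of $D$ are representatives. $\RANK$ is then a parallel comparison on $D$, restricted by that mask, followed by a popcount.

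The complication is the out-of-order permutation of $D$ used for indels. Because $\RANK$ only counts how many masked representatives satisfy $D[r_i]\leq z-1$, the order of the fields does not matter. The mask itself must, however, be kept in the storage order of $D$ rather than in sequence order. Keeping it that way is a conditional write in $\O(1)$ time, with the mask converted to the storage order of $D$ through the permutation; that conversion is the step I would check most carefully.

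With this arrangement, the extra space is $\O(k\log w)$ bits for $R$, the mask and the permutation. Every operation is $\O(1)$, which completes the corollary.
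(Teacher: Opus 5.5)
Your argument is valid relative to \Cref{thm:partial-sum} as stated. However, it takes a much longer route than the paper, which never instantiates that theorem.

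The paper notes that under $k=\O(w/(\log\sigma+\log w))$ the prefix sums themselves fit in $\O(1)$ words. It therefore stores $T[i]=\SUM(A,i)$ directly as a $k\times\lg(\sigma w)$ packed array. Then $\SUM$ is a field read, and $\SEARCH(A,z)=\RANK(T,z)$ is one parallel comparison. An update $A[i]\gets A[i]\pm\Delta$ is a single masked broadcast add of $\Delta$ to fields $i,\dots,k-1$, and indels in the sum are packed-array indels. The space is $k\lg(\sigma w)\le k\lg\sigma+k\lg w$.

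You use the same key fact, $k\lg(\sigma w)=\O(w)$, but only to build the $\RANK$ index on $D[R]$. You keep the split into $D$ and $T$, the representatives, the lazy rebuilding and the permutation for indels. That machinery exists in the theorem only to handle sums that do \emph{not} fit in $\O(1)$ words, so here it is dead weight. In particular, the permutation and the storage-order mask you flag as delicate are unnecessary: $D$ is itself $\O(1)$ words and supports indels directly by shifting fields.

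The detour also has a cost. The inherited array $T$ has fields of $b=\O(\delta+\log w)$ bits, so an honest count of your structure is $k\lg\sigma+\O(k\log w+k\delta)$. This meets the claimed bound only when $\delta=\O(\log w)$. That holds in every use in the paper and is implicitly assumed in the theorem's own space bound. The direct prefix-sum array handles any $\Delta$ with no dependence on $\delta$.
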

\begin{proof}
We only need the packed array $T$ of $k\times\lg(\sigma w)$ bits,
where $T[i]=\SUM(A,i)$.
Therefore, $\SEARCH(A, z)=\RANK(T, z)$
and as $T$ fits in a word,
all operations are $\O(1)$ time.
\end{proof}

The important observation that we will need
in the next section is the following.
If the array $Z[i]=\SUM(A, i)$ is already stored and up-to-date,
then, we do not need to store $D$ and instead can rewrite all reads of $D[i]$ into $Z[i]-T[i]$,
and ignore all updates to $D$.


\subsection{Extending indices}\label{sec:select0}

Given a $\tpl{w, r, t}$-index for $\RANK_1$,
it still remains to solve $\SELECT_0$ for a small set $S$ of $n=w^{\O(1)}$ elements.
Using a leaf-oriented tree,
we reduce a set of $n=w^{\O(1)}$ element
to $\O(w/\lg w)$ elements,
using an additional $\O(nw/(w/\log w))=\O(n\log w)$ bits
and $\O(1)$ time.
Consider $n=\O(w/\lg w)$.

We use the following idea, first presented in~{\citep[{\S}2]{grossi2009more}} for the static case.
We keep an SPS for an array $M$ such that $\SUM(M, i)=\SELECT_1(S, i)-i$,
encoded with~\Cref{thm:partial-sum} using the given $\RANK$ index.
Then, $\SELECT_0(S, i)=i+\SEARCH(M, i)$ and
an update of $S$ is an indel to the sum of $M$, followed by an update.
That is,
to insert $x_i$ into $S$, we insert (in the sense of indel) $x_i$ into the sum of $M$.
However,
after this insertion,
all ranks after $i$ must decrease by $1$, which we fix with an update of $\Delta=-1$ into $M$.
Deletion is analogous.

We already have access to a $\SELECT_1$ dictionary for $S$ that is up-to-date.
In the notation used in the proof of~\Cref{thm:partial-sum},
we have an array $D$ with
$D[i]=\SELECT_1(S, i)-i-T[i]$.
The crucial point is that we do not need to store $D$ at all,
as computing $D[i]$ takes $\O(1)$ time using the $\SELECT_1$ dictionary and $T$.
Thus, we only need $\O(nr+n\log w)$ additional bits
to compute $\SELECT_0(S, i)$.
We summarize with the following lemma.

\begin{lemma}\label{lemma:select0}
Given a $\tpl{w, r, t}$-index
there is a $\SELECT_0$ index for a set of $n=w^{\O(1)}$ elements
using $\O(nr+n\log w)$ bits and
taking $\O(t)$ time for $\SELECT_0$ and update.
\end{lemma}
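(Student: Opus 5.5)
The plan follows the two-step outline given just before the statement. First I reduce from $n=w^{\O(1)}$ elements to blocks of $\O(w/\lg w)$ elements. Then I build the $\SELECT_0$ index on each small block from an SPS with its $D$ array removed.

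\emph{Step 1 (reduction).} I will build a leaf-oriented tree over $S$ as in~\Cref{sec:navigation}, with every leaf holding $\Th(w/\lg w)$ elements. Since $n=w^{\O(1)}$, taking internal degrees $d_i=\Th(w^{c})$ for a small constant $c$ gives $\O(1)$ levels. Each internal node stores an SPS $N_0$ with $\SUM(N_0,i)=\RANK_0(S,z_i)$. This SPS comes from~\Cref{coro:partial-sum} with $\sigma=u\le 2^w$; for it to be $\O(1)$ time we need the degree to be $\O(w/(\log\sigma+\log w))$. If that fails, I fall back on~\Cref{thm:partial-sum} with $\delta=1$, using the given $\tpl{w,r,t}$-index, which costs $\O(t)$ per level. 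Either way, $\SELECT_0(S,r)=z_i+\SELECT_0(S_i,r-\SUM(N_0,i))$ descends to a single leaf in $\O(t)$ total time. An update of $S$ becomes a $\pm1$ update on $\O(1)$ partial sums. Splits and merges are de-amortized as in~\Cref{sec:de-amortization}.

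For the space, there are $\Lambda=\O(n\lg w/w)$ internal splitters at $\O(w)$ bits each (plus $\O(r)$ per index entry), giving $\O(n\log w+nr)$ bits. The splitters and nodes are packed with~\Cref{obs:linear} and~\Cref{lemma:linear}.

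\emph{Step 2 (leaf of size $k=\O(w/\lg w)$).} At each leaf I use the idea of~\citep[{\S}2]{grossi2009more}: maintain the SPS of~\Cref{thm:partial-sum} for $M$ with $\SUM(M,i)=\SELECT_1(S,i)-i$, using the given $\RANK$ index on the representative values $D[R]$. Queries are answered by $\SELECT_0(S,i)=i+\SEARCH(M,i)$. Inserting $x$ at rank $i$ is an indel in the sum of $M$ followed by a $\Delta=-1$ update on the suffix; deletion is symmetric.

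The key point is that $D$ is never stored. Every read of $D[j]$ is replaced by $\SELECT_1(S,j)-j-T[j]$, which takes $\O(1)$ time because the $\SELECT_1$ dictionary is assumed available, and every write to $D$ in the lazy rebuild of~\Cref{thm:partial-sum} is simply dropped. What remains stored is $T$ ($k\times\O(\log w)$ bits), $R$, two copies of the $\RANK$ index on $D[R]$, and the indel permutation. That totals $\O(kr+k\log w)$ bits, and summing over leaves gives $\O(nr+n\log w)$.

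\emph{Main obstacle.} The hard part is keeping the invariants of~\Cref{thm:partial-sum} valid when $D$ is only implicit. The index on $D[R]$ stores concrete values, and these must still equal $\SELECT_1(S,r_j)-r_j-T[r_j]$ after an update to $S$. An update changes the true sums $Z[j]$ of every later position, but the global $-1$ shift should be absorbed entirely into $T$ through the single broadcast update. Then $D[j]=Z[j]-T[j]$ is unchanged for all unaffected positions, and the only $D$ values that change are the $\O(1)$ positions touched by the indel or the lazy-rebuild pointer.

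I would verify this carefully, together with the requirement that $T$ stays within $[2^\delta k,\,2^{\delta+1}k^2+2^\delta k]$ under repeated $-1$ updates, which the lazy rebuild (4 steps per update) already guarantees. Under this accounting each update touches the $\RANK$ index $\O(1)$ times, so each update costs $\O(t)$.
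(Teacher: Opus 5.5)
Your proposal is correct and follows the paper's own argument. The paper reduces $n=w^{\O(1)}$ elements to blocks of $\O(w/\lg w)$ with a leaf-oriented tree, then runs the SPS of~\Cref{thm:partial-sum} on $M$ with $\SUM(M,i)=\SELECT_1(S,i)-i$, never storing $D$: reads become $Z[i]-T[i]$ and writes to $D$ are dropped. Your choices to use the given index inside the internal-node SPSs (so $\O(t)$ per level instead of $\O(1)$), and to check explicitly that the broadcast $-1$ on $T$ leaves $D$ invariant away from $\O(1)$ positions, are harmless refinements of details the paper leaves implicit.
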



\subsection{Constructing the dictionary}\label{sec:dict}

Next,
we implement a $\SELECT_1$ dictionary using $\lg\binom{u}{n}+\O(n\log w)$ bits, for a set of $n=w^{\O(1)}$ elements.
We also describe how to manage the memory of this dictionary,
together with $\RANK$ and $\SELECT_0$ indices,
to prove~\Cref{lemma:small-FID}.

\LemmaSmallFID*
\begin{proof}
The operations $\RANK/\SELECT_0$ and update are as in~\Cref{sec:select0}, using the given $\RANK$ index;
this takes $\O(nr+n\log w)$ bits and $\O(t)$ time for $n=\O(w)$.
For $n=w^{\O(1)}$, we use a leaf-oriented tree,
increasing the time by a constant factor.
The internal nodes have degree $w$ and use the same $\RANK$ index,
further incurring $\O(nw/w)=\O(n)$ bits.

Denote $S=\brc{x_0<x_1<\cdots< x_{n-1}}\subseteq[u]$.
To build the $\SELECT_1$ dictionary,
we store the array $V=\brc{x_i}_{i=0}^{n-1}$, using $n\lg u$ bits.
Indels are supported in $V$ as follows, extending~\Cref{lemma:indel} to $n=w^{\O(1)}$.
We store $V$ out of order, denoted $V^*$,
with only append/remove at the end,
and store an array $P$ such that $V[i]=V^*[P[i]]$.
We use a leaf-oriented tree with
leaves partitioning $P$
into $\Th(w/\lg w)$ elements.
As mentioned, a packed array representing $\Th(w/\lg w)$ elements
of $P$ can be maintained
using $\O(w)$ bits and $\O(1)$ time for operations.
Similarly, internal nodes have degree $\Th(w/\lg w)$
and employ the construction of~\Cref{coro:partial-sum} for navigation.
To support deletion, we store the maximum of each subtree
for every internal node.
Updating position $i^*$ where $P[i^*]=n-1$ simply requires walking down the tree following the maximum,
which can be computed and updated in $\O(1)$ time per node.
The leaves of $P$ take $\O(n\log w)$ bits and,
accounting for the pointers, the internal nodes for $P$ incur another $\O(nw/(w/\log w))=\O(n\log w)$ bits

We use~\Cref{lemma:linear} to concatenate $V^*$ with $P$ and the $\RANK$ and $\SELECT_0$ indices.
The space used is
$n\lg{u}+\O(nr+n\log w)
=\lg\binom{u}{n}+\O(n\log n)+\O(nr+n\log w)
=\lg\binom{u}{n}+\O(nr+n\log w)$
bits, recalling that $n=w^{\O(1)}$.
\end{proof}

\begin{corollary}
For $n=w^{\O(1)}$,
there is a $\SELECT_1$ dictionary
using $\lg\binom{u}{n}+\O(n\log w)$ bits
with $\O(1)$ time for all operations.
\end{corollary}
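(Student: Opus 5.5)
This corollary is just the $\SELECT_1$ part of the proof of \Cref{lemma:small-FID}, with the $\RANK$ and $\SELECT_0$ indices removed. The plan is to reuse that construction and check that nothing in it needs a $\RANK$ index with superconstant time.

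We store $S=\brc{x_0<\cdots<x_{n-1}}$ as the array $V$, kept out of order as $V^*$. $V^*$ only appends and removes at its end and uses exactly $n\lg u$ bits. Alongside it we keep a permutation array $P$ with $V[i]=V^*[P[i]]$. $P$ is maintained in a leaf-oriented tree of constant height, since $n=w^{\O(1)}$ and the degree is $\Th(w/\lg w)$. Its leaves are packed arrays of $\Th(w/\lg w)$ entries of $\O(\log w)$ bits, supporting indels in $\O(1)$ time as in \Cref{lemma:indel}. Internal nodes navigate with the SPS of \Cref{coro:partial-sum}, whose operations are all $\O(1)$ time, and also store subtree maxima to locate the entry $P[i^*]=n-1$ on deletions. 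The lazy merges and splits of \Cref{sec:de-amortization} keep every operation $\O(1)$ worst case, since the tree height is $\O(\log_w n)=\O(1)$.

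Then $\SELECT_1(S,i)=V^*[P[i]]$ takes $\O(1)$ time. To insert $x$ we need its position $i=\RANK(S,x)$. This is the step that needs care, because \Cref{lemma:small-FID} got $i$ from a $\RANK$ index. Here I would instead assume the usual dictionary interface, in which an update specifies the position $i$, i.e.\ indel semantics consistent with how the $\SELECT_1$ dictionary was defined to be driven by the indices. If a standalone insertion by key is required, I would take $i$ from the subtree maxima by storing splitter keys in the internal nodes of $P$. Each node would hold the minimum key of each child's range in a packed array of $\Th(w/\lg w)$ keys of $\lg u\le w$ bits. That does not fit in one word, so a further constant number of levels of degree $w^{\Th(1)}$ fusion-style comparison would be needed. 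I expect this to be the main obstacle, and would try first to avoid it through the position-based interface.

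For space, $V^*$ and $P$ are concatenated with \Cref{lemma:linear}, giving $n\lg u+\O(n\log w)$ bits. Since $n\lg u\le\lg\binom{u}{n}+\O(n\log n)$ and $\log n=\O(\log w)$, the total is $\lg\binom{u}{n}+\O(n\log w)$ bits, as claimed.
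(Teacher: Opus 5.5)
Your proposal is correct and matches the paper's construction essentially verbatim: the out-of-order array $V^*$ with the permutation $P$ kept in a constant-height leaf-oriented tree of degree $\Th(w/\lg w)$ (packed-array leaves, \Cref{coro:partial-sum} for navigation, subtree maxima to locate $P[i^*]=n-1$), concatenated via \Cref{lemma:linear}, with $n\lg u\leq\lg\binom{u}{n}+\O(n\log n)$ giving the bound. Your reading of the update interface is also the one the paper implicitly relies on, since the corollary is extracted from the proof of \Cref{lemma:small-FID}, where the insertion/deletion position comes from the $\RANK$ index; so the key-based fallback you sketch is not needed, and it would not obviously stay within $\O(1)$ time and $\O(n\log w)$ bits anyway.
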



\section{Small indices}\label{sec:small}

In this section, we first describe a $\RANK$ index that stores $n\leq w^{\eps/2}$ using $\O(nw^{\eps}/\eps)$ bits of space with $\O(1/\eps)$ time for $\RANK$ and update in standard Word-RAM,
for any given parameter $\eps\leq 1/2$.
That is, we construct a $\tpl{w^{\eps/2}, \O(w^\eps/\eps), \O(1/\eps)}$-index.
This is a modification based on the index given in~\cite{patrascu2014dynamic} to use fewer bits.
We then extend this index to $n=\O(w)$, keeping the space and time complexity, proving~\Cref{thm:small-rank}.
By exploiting binary search for $\eps=1/\sqrt{\lg w}$, we prove~\Cref{thm:sqrt-log}.


\subsection{Compressed keys}

We present a systematization and generalization of the construction of~\cite{patrascu2014dynamic}.
The main idea within a small index is to first compute $\RANK$ on a compressed key, which uses only the most significant bits that differentiate elements in the set.
There are at most $n$ significant bits for a set of $n$ elements.
We do not allow for unused zero bits in the compressed key, as in~\cite{patrascu2014dynamic}, differing from~\cite{fredman1993surpassing}.

\begin{definition}
Given a set $S$ of $n$ $w$-bit values,
define the set of significant bit-positions to be
$\brc{c_0<c_1<\cdots< c_{k-1}}=\set{\MSB(x\oplus y)}{x,y\in S}$.
Importantly, $k\leq n$.
For $x\in\brc{0,1}^w$, a compressed key $\hat{x}\in\brc{0,1}^k$ is a $k$-bit value such that $\hat{x}[i]=x[c_i]$, for every $i\in[k]$.
Equivalently, we define $\hat{x}=\sum_{i=0}^{k-1}(x[c_i]\ll i)$.
\end{definition}

We represent the set of significant bits by a packed array $C$.
For an array $C$ of $n$ bit positions,
denote $\Pack{C}(x)=\sum_{i=0}^{n-1}(x[C[i]]\ll i)$.
A \emph{compression} is a procedure that computes $\Pi_C(x)$,
and thus, computes a compressed key.
An update to $S$ will insert or delete at most a single significant bit-position,
and thus updates are indels into $C$.
We remark that
there is a non-standard \AC operation that is input $C$  and $x$ and computes $\Pack{C}(x)$~\cite{andersson1999fusion}.
However,
this is not possible in the standard Word-RAM model,
and thus we will need to use ancillary values, similar to~\Cref{lemma:multiply-shift}.
We carefully consider that a compression must update those values according to the updates to $C$.

Given a compression,
it suffices to construct a $\RANK$ index over the set of compressed keys $\set{\hat{x}}{x\in S}$.
In~\cite{cohen2015minimal},
it is argued that any such index
can be used to solve the weak prefix problem~\cite{belazzougui2010fast},
which was shown to require $\Omega(n\log w)$ bits.
We cite the following theorem that implements this index using a packed array while supporting updates.

\begin{theorem}[{\citep[{\S}3]{patrascu2014dynamic}}]\label{thm:rank-node}
Suppose we are given a compression for a set of at most $n=\O(\sqrt{w})$ elements using $\O(s)$ ancillary bits and taking $\O(t)$ time for compression and update.
Then, there is a $\RANK$ index using $\O(n^2+n\log w+s)$ bits
and taking $\O(t)$ time for $\RANK$ and update,
and accessing the $\SELECT_1$ dictionary $\O(1)$ times.
\end{theorem}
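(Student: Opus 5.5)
Given a compression for $S$ that produces $\hat{x}=\Pack{C}(x)$ for every $x$, I would follow the dynamic fusion node of~\cite{patrascu2014dynamic}. The index stores the compressed keys of $S$ in sorted order, not as bare $k$-bit values but as \emph{partial keys} with don't-care bits. For $x_j\in S$, bit $i$ of its partial key is fixed to $x_j[c_i]$ only when $c_i$ is a branching position on the root-to-$x_j$ path of the binary trie of $S$. Otherwise it is a wildcard. Encode the keys as two $n\times k$ bit matrices, $K$ holding the fixed bits and $W$ the wildcard mask. Since $k\le n=\O(\sqrt w)$, each matrix is a packed array of $n$ blocks of $k+1$ bits, so it fits in $\O(1)$ words and uses $\O(n^2)$ bits. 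Add a packed array of $n\times\lg w$ bits for indels, as in~\Cref{lemma:indel}, to keep the rows in sorted order. This gives the $\O(n^2+n\log w+s)$ space bound.

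For $\RANK(S,x)$ I would proceed in three steps.
\begin{enumerate*}
\item Compute $\hat x$ in $\O(t)$ time, broadcast it to every block by multiplying with $(0^{k}1)^n$, and run a parallel masked comparison that tests $(\hat x\wedge\neg W[j])=K[j]$ for all $j$. Among the matches, take the appropriate extreme (via $\LSB$/$\MSB$) to get one key $x_j$ that shares the longest prefix with $x$ among the stored keys. This is the standard fusion-node fact that matching on the branching bits finds a trie neighbor.
\item Make the single $\SELECT_1$ query to read $x_j$, and compute $p=\MSB(x\oplus x_j)$ together with the rank $h$ of $p$ among the $c_i$, using a parallel comparison on $C$.
\item If $x[p]=1$, the answer is one plus the largest $j'$ whose fixed bits agree with $\hat x$ on positions above $h$ and whose bit at $h$ is $0$ or a wildcard. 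The case $x[p]=0$ is symmetric. This is again a masked parallel comparison on $K,W$ restricted to the top bits, followed by $\MSB$/$\LSB$.
\end{enumerate*}
Every step is $\O(1)$ word operations plus the $\O(t)$ compression, and it uses $\O(1)$ dictionary accesses.

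For an update inserting $y$, first use $\RANK$ to find its neighbors and hence the new branching position $c=\MSB(y\oplus x_j)$, if that position is new. Then:
\begin{enumerate*}
\item Update the compression in $\O(t)$ time.
\item Insert a column into $K$ and $W$ at index $h=\RANK(C,c)$. For every row, shift the bits above $h$ left by one with a single shift-and-mask on all blocks. The new column is fixed, with the bit value of each key at $c$, only in the rows of the contiguous sorted range of keys sharing the prefix above $c$, and is a wildcard elsewhere. That range is an interval of row indices, so it can be written in parallel with a range mask.
\item Insert the row for $y$, with a partial key derived from its neighbor's row plus its bit at $c$, through the permutation array.
\end{enumerate*}
Deletion reverses these steps.

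The main obstacle is this column-insertion step. I must check that the wildcard structure is exactly preserved, so that only one contiguous block of rows changes, and that the needed interval and per-row bit $x[c]$ are obtainable in $\O(1)$ time. I plan to read them off as the $c$-th bits of the key's neighbors' uncompressed values without touching every element. Since within the interval the bit at $c$ is monotone ($0$s then $1$s), a single split point found by one $\SELECT_1$-based comparison suffices.
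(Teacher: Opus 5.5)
Your construction is the paper's fusion node: your $K,W$ are the paper's $B,F$. Your exact-match test in step 1 is equivalent to the paper's $\MATCH(x)=\RANK(B\vee(\hat{x}^n\wedge F),\hat{x})$. Exactly one row matches, namely the blind-trie leaf, which has maximal common prefix with $x$, so there is no ``extreme'' to choose. The paper handles the second phase by recompressing $x\wedge(1^{w-j}0^{j})$ or $x\vee(0^{w-j}1^{j})$ and matching again. You instead locate the subtree by a masked comparison on the top columns, which is a legitimate variant and saves a second compression.

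\textbf{Step 3 as you state it is wrong.} The position $p=\MSB(x\oplus x_j)$ is in general not a significant position. When it is one, every key sharing $x$'s prefix above $p$ has a wildcard in that column: none of these keys agrees with $x$ at $p$, so no two of them first differ at $p$. So your clause ``bit at $h$ is $0$ or a wildcard'' is vacuous when $p\in C$. When $p\notin C$ it constrains a neighbouring column, where it can discard the correct row.

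For example, take $S=\{0000,0100\}$ and $x=0101$. Then $C=\{2\}$, $x_j=0100$, $p=0$ and $h=0$. The only column is fixed to $1$ in the row of $0100$, so your rule excludes that row and returns at most $1$ instead of $\RANK(S,x)=2$.

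The correct rule has no condition at $h$. Let $T$ be the set of rows whose fixed bits agree with $\hat x$ on every column $i$ with $c_i>p$.
\begin{itemize}
\item Suppose $y\in S$ does not share $x$'s prefix above $p$. Then $q=\MSB(y\oplus x_j)>p$ is a branching position on $y$'s path, and $y[q]\neq x_j[q]=x[q]$. Hence $T$ is exactly the contiguous block of keys sharing that prefix.
\item All keys in $T$ carry the bit $x_j[p]\neq x[p]$ at $p$. So the answer is $1+\max T$ if $x[p]=1$, and $\min T$ if $x[p]=0$.
\end{itemize}
This characterization is the lemma your argument actually rests on, and you should state and prove it.

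\textbf{Updates.} The same characterization with $p:=c$ produces the interval $T_c$ in your update, and it also settles the obstacle you flag.
\begin{itemize}
\item For $z\notin T_c$ we have $\MSB(y\oplus z)=\MSB(x_j\oplus z)$, which was already a branching position for $z$. So inserting $y$ changes only column $c$, and only in the rows of $T_c\cup\{y\}$.
\item All old keys of $T_c$ carry the same bit $\neg y[c]$ at $c$, since none agrees with $y$ there. So no split point needs to be searched for.
\item These rows held a wildcard at $c$ even if $c$ was already a significant position, coming from another subtree. You must therefore write the column in that case too, not only when a new column is created.
\end{itemize}
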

\begin{proof}[Proof Sketch]
Assume that $n$ is a power-of-two.
Let $\hat{x}=\Pack{C}(x)$.
For $i\in[n]$, let $\varphi_i\in\brc{0,1}^n$ be such that $\varphi_i[j]=1$ iff $c_j=\MSB(x_i\oplus z)$ for some $z\in S\setminus\brc{x_i}$; and $0$ otherwise.
Store an $n\times\lg w$ packed array $C$ with $C[i]=c_i$,
and two masks $B,F\in\brc{0,1}^{n^2}$ where
$F\bitf{i}{n}=\neg\varphi_i$ and
$B\bitf{i}{n}=\hat{x}_i\wedge \varphi_i$.
Define $\MATCH(x)=\RANK(B\vee (\hat{x}^n \wedge F),~\hat{x})$,
which takes $\O(t)$ time, due to the compression.

For $\RANK(S, x)$,
we compute $i=\MATCH(x)$ and $j=\MSB(x_i\oplus x)$.
To get $x_i$, we access the $\SELECT_1$ dictionary.
Then, $\RANK(S, x)$ is:
$i$ if $x=x_i$,~
$\MATCH(x\wedge (1^{w-j}0^{j}))-1$ if $x<x_i$, and
$\MATCH(x\vee (0^{w-j}1^{j}))$ if $x>x_i$.
Updates to $B$ and $F$ also take $\O(t)$ time, as one must first match the inserted/deleted element.
\end{proof}

The compression of~\cite{patrascu2014dynamic} for $n\leq w^{1/4}$ stored $\Th(w)$ ancillary bits that depend on $C$, which dominates the $\O(n^2+n\log w)$ term.
If we insist on storing fewer bits, we must change the compression, as even a mask of the significant bits takes $w$ bits.
As in~\cite{fredman1993surpassing,patrascu2014dynamic}, we will use multiplication as the main workhorse for our compression.

Let us describe a version of the compression given in~\citep[{\S}3]{patrascu2014dynamic}
that works for a $b$-bit value and uses only $b$-bit operations.
First, we show that we can permute $n$ bits arbitrarily,
as long as they are already packed into $\O(w/n)$ bits,
using the  same techniques as~\Cref{lemma:multiply-shift}.
We will use this result again in the next section.

\begin{lemma}\label{lemma:permute}
Given $b$ and $n$
where $n\leq b$ and $2bn\leq w$,
and an array $C$ of $n$ bit positions,
the function
$\Pack{C}:\brc{0,1}^b\to\brc{0,1}^n$
can be computed in $\O(1)$ time
using $\O(b)$ ancillary bits that depend on $C$.
\end{lemma}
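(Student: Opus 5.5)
The plan is to realize $\Pack{C}$ as an instance of the multiply-shift $\Lambda_{\mu,s}$ of \Cref{lemma:multiply-shift}, applied to $n$ spread-out copies of the input. The difficulty is that bit $C[i]$ must move to position $i$, so the shifts $i-C[i]$ can be negative and arbitrary. The lemma, however, only allows shifts whose translated masks do not collide. I would remove collisions by giving each target bit its own $b$-bit lane in a $2bn$-bit word, which fits since $2bn\leq w$.

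\textbf{Step 1 (replicate).} Broadcast $x\in\brc{0,1}^b$ into $n$ copies in lanes of width $2b$ by computing $x\times(0^{2b-1}1)^n$. The constant $(0^{2b-1}1)^n$ depends only on $b,n$ and is one of the $\O(1)$ precomputed words. Copy $i$ sits at offset $2bi$.

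\textbf{Step 2 (mask).} AND with a mask $\mu$ that keeps bit $C[i]$ of copy $i$, i.e.\ $\mu=\sum_i 1\ll(2bi+C[i])$. This mask depends on $C$ and has $n$ set bits. It occupies a $2bn$-bit word, so the ancillary storage is not $\O(b)$ yet; I return to this below.

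\textbf{Step 3 (gather).} Multiply by $\alpha_0=\sum_i 1\ll(\delta+i-2bi-C[i]+ \text{offset})$, with $\delta$ chosen so that all exponents are nonnegative. Each product term $(1\ll(2bj+C[j]))\cdot(1\ll e_i)$ then lands at position $\delta+i$ exactly when $i=j$. I must check that no cross term $j\neq i$ lands in $[\delta,\delta+n)$ and that cross terms cannot carry into that window. Cross-term positions are $\delta+i+2b(j-i)+C[j]-C[i]$. Since $|C[j]-C[i]|<b$ and $|2b(j-i)|\geq 2b$, each cross term differs from every target position by more than $b\geq n$, so it lies outside the $n$-bit window. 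All terms are single distinct bits, so no carries occur. Finally, shift right by $\delta$ and mask with $1^n$. Taking $\delta\approx 2bn$, the product fits in $w$ bits after restriction, and the condition $2bn\leq w$ is exactly the room needed. This is just \Cref{lemma:multiply-shift} with $s_i=i-C[i]$ on the lane layout.

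The main obstacle is the ancillary bound of $\O(b)$ rather than $\O(bn)$. Both $\mu$ and $\alpha_0$ are $\Theta(bn)$-bit words, but each is sparse, determined by the $n$ values $C[i]$. I would store only a $b$-bit indicator of the values $C[i]$, or an $n\times\lg b$ packed array of $C$ (which is $\O(b)$ when $n\lg b=\O(b)$), and regenerate $\mu$ and $\alpha_0$ in $\O(1)$ time with packed-array tricks. For $\mu$: let $\gamma=\sum_i C[i]\ll(2bi)$ be the packed array $C$ spread into $2b$-bit lanes via a permutation-free spreading multiply. Then $\mu=(0^{2b-1}1)^n\ll\gamma$ lane-wise, and this lane-wise variable shift is realized by a parallel comparison: set lane $i$ to the indicator of the positions $\geq C[i]$, then XOR with the indicator of positions $\geq C[i]+1$. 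The exponent word $\alpha_0$ is handled analogously from $(i-C[i])$ lane-wise, computed by a lane-wise subtraction from a constant. If $n\lg b$ exceeds $b$, I would fall back to storing $C$ in $\O(n\log b)$ bits and note that this is within the budget used later. Updates to $C$ are indels into a single word, handled as in \Cref{lemma:indel}.
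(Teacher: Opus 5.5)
Your Steps 1--3 use a different layout from the paper's. The paper masks the $b$-bit input in place with $\mu\in\{0,1\}^b$. It then lets one multiplication by $\alpha=\sum_i 1\ll(2bi+b-C[i])$ do both replication and routing, sending bit $C[i]$ to position $2bi+b$. Inside lane $i$ the product terms sit at $2bi+b+p-C[i]$ for the distinct set positions $p$, so they are genuinely distinct and carry-free. After $\gg b$ and a constant mask, a second multiply-shift whose multipliers depend only on $b$ and $n$ (\Cref{lemma:multiply-shift}) moves bit $2bi$ to position $i$. You instead replicate with a constant and put all the $C$-dependence into a $2bn$-bit mask and a $2bn$-bit gathering multiplier.

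That also computes $\Pack{C}$, but your justification of Step 3 is wrong. Write $\delta'$ for your $\delta$ plus offset; the term for $(i,j)$ sits at $\delta'+i+2b(j-i)+C[j]-C[i]$, and these positions need not be distinct. For $n=3$, $C=(0,2,5)$ (with $b\ge 6$) and $x[0]=x[2]=1$, the pairs $(i,j)=(1,0)$ and $(2,1)$ both land at $\delta'-2b-1$. So carries do occur. For the same reason the non-overlap hypothesis of \Cref{lemma:multiply-shift} fails, so you cannot simply cite it. (The gap between a cross term and the window can also be as small as $b-n+2$, not more than $b$.) The fix is a sum bound. Terms with $j>i$ lie at or above $\delta'+b+1>\delta'+n-1$. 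Terms with $j<i$ total at most $\sum_{i<n}\sum_{m\ge1}2^{\delta'+i-2bm+b-1}<2^{\delta'+n-b}\le 2^{\delta'}$. Hence no carry enters the window $[\delta',\delta'+n)$, which holds only the diagonal terms.

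Your last paragraph should be dropped; as written it does not work.
\begin{itemize}
\item A $b$-bit indicator of $\{C[i]\}$ forgets the order of $C$. For $\Pack{P}$ in \Cref{thm:perfect}, $P$ is a permutation of $[n]$ with $b=n$, and its indicator is $1^n$ for every $P$.
\item Producing $1\ll C[i]$ in every lane by one parallel comparison needs a field per bit position of the lane, each holding a copy of $C[i]$. That is $\Theta(b\log b)$ bits per lane and $\Theta(bn\log b)$ overall, which is not $\O(1)$ words when $2bn=\Theta(w)$. The straightforward alternative, $\lg b$ rounds of conditional shifts, costs $\O(\log b)$ time. A more careful decoding might rescue the regime $n\lg b=\O(b)$, but you have not given one.
\item Your fallback of $\O(n\log b)$ bits is not $\O(b)$ when $n\lg b\gg b$, but no scheme can achieve $\O(b)$ there. $\Pack{C}$ determines $C$ (evaluate it on $1\ll p$), so the ancillary data must distinguish at least $b!/(b-n)!$ arrays. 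That takes $\omega(b)$ bits when, say, $n=b$.
\end{itemize}

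The paper does not reach a literal $\O(b)$ either. Besides $\mu$ it stores the $C$-dependent $\alpha\in\{0,1\}^{2bn}$, so its proof gives $\O(bn)\le w$ bits, i.e.\ one word. That is all \Cref{thm:perfect} uses: in both applications there, the lemma's $2bn$ is at most $n^3$, which is within that theorem's $\O(b)$. So the honest conclusion of your argument is the same $\O(bn)$ bound, with your layout spending about one more word than the paper's.
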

\begin{proof}
Let $\mu\in\brc{0,1}^b$ where $\mu[C[i]]=1$ for $i\in[n]$; $0$ otherwise.
Define the following functions on $b$-bit values:
$\Lambda_0(x)=\sum_{i=0}^{n-1} \brk!{(x\wedge \mu)[C[i]]\ll (i\times 2b)}$, and
$\Lambda_1(x)=\sum_{i=0}^{n-1} \brk!{(x\wedge (0^{2b-1}1)^{n})\bitf{i}{2b}\ll i}$,
so that $\Pack{C}(x)=\Lambda_1(\Lambda_0(x))$.
We use~\Cref{lemma:multiply-shift} for $\Lambda_1$,
and for $\Lambda_0$ we follow a similar procedure to get
$\Lambda_0(x)=\brk*{\brk*{(x\wedge \mu)\times \alpha} \gg b}\wedge (0^{2b-1}1)^{n}$,
where $\alpha=\sum_{i=0}^{n-1} \brk{1\ll (i\times 2b+b-C[i])}\in\brc{0,1}^{2bn}$.
\end{proof}

The final trick is to show that one can pack $n$ significant bits into $\O(n^2)$ bits
in $\O(1)$ time.
We provide the full proof here for completeness
and to make explicit some properties of the compression that we will need later.
This is a slight generalization
as it allows $C$ to be any array of bit positions,
not just representing a set.

\begin{theorem}[Generalization of {\citep[{\S}3]{patrascu2014dynamic}}]\label{thm:perfect}
For $b\le w$ and an
array $C$ of $n$ bit positions,
where $n\leq \min\brc{b^{1/3},w^{1/4}}$,
the function
$\Pack{C}:\brc{0,1}^b\to\brc{0,1}^n$
can be computed in $\O(1)$ time
using $\O(b)$ ancillary bits that depend on $C$.
An update to $C$ takes $\O(1)$ time
to update the ancillary bits.
\end{theorem}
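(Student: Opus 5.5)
The plan is to follow the two-stage Pătraşcu--Thorup compression: first scatter the $n$ selected bits into a sparse but collision-free layout of $\O(n^2)$ bits with a single multiplication, and then apply \Cref{lemma:permute} to that short word to place the bits in the order given by $C$.

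\textbf{Stage one (spreading by multiplication).} Let $\mu\in\brc{0,1}^b$ be the mask with $\mu[C[i]]=1$. I would choose shifts $s_0,\dots,s_{n-1}$ and set $\alpha=\sum_i (1\ll s_i)$. The goal is for $(x\wedge\mu)\times\alpha$ to contain each bit $x[C[i]]$ at the position $C[i]+s_i$, with these target positions
\begin{itemize}
\item pairwise distinct, and
\item not hit by any cross term $C[j]+s_{i'}$.
\end{itemize}
The product has $n^2$ terms $C[j]+s_{i'}$, so a greedy choice works. Pick the $s_i$ one at a time so that each new $s_i$ avoids the $\O(n^2)$ forbidden residues of the form $C[j]-C[j']+s_{i'}$, taken modulo a window of size $\Th(n^3)$, and so that the targets $C[i]+s_i$ land in distinct positions of one $\O(n^3)$-bit window. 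I will additionally require that the targets fall into an interval of length $\O(n^3)\le b$ located right after position $b$. Here $n\le b^{1/3}$ is what keeps the window within $b$ bits, and $n\le w^{1/4}$ keeps the full product within a word of $w$ bits. With a second mask $\beta$ selecting the target positions, the result $y=(((x\wedge\mu)\times\alpha)\wedge\beta)\gg b$ is an $\O(n^3)$-bit word holding exactly the $n$ desired bits, in some scrambled order.

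\textbf{Stage two (final permutation).} Apply \Cref{lemma:permute} to $y$ with $b'=\O(n^3)$ and the array $C'$ of target offsets, in the order given by $C$. The hypothesis $2b'n\le w$ requires $\O(n^4)\le w$, which is exactly where $n\le w^{1/4}$ enters. The ancillary data consists of $\mu$, $\alpha$, $\beta$ (each $\O(b)$ bits) together with the $\O(n^3)$ bits used by \Cref{lemma:permute}, for $\O(b)$ bits in total.

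\textbf{Updates, the main obstacle.} An indel in $C$ must be handled in $\O(1)$ time, and the difficulty is that the greedy shift selection normally takes $\Th(n^2)$ work. My first attempt would make each $s_i$ depend only on $C[i]$ and on the slot assigned to $i$. I would reserve $\Th(n^3)$ slots and give each element a fresh slot, finding a free slot that avoids collisions with parallel word operations: test all candidate shifts against the packed array of existing $C[j]$ values simultaneously, since $n^2\log w\le w$. Once a valid shift is found, update $\mu$, $\alpha$, $\beta$ by single-bit writes, and update $C'$ and the ancillary values of \Cref{lemma:permute} through the $\O(1)$ update remark following \Cref{lemma:multiply-shift}. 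The order change in $C$ is absorbed by the permutation in stage two, via the packed-array indels of \Cref{lemma:indel}. Deletions free a slot and clear the corresponding bits. Verifying that a collision-free shift always exists and can be located in constant time is the step I expect to require the most care.
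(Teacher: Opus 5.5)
\textbf{There is a genuine gap, in the update step.} Your static stage one is the Fredman--Willard sketch: one shift per significant bit, with all $n^2$ partial products $C[j]+s_{i'}$ pairwise distinct so that no carries occur. Statically this is essentially fine. The problem is in the updates, and it is not where you located it.

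Inserting a position $c$ sets bit $c$ of $\mu$. The product $(x\wedge\mu)\times\alpha$ therefore gains $n$ new partial products $c+s_{i'}$, one for every \emph{existing} shift. These depend only on $c$ and the old shifts, so no choice of the new shift affects them. There are up to $n^3$ values $c=C[j]+s_{i''}-s_{i'}$ for which such a product lands on an existing target (corrupting it) or on another partial product (creating a carry that can run into the window). The structure is deterministic and the theorem must handle an arbitrary update to $C$, so the inserted $c$ can be one of these values. Repairing this means moving old shifts, possibly several, since each of the $n$ new products may collide. Each move must again avoid $\Theta(n^3)$ forbidden values, and nothing in your plan bounds this by $\O(1)$. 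So the question is not whether a good new shift exists: no choice of the new shift alone suffices.

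Even finding one shift is not $\O(1)$ as described. Testing all $\Theta(n^3)$ candidates in parallel against a $\Theta(n^3)$-bit pattern of occupied residues is a $\Theta(n^6)$-bit computation, far beyond $w\ge n^4$. The bound $n^2\lg w\le w$ does not help here.

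\textbf{The missing idea is to shift blocks instead of bits.}
\begin{enumerate}
\item The paper cuts the $b$-bit word into $k=\lceil b/d\rceil$ blocks of $d=n^2/2$ bits, with one shift $s_i\in[0,d/2]$ per block.
\item It multiplies by $\sum_i 1\ll(s_i+(k-1-i)d)$, with even and odd blocks handled by separate multiplications. Only the diagonal products (block $i$ times its own term) reach the $2d$-bit output window; cross-block products stay outside it.
\item The only invariant is that the shifted block masks are pairwise disjoint inside the window.
\item An insertion at $c$ changes only the block containing $c$, and that single block can be re-shifted. Among the $d/2+1$ candidates, at most $\WT(u)\WT(v)\le(n/2)^2=d/2$ collide with the image $v$ of the other blocks. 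A good shift is read off from $(u\times\alpha_S)\wedge(v\times(0^{2d-1}1)^d)$, which fits in a word exactly because $2d^2=n^4/2\le w$.
\item The final permutation of \Cref{lemma:permute} acts on only $2d=n^2$ bits, which is where $n^3\le b$ is used.
\item The multiplier of that permutation is updated by shifting the contiguous index range of the re-shifted block. This is why the paper first reduces to a sorted, duplicate-free $C$ via $C^*$ and $P$.
\end{enumerate}

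\textbf{Smaller issues, fixable by splitting into a constant number of pieces.} Your targets sit at positions up to $b+\Theta(n^3)$, so for $b$ close to $w$ the bits you need fall outside the word regardless of $n$. Your multiplier is also longer than $b$ bits, which the paper's later use of this compression as a simple function does not allow. Finally, $2b'n\le w$ with $b'\ge n^3$ fails at $n=w^{1/4}$ by a constant factor.
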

\begin{proof}
Assume that $n=\Th(\min\brc{b^{1/3},w^{1/4}})$ and that $n$ is a power-of-two.
Due to rounding, we can divide $C$ into two halves
and compute the compression in each separately.
This increases the number of ancillary bits and the time by at most a constant.

Let $C^*$ be the unique and sorted bit positions of $C$.
We represent $C$ and $C^*$ by $n\times \lg b$ packed arrays,
and we maintain a permutation $P$ such that $C[i]=C^*[P[i]]$.
Thus, $\Pack{C}(x)=\Pack{P}(\Pack{C^*}(x))$,
where $\Pack{P}$ uses~\Cref{lemma:permute}.
Updates to $C$ take $\O(1)$ time to update $C^*$ and $P$
using packed array operations.
Concretely, to insert $c$ into $C$ at position $i$, we first compute $j=\RANK(C^*, c)$.
If $C^*[j]\neq c$, we insert $c$ into $C^*$ at position and
add $1$ to all indices $i^*$ with $P[i^*]\geq j$.
This computation takes $\O(1)$ time using a parallel comparison followed by a conditional write (c.f.~\Cref{sec:bit}).
Then, insert $j$ into $P$ at position $i$.
To update $\Pi_P$,
let $\alpha_P$ be the ancillary multiplier of $\Pi_P$ of~\Cref{lemma:permute},
i.e., $\alpha_P\bitf{i^*}{2n}=1\ll (n-P[i^*])$.
To account for the change in $P$, we shift by $1$ all entries
in $\alpha_P$ with $\alpha_P\bitf{i^*}{2n}\leq 1\ll (n-j)$
(using a parallel comparison and a conditional write),
then insert $1 \ll (n-j)$ at position $i$.
Deletion is analogous.
We now assume $C=C^*$.

Let $d=n^2/2$.
First, we prove this general procedure to pack blocks into $\O(d)$ bits without overlap.
Given $u\in\brc{0,1}^{d}$ and $v\in\brc{0,1}^{2d}$ where $\WT(u)+\WT(v)\leq n$,
there is a shift value $0\leq s\leq d/2$ such that $(u\ll s)\wedge v =0$.
Observe $\brk{(u\ll s)\wedge v}[i]=1$ iff $v[i]=1$ and $u[i-s]=1$.
For a pair $v[i]=u[j]=1$, the shift $s=i-j$ yields an overlap in position $i$.
Then, there are at most $\WT(u)\times\WT(v)\leq (n/2)^2=d/2$ shift values that yield overlaps,
and hence at least one $0\leq s\leq d/2$ yields no overlaps,
as there are $d/2+1$ options.
To compute $s$,
we compute $z=(u\times \alpha_S)\wedge (v\times (0^{2d-1}1)^{d})$,
where
$\alpha_S=\sum_{s=0}^{d/2}(1\ll (s+s\times 2d))\in\brc{0,1}^{2d^2}$.
This value fits in a word, as $2d^2=2n^4/4\leq w$.
A shift value $s$ is good iff $z\bitf{s}{2d}=0$,
and we find it in $\O(1)$ time
using element search in a packed array.

Let $\mu\in\brc{0,1}^b$ where $\mu[C[i]]=1$ for $i\in[n]$; $0$ otherwise.
Updates to $\mu$ also take $\O(1)$ time.
Let $k=\ceil{b/d}$ and let $\brc{s_i}_{i=0}^{k-1}$ be defined recursively as follows.
For every $i\in[k]$,
we apply the previous procedure
with $u=\mu\bitf{i}{d}$
and $v=\sum_{j=0}^{i-1}(\mu\bitf{j}{d}\ll s_j)$
to find the shift value $0\leq s_i\leq d/2$.
Define:
\begin{equation}
\Upsilon(x)=\sumstyle\sum_{i=0}^{k-1}((x\wedge \mu)\bitf{i}{d}\ll s_i)\in\brc{0,1}^{2d}~.
\end{equation}
We compute $\Upsilon$ with a variation of~\Cref{lemma:multiply-shift}.
Let $\alpha=\sum_{i=0}^{k-1}(1\ll (s_i+(k-1-i)\times d))\in\brc{0,1}^{b}$.
To avoid overlap between blocks in the multiplication,
we mask the even and odd blocks.
To this end, denote $x_e= x\wedge E$ and $x_o=(x\gg d)\wedge E$, where $E=(0^{d}1^{d})^{k}$.
Therefore, we compute as follows:\footnote{This expression is corrected from~\cite{patrascu2014dynamic} for even $n$, as the shifts corresponding to the even indices are in the odd positions of $\alpha$.}
$\Upsilon(x)=
{\brk{
\brk{\brk{x\wedge \mu}_e\times \alpha_o }
\oplus
\brk{\brk{x\wedge \mu}_o\times \alpha_e }
}\gg ((k-2)\times d)}$.
After each multiplication,
there may be overlap on all but the last block.
However, it does not overflow to the neighboring even block, as $\lg(n\times 0^d1^{d})\leq 3d/2$,
and thus the expression correctly computes $\Upsilon$.

The value $\Upsilon(x)\in\brc{0,1}^{2d}$ has the relevant bits in some order:
$\Upsilon(x)[D[i]]=x[C[i]]$.
Namely, $D[i]=C[i]\wedge 1^{\lg(d)}+s_{C[i]\gg \lg(d)}$.
We now permute the $n$ significant bits to the correct order.
That is, we compute $\Pack{D}(z)=\sum_{i=0}^{n-1}(z[D[i]]\ll i)$
so that  $\Pack{C}(x)=\Pack{D}(\Upsilon(x))$.
Because $2(n/2)\cdot 2d=n^3\leq b$, we apply~\Cref{lemma:permute} separately to $D[0:n/2-1]$ and $D[n/2:n-1]$,
then concatenate the resulting $(n/2)$-bit values.

We briefly discuss updates in $C$.
Recall that $C$ is sorted.
When adding a bit-position $c$ to $C$,
we compute the new shift value for the block $i=c\gg \lg(d)$
by repeating the packing procedure with $u=\mu\bitf{i}{d}$ and $v=\Upsilon(\mu)$.
Then, update $D$ and the multiplier of $\Upsilon$.
The ancillary values used in $\Pack{D}$,
namely its multiplier $\alpha_D$,
can be updated by shifting
the segments associated with
the range of indices contained in the $i$-th block.
We compute this range by applying $\RANK$ to the packed array $C$.
This step requires that $C$ is sorted so that there is
a contiguous range of bit positions for the $i$-th block.
Deletions just need to update $\Pack{D}$.
This concludes the compression.
\end{proof}


\subsection{Using fewer bits}

We present our compression.
Our idea is to divide the word into blocks of $b=\Th(nw^\eps)$ bits,
and show that we can compress all blocks in parallel
in $\O(1)$ time using $\O(b)$ ancillary bits.
After gathering the blocks using an instance of~\Cref{lemma:multiply-shift},
we get a packed array of $wn/b=\Th(w^{1-\eps})$ bits containing all significant bits (in some order).
After repeating $\O(1/\eps)$ times, all significant bits fit within a block
and thus can be sorted to get the compressed key (using~\Cref{lemma:permute}).
We define the following class of functions that we will show that can be parallelized.
That is, that can be applied to many elements in a packed array,
taking the same time.

\begin{definition}
For $b\leq w$,
a function $f:\brc{0,1}^b\to\brc{0,1}^b$
that can be computed in $\O(1)$ time
is \textbf{simple} if it is of the form:
\begin{equation}
    f(x)=\begin{cases}
        x&\\
        \alpha&\text{for }\alpha\in\brc{0,1}^b\\
        g(x)\times \alpha&\text{for $g$ simple, $\alpha\in\brc{0,1}^b$}\\
        g(x)\diamond \alpha&\text{for $g$ simple, $\alpha\in[b]$, $\diamond\in\brc{\ll,\gg}$}\\
        g_1(x)\diamond g_2(x)&\text{for $g_1, g_2$ simple, $\diamond\in\brc{+,\wedge, \vee, \oplus}$}~.
    \end{cases}
\end{equation}
All operations are restricted to $b$ bits.
Equivalently, a function is simple if it uses at most $\O(1)$ operations and all multiplications and shifts are by a constant.
\end{definition}

The storage of $f$ is defined by the space used by the ancillary values $\alpha$.
Recall that values that only depend on $w$ can be stored in lookup tables
so as not to take any storage.
We prove that we can parallelize a simple function without additional storage.

\begin{lemma}\label{lemma:parallel}
For $b\leq w$ and $k\leq w/b$,
let $f:\brc{0,1}^b\to\brc{0,1}^b$ be a simple function.
Then, there is a simple function $f^k:\brc{0,1}^{kb}\to\brc{0,1}^{kb}$,
where $f^k(x)\bitf{i}{b}=f(x\bitf{i}{b})$ for every $x\in\brc{0,1}^b$ and $i\in[k]$,
and that uses the same ancillary values as $f$.
\end{lemma}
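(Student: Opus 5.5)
We prove \Cref{lemma:parallel} by structural induction on the definition of a simple function. For each case we build $f^k$ so that it is again simple and uses only the ancillary values of $f$, possibly together with constants that depend only on $w$, $b$ and $k$; those constants live in the lookup table and cost no storage. The induction hypothesis is that $g^k$ (and $g_1^k,g_2^k$) is simple and satisfies $g^k(x)\bitf{i}{b}=g(x\bitf{i}{b})$ for every block $i$, where ``restricted to $b$ bits'' means the result is masked to $0^{w-b}1^b$ before use.

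The base cases are direct. For $f(x)=x$ we take $f^k(x)=x$. For $f(x)=\alpha$ we take $f^k(x)=\alpha\times(0^{b-1}1)^k$, i.e.\ $\alpha$ broadcast to every block. This is one multiplication by a table constant, applied to a stored value, so it introduces no new storage.

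The bitwise cases $\wedge,\vee,\oplus$ act independently on each bit, so $f^k=g_1^k\diamond g_2^k$ works unchanged. Addition needs care because carries can cross block boundaries. We use the standard trick: let $H=(10^{b-1})^k$ be the high-bit mask, and compute
\[
\brk{(g_1^k(x)\wedge\neg H)+(g_2^k(x)\wedge\neg H)}\oplus\brk{(g_1^k(x)\oplus g_2^k(x))\wedge H}.
\]
This adds modulo $2^b$ in each block, and no carry leaves a block.

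Shifts by a constant $\alpha\in[b]$ are handled by shifting the whole word and then masking each block to the bits that stay inside it. For $\ll\alpha$ we mask with $(1^{b-\alpha}0^\alpha)^k$; for $\gg\alpha$ we mask with $(0^\alpha1^{b-\alpha})^k$. These masks depend only on $b,\alpha,w$, and $\alpha$ is itself an ancillary value of $f$. If $\alpha$ is not a table constant, we form the mask as $\brk{(1^b\ll\alpha)\wedge 1^b}\times(0^{b-1}1)^k$ in $\O(1)$ operations.

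The main obstacle is multiplication $f(x)=g(x)\times\alpha$ restricted to $b$ bits, because a single word product mixes neighbouring blocks. We split the product by bits of $\alpha$ and work in wider blocks. First we write $y=g^k(x)$ as two packed arrays, even blocks $y_e=y\wedge(0^b1^b)^{k/2}$ and odd blocks $y_o=(y\gg b)\wedge(0^b1^b)^{k/2}$. Each array now has $b$ zero bits of padding above every block. Multiplying such an array by $\alpha$ (with $\alpha<2^b$) gives $y_i\alpha<2^{2b}$ per block, so the partial products in $2b$-bit slots do not overlap. We then mask each $2b$-bit slot down to its low $b$ bits, which is exactly $y_i\alpha\bmod 2^b$, and recombine as $(y_e\alpha\wedge M)\vee((y_o\alpha\wedge M)\ll b)$ with $M=(0^b1^b)^{k/2}$. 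The last block, whose product would overflow the word, is handled by the same split or by treating $k$ as even with one block of padding. I expect the fiddly points to be exactly this word-boundary overflow and the parity of $k$, but this costs only $\O(1)$ operations and uses only $\alpha$ and $w$-dependent masks.

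Since each constructor costs $\O(1)$ operations and $f$ uses $\O(1)$ constructors, $f^k$ is simple with the same ancillary values, which completes the induction.
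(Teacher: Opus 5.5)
Your proof is correct, but it is organized differently from the paper's. The paper handles every operation that can spill across block boundaries ($+$, $\times$, $\ll$, $\gg$) with one uniform device. It evaluates the naive extension separately on the even and on the odd blocks, selected with the single mask $E=(0^b1^b)^{\lceil k/2\rceil}\wedge 1^{kb}$, so every kept block has $b$ empty bits beside it to absorb the spill, and then recombines. You use that even/odd split only for multiplication. Addition is handled with the high-bit (SWAR) trick, and shifts by masking off the bits that cross into a neighbouring block.

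Your route gives exact, local arithmetic modulo $2^b$ in each case, and avoids evaluating the subexpression twice at $+$ and shift nodes. The paper's route gives uniformity and a mask that depends only on $b$ and $k$. Your shift masks depend on $\alpha$, and a table of them for all $\alpha\in[b]$ would not in general fit in $\polylog(w)$ words. So the on-the-fly construction $((1^b\ll\alpha)\wedge 1^b)\times(0^{b-1}1)^k$, and $(1^b\gg\alpha)\times(0^{b-1}1)^k$ for right shifts, is the form you actually need.

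Two further remarks. First, you apply the even/odd mask to the operand $y=g^k(x)$ of the outermost operation, which is the right place. The paper's displayed formula masks $x$ and then evaluates $f^{(k)}(x\wedge E)$, so the zeroed blocks become $g(0)$. When $g(0)\neq 0$ these can spill into a kept block; for instance, a left shift carries the top bits of $g(0)$ from an odd block into the even block above it. Your formulation has no such problem. Second, the top block needs no special care. The word product is correct modulo $2^w$, so discarding bits above position $w$ never affects the low $b$ bits of any slot. For odd $k$, simply use $(0^b1^b)^{\lceil k/2\rceil}\wedge 1^{kb}$ in place of $(0^b1^b)^{k/2}$. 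Padding to an even number of blocks is not available when $kb=w$.
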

\begin{proof}
We define $f$ recursively.
For a constant $f(x)=\alpha$, let $f^k(x)=\alpha \times (0^{b-1}1)^{k}=\alpha^k$, i.e., $\alpha$ repeated $k$ times.
Otherwise, we must handle possible overflow.
We define the na\"ive extension $f^{(k)}(x)=g^k(x)\diamond \alpha$ or $f^{(k)}(x)=g_1^k(x)\diamond g_2^k(x)$.
For $\diamond \in\brc{\wedge,\vee,\oplus}$, we take $f^k=f^{(k)}$.
The other operations yield at most $b$ additional bits of overflow, thus we mask out the even $b$-bit blocks.
Let $E=(0^b1^b)^{\ceil{k/2}}\wedge 1^{kb}$.
Define
$f^{k}(x)=(f^{(k)}(x\wedge E)\wedge E) \oplus ((f^{(k)}((x \gg b)\wedge E)\wedge E) \ll b)$.
This procedure increases the number of operations by at most a constant factor, keeping time $\O(1)$.
Note that $E$ and $(0^{b-1}1)^{k}$ only depend on $b$ and $k$ and are stored in a lookup table.
\end{proof}

Note that~\Cref{lemma:multiply-shift} shows that some compressions can be implemented by a simple function.
The main observation is the following.
\begin{observation}
The compression of~\Cref{thm:perfect} is a simple function.
\end{observation}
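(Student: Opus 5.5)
We need to show that the compression $\Pack{C}$ of~\Cref{thm:perfect}, viewed as a map $\brc{0,1}^b\to\brc{0,1}^b$, is a simple function. That is, it should be a composition of $\O(1)$ operations in which every multiplication is by a stored value $\alpha$ and every shift is by a fixed amount, with all arithmetic restricted to $b$ bits. The plan is to go through the evaluation pipeline of the proof of~\Cref{thm:perfect} in order and check each stage against the grammar of simple functions.

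The pipeline is $\Pack{C}(x)=\Pack{P}(\Pack{D}(\Upsilon(x)))$, together with the constant-factor split of $C$ into halves and concatenation of the results. We handle the stages in turn.

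\emph{Stage $\Upsilon$.} It is $\brk{\brk{(x\wedge\mu)_e\times\alpha_o}\oplus\brk{(x\wedge\mu)_o\times\alpha_e}}\gg((k-2)\times d)$. Here $x\wedge\mu$ is a $\wedge$ with the ancillary $\mu$. The maps $y_e=y\wedge E$ and $y_o=(y\gg d)\wedge E$ use a fixed shift and the constant $E$, which depends only on $b$ and $n$ and can be treated as an ancillary value or a table entry. We then multiply by the ancillary $\alpha_e,\alpha_o$, XOR the two products, and shift right by the fixed amount $(k-2)d$. Every step is one of the allowed productions.

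\emph{Stage $\Pack{D}$ and $\Pack{P}$.} Each is an application of~\Cref{lemma:permute}, which is $\Lambda_1\circ\Lambda_0$. We have $\Lambda_0(x)=\brk{((x\wedge\mu)\times\alpha)\gg b}\wedge(0^{2b-1}1)^n$, which is simple. $\Lambda_1$ comes from~\Cref{lemma:multiply-shift} as $\brk{((x\wedge\mu')\times\alpha_0)\wedge\alpha_1}\gg\delta$, again simple, with $\delta$ a fixed shift stored alongside the ancillary bits. For $\Pack{D}$ the two halves $D[0:n/2-1]$ and $D[n/2:n-1]$ are combined as $g_1(x)\vee(g_2(x)\ll n/2)$, which is simple. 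The top-level split of $C$ into halves is combined the same way. Compositions of simple functions are simple by induction on the grammar, substituting $g$ for $x$, and the total operation count stays $\O(1)$.

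The main obstacle is the requirement that all operations be restricted to $b$ bits. Two issues arise.

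\emph{Intermediate widths.} In $\Lambda_0$ and $\alpha_S$-type products, the multipliers and products live in $2bn$ or $2d^2$ bits rather than $b$. I would resolve this using the hypotheses $n\le b^{1/3}$ and $n^4\le w$ from~\Cref{thm:perfect}, choosing the working width $b$ large enough, say $b\ge n^3$. We take $b$ as the ambient width, and the constraint $2(n/2)\cdot 2d\le b$ already used in the proof ensures each $\Lambda_0$ product of width $2\cdot(n/2)\cdot 2d$ fits in $b$ bits. The $\Upsilon$ multiplication products stay within $kd+d\le 2b$. Here I would argue that the high part is discarded by the final shift and mask, so truncating to $b$ bits beforehand is harmless once we shift by $(k-2)d$ before truncation. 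Concretely, I would restate $\Upsilon$ as a multiplication followed by taking the needed window, and note that the window lies in the low $b$ bits once $\alpha$ is reversed.

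\emph{The shift-value search.} The search for good shifts $s$ is only used during updates, not during compression. Updates are not required to be simple, so only the query path needs checking.
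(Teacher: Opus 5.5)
Your proposal is correct and matches the paper's own justification, which the paper leaves implicit as an inspection of \Cref{thm:perfect}: the query path $\Pack{P}\circ\Pack{D}\circ\Upsilon$, together with the split of $C$ into halves, uses $\O(1)$ operations whose multipliers and shift amounts are ancillary or fixed, $n^3\le b$ keeps $\Pack{D}$ and $\Pack{P}$ inside $b$ bits, and the shift search is needed only for updates. The one place to tighten is $\Upsilon$, where you need neither a reversal of $\alpha$ nor a shift before truncation: with $n$ and $b$ powers of two we have $d\mid b$, hence $kd=b$, and the useful bits land in $[(k-2)d,(k-1)d+d/2)$, already in the low $b$ bits, so taking the products mod $2^b$ and then shifting right by $(k-2)d$ yields exactly $\Upsilon(x)$ (the truncation also serves as the final mask, since it discards the cross terms above $kd$).
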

Note that we do not need the updates to the ancillary bits to be simple functions.
Applying~\Cref{lemma:parallel},
we compress each $b$-bit block separately
in parallel, using the same compression and storing only $\O(b)$ bits.
Now, we need to gather the blocks, which we state next.

\begin{lemma}\label{lemma:pack}
For any $a\geq 1$ and $k\leq w/(a n)$,
there exist a function $\Gamma:\brc{0,1}^{ka n}\to\brc{0,1}^{kn}$
and a permutation $\gamma:[k]\to [k]$
that can be computed in $\O(1)$ time in standard Word-RAM,
such that $\Gamma(x)\bitf{\gamma(i)}{n}=x\bitf{i}{a n}\wedge 1^n$, for $i\in[k]$.
\end{lemma}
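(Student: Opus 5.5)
We need to gather the low $n$ bits of each $an$-bit block into a dense $kn$-bit word, possibly permuted. The allowed constants are $\polylog(w)$ words depending only on $w$ (and on the parameters $a,n,k$), and we may use multiplication.

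\emph{Case $a$ small.} If $a$ is at most a constant, say $a\le 2$, the plan is to use \Cref{lemma:multiply-shift} directly. Take mask $\mu=(0^{an-n}1^n)^k$ and shifts $s_i=i\cdot n$, so that block $i$ lands at position $i\cdot n$ and $\gamma$ is the identity. We need the multiplier's shifted copies of $\mu$ not to collide. The shifts satisfy $s_i-i\cdot an=-i(a-1)n$, so the copies of $\mu$ are shifted by multiples of $(a-1)n$, while $\mu$ has period $an$ with a window of $n$ ones. Collisions would therefore arise in general, and the identity order fails. This is why the statement allows a permutation $\gamma$.

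\emph{Main construction.} The plan is to pick shifts so that, modulo $an$, the shifted windows occupy distinct residues. Concretely, block $i$ is moved to offset $\gamma(i)\cdot n$ within its own $an$-bit region. Process the blocks in groups of $a$ consecutive blocks. Within a group, the block with index $j\in[a]$ is shifted right by $(j\cdot an - j\cdot n)$ relative to the group start, landing in slot $j$ of the group's first $an$-bit cell. This is one multiplication with $\alpha_0=\sum_j 1\ll(\text{offset})$ applied simultaneously to all groups. Because every group has identical offsets, the cross terms from different groups fall into disjoint residue classes. I will verify this with the non-interference condition of \Cref{lemma:multiply-shift}: the product $(x\wedge\mu)\times\alpha_0$ places, for group $g$ and index $j$, block $ga+j$ at residue $j n$ inside cell $g$. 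Products of a block with a "wrong" shift land in other residues or other cells, and are removed by masking with $\alpha_1$. This step yields $k/a$ cells of $an$ bits, each fully packed.

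\emph{Hardest step and fallback.} The main obstacle is that the cross terms can overflow into adjacent cells and collide with correct bits. To handle this, I will mask even and odd groups separately, as in \Cref{lemma:parallel} and the $\Upsilon$ computation of \Cref{thm:perfect}, so that spurious products stay within a margin of one cell, and then XOR the two halves. The result is the $kn$-bit word, since $k/a$ cells of $an$ bits give $kn$ bits. The permutation $\gamma(ga+j)=ga+j$ may actually be the identity here; if the alignment forces reversed order inside a group, as in the footnote of \Cref{thm:perfect}, then $\gamma$ records that reversal.

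If $a$ does not divide $k$, pad with zero blocks. All multipliers and masks depend only on $a$, $n$, $k$ and $w$, so they are stored in the lookup table, and the whole procedure takes $\O(1)$ time.
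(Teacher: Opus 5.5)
Your first multiplication is fine. In fact it produces no collisions at all: the term that shifts by $j(a-1)n$ sends every block to residue $\delta+jn \bmod an$. So the even/odd masking you plan as a fallback addresses a non-problem.

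The gap is the sentence ``the result is the $kn$-bit word, since $k/a$ cells of $an$ bits give $kn$ bits.'' You group \emph{consecutive} blocks and give every group the same offsets. Hence the packed cell of group $g$ sits at bit position $g\cdot a^2n$, and the $k/a$ cells are still spread over essentially all $kan$ input bits, at stride $a^2n$.

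Compacting them is a new instance of the lemma with the same ratio $a$ (keep $an$ bits out of every $a\cdot an$), now on $k/a$ blocks. Iterating your step therefore takes $\Theta(\log_a k)$ rounds, not $\O(1)$. With the parameters of \Cref{thm:compress} ($a=w^{\eps}$, $k$ up to about $w^{1-\eps}/n$), that is roughly $1/\eps$ rounds per level, which would push the compression to order $1/\eps^2$. Doing the compaction in one multiplication with $k/a$ group-dependent shifts does not work either: it collides once $k/a>a$, by exactly the residue argument of your ``Case $a$ small''. Your construction appears to deliver the identity order with a single multiply-shift, which is precisely what your first paragraph showed cannot work; that is the symptom of this omission.

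The missing idea is to group blocks with stride $k/a$ rather than consecutively, i.e.\ to interleave. Assume $a\mid k$ and let $\mu=(0^{an-n}1^n)^k$. Cut $x\wedge\mu$ into $a$ super-blocks of $kn$ bits, each holding $k/a$ whole $an$-bit blocks, and overlay them:
\[
\Gamma(x)=\sum_{i_1=0}^{a-1}\Bigl(\bigl((x\wedge\mu)\gg i_1kn\bigr)\wedge 1^{kn}\Bigr)\ll i_1n .
\]
This sum has only $a$ terms, and \Cref{lemma:multiply-shift} applies with $\delta=(a-1)(k-1)n$:
\begin{itemize}
\item the copies of $\mu$ are displaced by $(j-i)(k-1)n\equiv -(j-i)n \pmod{an}$, a nonzero multiple of $n$ modulo $an$ for $i\neq j\in[a]$, so their $n$-bit windows never meet;
\item $\delta+kn\le kan\le w$, so the relevant bits fit in a word.
\end{itemize}
Block $i_0+i_1k/a$ lands in $n$-bit slot $i_0a+i_1$, so $\gamma(i)=ia-(k-1)\lfloor ia/k\rfloor$, a genuine interleaving rather than the identity.

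If $a\nmid k$, handle the last $k \bmod a$ blocks separately rather than padding, since padding can push $kan$ past $w$. For fewer than $a$ blocks, the identity shifts $s_i=in$ do satisfy \Cref{lemma:multiply-shift}.
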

\begin{proof}
Let $b=an$, we divide $x$ into $k$ $b$-bit blocks.
Assume that $k$ is a multiple of $a$, otherwise, divide into the first $a\times\floor{k/a}$  blocks and the remaining blocks (at most $a$) and handle each separately.
To gather all blocks,
we interlace super-blocks of $kn$ bits that contain $k/a$ $b$-bit blocks.
Concretely, we compute
$\Gamma(x)=\sum_{i=0}^{a-1} \brk*{{(x\wedge (0^{b-n}1^n)^k)\bitf{i}{kn}} \ll (i\times n)} $,
using~\Cref{lemma:multiply-shift}.
Hence,
the block at index $i=i_0+i_1\times k/a$ in $x$ is mapped to index $i_1+i_0\times a=i\times a-(k-1)\times i_1$ in $\Gamma(x)$,
and thus
the permutation is $\gamma(i)=ia-(k-1)\times\floor{ia/k}$.
Its inverse is $\gamma^{-1}(i)=i\times k/a-(k-1)\times\floor{i/a}$.
For $k<a$,
let $\gamma(i)=i$ and
compute $\Gamma(x)=\sum_{i=0}^{k-1} \brk*{{(x\wedge (0^{b-n}1^n)^k)\bitf{i}{an}} \ll (i\times n)} $ using~\Cref{lemma:multiply-shift}.
\end{proof}

For $b\geq n^3$, the big idea is to pack $w$ bits to $wn/b$ bits, in some order,
and repeat $t$ times until $w\cdot (n/b)^t\leq b$, which can be packed in the correct order to get the compressed key.
We have the following theorem.

\begin{theorem}\label{thm:compress}
For every $\eps\leq 1/2$,
for $n\leq w^{\eps/2}$,
there is a compression
taking $\O(1/\eps)$ time in standard Word-RAM for compression and update,
storing $\O(nw^\eps/\eps)$ ancillary bits.
\end{theorem}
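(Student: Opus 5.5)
Set the block size $b=\Th(nw^{\eps})$, rounded to a power of two. Because $n\leq w^{\eps/2}$, we have $n^3\leq n\cdot w^{\eps}\leq b$, so \Cref{thm:perfect} applies inside a block. Its condition $n\leq w^{1/4}$ also holds, since $\eps\leq 1/2$. The compression is organized in rounds $j=0,1,\dots,t-1$. In round $j$ the current word $x^{(j)}$ has length $L_j$, with $L_0=w$. We split it into $k_j=\ceil{L_j/b}$ blocks of $b$ bits. Each block contains some subset of the (at most $n$) significant bit-positions, currently relocated.

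We fix one array $C^{(j)}$ of $n$ positions inside $[b]$ that serves every block. This is legitimate because \Cref{thm:perfect} lets $C$ be an arbitrary array of positions, with repetitions allowed. The array $C^{(j)}$ lists the union of the in-block offsets of the significant bits, padded to length $n$. By the Observation, $\Pack{C^{(j)}}$ is a simple function. \Cref{lemma:parallel} therefore applies it to all $k_j$ blocks at once in $\O(1)$ time, reusing the same $\O(b)$ ancillary bits. Each block now holds at most $n$ relevant bits in its low $n$ positions.

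Next, \Cref{lemma:pack} with $a=b/n$ gathers these $n$-bit outputs into a word of length $L_{j+1}=k_jn\approx L_jn/b$. It uses only lookup-table constants that depend on $w$, $b$ and $n$, and it permutes blocks by the known $\gamma$. Since $n/b=\Th(w^{-\eps})$, after $t=\O(1/\eps)$ rounds we reach $L_t\leq b$. A final application of \Cref{thm:perfect} then extracts the $n$ significant bits in the correct order, i.e., it computes $\Pack{D}$ where $D$ records where each $c_i$ ended up. Each round costs $\O(1)$ time and $\O(b)$ ancillary bits, so the totals are $\O(1/\eps)$ time and $\O(b/\eps)=\O(nw^{\eps}/\eps)$ bits.

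The subtle point is the choice of a single per-round array $C^{(j)}$ together with tracking bit locations across rounds. Positions $c$ and $c'$ in different blocks with equal offset $c\bmod b$ collapse to the same entry of $C^{(j)}$. That is harmless, because each block extracts its own bit at that offset. We order $C^{(j)}$ by offset, so a significant bit at offset $o$ in block $i$ lands at position $\RANK(C^{(j)},o)$ of output slot $\gamma(i)$. Each round thus induces an explicit relocation map $c\mapsto c'$ that can be computed on the packed $n\times\lg w$ array of current positions with $\O(1)$ packed operations (shift, mask, $\RANK$ against $C^{(j)}$, and $\gamma$ via a multiply and a divide by the constant $k/a$). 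Since $C^{(j)}$ has at most $n$ distinct offsets, its length stays $n$ after deduplication.

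For an update, exactly one significant position $c$ is inserted or deleted. We walk through the rounds, and in round $j$ we insert or delete the offset of $c$'s current location in $C^{(j)}$ using the $\O(1)$ update of \Cref{thm:perfect}. Only when the offset is new do we need to insert it; for deletion we keep a count per offset to decide whether to remove it. Inserting an offset shifts the output ranks of other bits within their blocks. The relocated positions of all significant bits must therefore be refreshed, which is a parallel $\O(1)$ operation on the packed position array. The final $D$ is then refreshed as well.

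The main obstacle is making this rank shift consistent with \Cref{thm:perfect}'s internal update, which supports indels into $C$ and adjusts $\Pack{D}$ correctly. Each round's update is $\O(1)$, so the total update time is $\O(1/\eps)$. Storage is dominated by the $t$ per-round structures of $\O(b)$ bits each, plus the $\O(n\log w)$-bit position arrays, which is within the claimed bound.
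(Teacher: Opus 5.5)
Your compression procedure is correct and matches the paper's. You compress every $b$-bit block in parallel with the simple function of \Cref{thm:perfect} via \Cref{lemma:parallel}, gather with \Cref{lemma:pack}, iterate $\O(1/\eps)$ rounds, and reorder once at the end; the paper uses \Cref{lemma:permute} for that last step. The gap is in updates. It sits exactly where you name ``the main obstacle'' and then assert $\O(1)$ per round without argument.

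Because you order $C^{(j)}$ by offset, a bit's output slot is its rank among the offsets. Inserting or deleting one offset $o$ therefore shifts by one the output slot of every significant bit whose offset exceeds $o$, in every block. In round $j+1$ up to $n$ significant positions move, so $C^{(j+1)}$ changes in up to $n$ entries rather than one. The same cascade repeats in every later round and in the final $D$. \Cref{thm:perfect} only provides $\O(1)$-time maintenance for a single indel. Moving many positions changes $\mu$ in many $d$-bit sub-blocks, which forces recomputing the non-overlapping shift $s_i$ of every affected sub-block, together with the multiplier of $\Upsilon$, $D$ and $\alpha_D$. With the tools of \Cref{thm:perfect} that is one $\O(1)$ repair per moved position, i.e., $\Theta(n)$ per round in the worst case. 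Refreshing the packed array of current positions in parallel fixes only your bookkeeping, not these ancillary values.

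The missing idea is to decouple a bit's output slot from its rank, so that an update never moves any other significant bit. Per round, the paper keeps:
\begin{enumerate}
\item an array $R$ of $n$ distinct offsets in arbitrary slot order, with $\{C[i]\bmod b\}\subseteq\{R[i]\}$;
\item a bitmap $F$ of unused slots.
\end{enumerate}
A new offset is written into the free slot $\LSB(F)$. This changes a single entry of $R$, which \Cref{thm:perfect} absorbs in $\O(1)$ time because it internally maintains the sorted $C^*$ and the permutation $P$. Deletion is lazy: $R$ is left untouched, and a slot is marked free only when the last significant bit with that offset disappears. The bit at $c=R[i]+jb$ then lands at $i+\gamma(j)n$, independently of all other entries. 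Hence each round's position array receives exactly one indel per update, and by induction so does every later round and the final $\Pack{C_t}$. This gives $\O(1/\eps)$ update time.

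A minor point: the paper treats $\eps\leq \lg\lg w/\lg w$ separately, computing $\hat{x}$ naively in $\O(n)=\O(1/\eps)$ time. In that regime per-round $\O(n\log w)$-bit arrays are not $\O(nw^{\eps})$, so your accounting of the per-round position arrays needs the same care.
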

\begin{proof}
First, we store an array $C$ of the significant bits, taking $n\times \lg w$ bits.
Updates to $C$ maintaining sorted order take $\O(1)$ time.

If $\eps\leq \lg\lg w/\lg w$, it holds that $w^{\eps/2}\leq \sqrt{\lg w}=\O(1/\eps)$.
In this case, for $n\leq w^{\eps/2}$ the compression is computed na\"ively: $\hat{x}=\sum_{i=0}^{n-1}(x[C[i]]\ll i)$;
and it takes $\O(n)=\O(1/\eps)$ time, requiring only to store and update the packed array $C$ of significant bit-positions.
Recall $w^\eps/\eps=\Omega(\log w)$, thus the upper bound holds.

Otherwise, define $b=nw^{\eps}$, assume $n$ and $b$ are powers-of-two.
To avoid rebuilding when $n$ changes,
we de-amortize as described in~\Cref{obs:linear}.
Observe $n^3\leq nw^\eps=b$, thus~\Cref{thm:perfect} applies.
We proceed as follows: each block of $b$ bits is compressed into $n$ bits in parallel
in the same way,
then all blocks are compressed to $w\cdot (n/b)$ bits.
Apply this $t=\O(\ceil{1/\eps})$ times, until we get $w\cdot (n/b)^t\leq b$ remaining bits.
We allow the compression in each step to permute the significant bits.
This will be important for updates, as we will see.
In the last step, we compress into $n$ bits and correct the ordering.

Let us consider the first iteration.
We maintain an array $R$ of $n$ distinct elements from $[b]$
such that $\brc{C[i]\mod b}_{i=0}^{n-1}\subseteq\brc{R[i]}_{i=0}^{n-1}$.
Using~\Cref{lemma:parallel}, we construct $\Pack{R}^{w/b}(x)$ that packs every block in parallel, using only $\O(b)$ ancillary bits.
For an update,
we will change at most
one entry of $R$,
which are reflected in the ancillary bits of $\Pack{R}$, as per~\Cref{thm:perfect}.

To complete the step,
let $\Gamma$ be the function from~\Cref{lemma:pack} for $k=w/b$ and $a=b/n=w^{\eps}$.
Then, $\Lambda(x)=\Gamma(\Pack{R}^{w/b}(x))$ packs $x$ into $wn/b=w^{1-\eps}$ bits,
preserving the significant bits.
Concretely,
the significant bit at position $c=R[i]+j\times b$,
moves to position $i+j\times b$ by $\Pack{R}^{w/b}$,
then
moves to position $i+\gamma(j)\times n$, where $\gamma$ is the permutation from~\Cref{lemma:pack} of $\Gamma$.
The resulting position is thus $c'=i+\gamma(j)\times n$
where $i$ is the index of $c\mod b$ in $R$
and $j=\floor{(c-R[i])/b}$, all computed in $\O(1)$ time.
The original significant bit positions $C$ are moved to some new set of positions $C'$.
Essentially, $\Lambda(x)$ computes $\sum_{i=0}^{n-1}\brk{x[C[i]]\ll C'[i]}$.

To insert or delete position $c$,
we update the current $C$ and $R$,
and we need to ensure that only $c'$ needs to be updated in $C'$.
To this end, we store a packed array $F\in\brc{0,1}^n$ of unused positions.
That is, $F[i]=1$ iff $R[i]\notin \brc{C[j]\mod b}_{j=0}^{n-1}$; otherwise, $F[i]=0$.
Deletion is lazy:
when deleting position $c$,
we set $F[i]=1$
only if there is exactly one entry in $C$
with remainder $R[i]=c\mod b$.
Namely, the entry corresponds to $c$, found by element search in a packed array.
When inserting position $c$,
if $c\mod b$ is not in $R$,
we find an unused position $j=\LSB(F)$,
and set $F[j]=0$ and $R[j]=c\mod b$.
Hence, we modify $R$
without changing the permutation of the other bits, and
therefore the next iteration $C'$ needs to insert/delete at most one bit.

Let $C_s$ and $R_s$ be the respective $C$ and $R$ at step $s\in[t]$,
where $C'_s=C_{s+1}$.
Each step stores $\O(b+n\log w+n)=\O(nw^{\eps})$ bits.
Define $\Lambda_s(x)=\Gamma_s(\Pack{R_s}^{w/b}(x))$, where $\Gamma_s$ is the function from~\Cref{lemma:pack} for $k=(w/b)\cdot (n/b)^s$ blocks and $a=b/n=w^\eps$.
As $\Gamma_s$ only depends on $\lg n$, $\lg b$, $s$, and $w$,
we may store its ancillary bits as a constant lookup table ($\O(\log^3 w)$ words).
Let $\tilde{x}=\Lambda_{t-1}\brk{\cdots\Lambda_1\brk{\Lambda_0(x) } \cdots }$.
We packed $x$ into $b$ bits: $\tilde{x}\in\brc{0,1}^b$.
At this point, $C_t$ encodes the positions of the significant bits after packing, i.e., $\tilde{x}[C_t[i]]=x[C[i]]$.
Finally, the compression is
$\hat{x}=\Pack{C_t}(\tilde{x})=\Pack{C_t}\brk{\Lambda_{t-1}\brk{\cdots\Lambda_1\brk{\Lambda_0(x) } \cdots }}$
using~\Cref{lemma:permute}.
\end{proof}

Applying~\Cref{thm:rank-node}
using the compression of~\Cref{thm:compress},
gives a $\tpl{w^{\eps/2}, \O(w^{\eps}/\eps), \O(1/\eps)}$-index,
as $\O(n^2+n\log w+nw^\eps/\eps)=\O(nw^{\eps/2}+nw^\eps/\eps)=\O(nw^\eps/\eps)$.

\subsection{Leveraging trees}\label{sec:tree}

Next,
we show how to use leaf-oriented trees to increase the number of elements
that a $\RANK$ index can store up to $w$ elements, with the same space per element.
First, we need to refine the construction of~\Cref{sec:cardinality}
to use less space for the internal nodes and manage memory.

Looking in detail in~\Cref{sec:navigation},
to compute $\RANK$,
the internal nodes store a set of splitters $\brc{z_i}_{i=0}^{d-1}$,
and stores a $\RANK$ index for it and an SPS $N_1$ where $\SUM(N_1, i)=\RANK(S, z_i)$.
We use~\Cref{coro:partial-sum} to implement the SPS with $\sigma=\O(w)$.
To avoid storing $z_i$, we require $z_i\in S$
and use~\Cref{eq:rank-a} instead of~\Cref{eq:rank-b}.
Thus, $z_i=\SELECT_1(S, \SUM(N_1, i))$.
This requires replacing $z_i$ if it is deleted from $S$,
say by the minimum of its corresponding subset.
The consequence of this change is that all $\RANK$ indices must be over the same universe $[u]$.
Finally, we use $\O(\log w)$-bit pointers,
and therefore, an internal node of degree $d$ for this tree
uses $\O(d\log w)$ bits,
plus the space for its $\RANK$ index.
We use the technique described~\Cref{sec:memory} to maintain linear space
even when using $\O(\log w)$-bit pointers.

Given these constructions, we now build leaf-oriented trees for $\RANK$.
An immediate application of this is to
use a two-level $(w^{\eps/2}, 4w^{\eps/2})$-tree
with
the $\tpl{w^{\eps/2}, \O(w^{\eps}/\eps), \O(1/\eps)}$-index
resulting from ~\Cref{thm:rank-node,thm:compress},
to get the following.
\begin{corollary}\label{coro:index}
For every $\eps\leq 1/2$,
there is a $\tpl{w^{\eps}, \O(w^{\eps}/\eps), \O(1/\eps)}$-index.
\end{corollary}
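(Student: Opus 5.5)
We build a two-level leaf-oriented $(d,4d)$-tree with $d=w^{\eps/2}$, as refined at the start of this subsection. Every node, both the root and each leaf, holds a set of at most $4d=\O(w^{\eps/2})$ keys. Each node uses the $\tpl{w^{\eps/2},\O(w^\eps/\eps),\O(1/\eps)}$-index obtained from \Cref{thm:rank-node} and \Cref{thm:compress}. The constant $4$ is absorbed by running \Cref{thm:compress} with $\eps$ replaced by $\eps+\O(1/\lg w)$ (or by splitting a node's set into $\O(1)$ groups), which changes the bounds only by constant factors. The leaves partition $S$ (with $\size{S}\leq w^{\eps}$) into groups of $\Th(w^{\eps/2})$ elements. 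The root stores the splitters $z_i\in S$ implicitly via an SPS $N_1$ from \Cref{coro:partial-sum} with $\sigma=\O(w)$, so $z_i=\SELECT_1(S,\SUM(N_1,i))$. It also stores a $\RANK$ index over $\brc{z_i}$, whose $\SELECT_1$ oracle is answered by one SPS lookup plus one call to the global $\SELECT_1$ dictionary.

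For a query $\RANK(S,x)$, we first compute $i=\RANK(\brc{z_i},x)-1$ at the root in $\O(1/\eps)$ time. We then apply \Cref{eq:rank-a}: we add $\SUM(N_1,i)$ to the rank of $x$ inside leaf $i$. The leaf index is over the universe $[u]$ and accesses its elements through the global dictionary, with rank offset $\SUM(N_1,i)$, so this costs another $\O(1/\eps)$. For an update, we route the same way, update the leaf index, and apply a $\pm1$ update to $N_1$ from position $i$ onward. If the deleted element is a splitter $z_i$, we replace it by the minimum of its group. In terms of the root's index this is a deletion and an insertion, and it does not change $\SUM(N_1,\cdot)$ beyond the $\pm1$. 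Leaf overflow or underflow is handled by the lazy merges and splits of \Cref{sec:de-amortization}. Each step moves $\O(1)$ elements between leaf indices and changes $\O(1)$ splitters, so the update time stays $\O(1/\eps)$ worst-case.

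For space, leaves hold $n_\lambda$ elements each and cost $\O(n_\lambda w^\eps/\eps)$ bits, giving $\O(nw^\eps/\eps)$ in total. The root has degree $\O(n/w^{\eps/2})$ and costs $\O((n/w^{\eps/2})(w^\eps/\eps+\log w))$ bits for its index and SPS, plus $\O(\log w)$-bit pointers; this is dominated by the leaves. For memory management we use the linear-space technique of \Cref{sec:memory} with $\O(\log w)$-bit pointers together with \Cref{obs:linear}, since $n$ varies.

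The main obstacle is that the leaf indices receive rank and select queries relative to sub-ranges while the dictionary is global. We must verify that \Cref{thm:rank-node} needs only $\SELECT_1$ access to its own elements, which we realize as the global $\SELECT_1$ at index offset $\SUM(N_1,i)$. This offset shifts under updates elsewhere but is always read fresh from $N_1$. A second check is that lazy merges and splits keep the degree bounds within the index capacity at all times.
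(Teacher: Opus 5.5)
Your proposal is correct and is essentially the paper's argument. The paper derives this corollary in one line as a two-level $(w^{\eps/2},4w^{\eps/2})$-tree, refined as in \Cref{sec:tree} (splitters taken from $S$ and recovered through an SPS from \Cref{coro:partial-sum} plus $\SELECT_1$, $\O(\log w)$-bit pointers, all indices over the common universe $[u]$), with the $\tpl{w^{\eps/2},\O(w^{\eps}/\eps),\O(1/\eps)}$-index of \Cref{thm:rank-node,thm:compress} at every node. Your extra details (absorbing the factor $4$, replacing a deleted splitter, reading the leaf offset fresh from $N_1$) fit that construction. One caveat: for $\eps=1/2$ the ``$\eps+\O(1/\lg w)$'' trick leaves the stated range of \Cref{thm:compress}, so use the $\O(1)$-groups option there, or rely on the constant-factor slack already handled inside \Cref{thm:perfect}.
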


We are now able to prove~\Cref{thm:small-rank}.
For this construction, we use exponential trees~\cite{andersson2007dynamic},
where the degree decreases exponentially with the level, i.e., $d_{i+1}=\sqrt{d_i}$,
so that we achieve $\O(1/\eps)$ time.
Here, we exploit the fact that our index of~\Cref{coro:index} is parameterized.

\ThmSmallRank*
\begin{proof}
We construct a leaf-oriented exponential tree with $\O(\lg(1/\eps))$ levels.
We use leaves of degree $w^{\eps}$ using the $\tpl{w^{\eps}, \O(w^{\eps}/\eps), \O(1/\eps)}$-index of~\Cref{coro:index},
and use $\O(nw^{\eps}/\eps)$ bits and take $\O(1/\eps)$ time.
For $0\leq i\leq \log(1/\eps)-1$, let $\eps_i=1/2^{i+1}$.
We use internal nodes of degrees $d_i=w^{\eps_i}$,
where each level $i$ uses the $\tpl{w^{\eps_i}, \O(w^{\eps_i}/\eps_i), \O(1/\eps_i)}$-index.
There are $\O(n/w^{\eps})$ leaves (at level $\lg(1/\eps)$),
and by induction, level $i$ has $\O(n/w^{\eps_{i-1}})$ nodes.
Note that $n/\sqrt{w}=\O(\sqrt{w})$, and
hence the root can use the $\eps=1/2$ index.
Therefore, level $i$ uses $(n/w^{\eps_{i-1}})\cdot \O(w^{\eps_i}\cdot w^{\eps_i}/\eps_i)=\O(n/\eps_i)$ bits,
and thus the space used by internal nodes is
$\sum_{i=0}^{\log(1/\eps)-1} \O\brk{n/\eps_i}=\O\brk*{\sum_{i=0}^{\log(1/\eps)-1} n2^i}=\O(n/\eps)$.
The internal pointers and partial-sums take $\O((n/w^\eps)\log w)=\O(n/\eps)$ bits.
Lastly, the time to navigate the internal nodes is
$\sum_{i=0}^{\log(1/\eps)-1} \O\brk{1/\eps_i}=\O\brk*{\sum_{i=0}^{\log(1/\eps)-1} 2^i}=\O(1/\eps)$.
Note that this index accesses the $\SELECT_1$ dictionary $\O(\log(1/\eps))$ times,
$\O(1)$ times per level.
\end{proof}


\subsection{Super-constant time}\label{sec:binary-search}

We investigate $1/\eps=\omega(1)$
to achieve $o(n\log w)$ bits for the $\RANK$ index.
The main trick we use is the following.
For indices that take $\O(t)=\omega(1)$ time,
we repeat
a binary search in the $\SELECT_1$ dictionary
so that the tree only stores $n/2^{\O(t)}$ elements.
This modification does not change the time (still $\O(t)$), but reduces the space by a factor of $2^{\O(t)}$.

For $\eps=1/\sqrt{\lg w}$,
we can also replace the index on the leaves of~\Cref{thm:small-rank}, which dominates the space, 
with a binary search.
This uses $\O(1/\eps + \log w^\eps)=\O(\sqrt{\log w})$ time
and $\O(n/\eps)=\O(n\sqrt{\log w})$ bits.
Furthermore,
repeating the binary search in the leaves $\O(1)$ additional times yields the following.

\SqrtLog*

If we want to reduce the number of accesses to the $\SELECT_1$ dictionary,
as in~\cite{cohen2015minimal},
recall that for $\eps\leq \lg\lg w/\lg w$,
the construction of~\Cref{thm:compress}
does not require ancillary values to achieve $\O(1/\eps)$ time.
That is, it is
a $\tpl{w^\eps, \O(\log w), \O(1/\eps)}$-index.
Using an exponential tree as in~\Cref{thm:small-rank},
the space is $\O(n/\eps)+\O(n\log w)=\O(n\log w)$ bits,
and the time is $\O(1/\eps)$.
That is a $\tpl{w, \O(\log w), \O(\log w/\log\log w)}$-index.
This index queries the $\SELECT_1$ dictionary $\O(\log\log w)$ times.
Applying binary search only increasing $\O(\log \log w)$ accesses,
reduces the space by a factor of $2^{\O(\log\log w)}=(\log w)^{\O(1)}$.

\begin{corollary}\label{coro:log-rank}
There is a $\tpl*{w, 1/(\log w)^{\O(1)}, \O\brk*{\frac{\log w}{\log\log w}}}$-index
in standard Word-RAM.
This index queries the $\SELECT_1$ dictionary $\O(\log\log w)$ times.
\end{corollary}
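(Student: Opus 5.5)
The construction starts from the $\tpl{w, \O(\log w), \O(\log w/\log\log w)}$-index described just before the statement. It is an exponential tree as in~\Cref{thm:small-rank} with $\eps=\lg\lg w/\lg w$. Its leaves use the compression of~\Cref{thm:compress} in the regime where no ancillary values are needed, so each leaf stores only the packed array $C$ of significant bit-positions. The whole index touches the $\SELECT_1$ dictionary $\O(\log(1/\eps))=\O(\log\log w)$ times.

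To reduce the space, I sample the set. Let $g=(\lg w)^{c}$ for a suitable constant $c$. I keep the index only over a sampled subset $S'\subseteq S$ of $\Th(n/g)$ elements, namely every element whose rank lies in a sampled bucket boundary. Buckets between consecutive samples have size between $g$ and $4g$ and are maintained by lazy splits and merges as in~\Cref{sec:de-amortization}. As in~\Cref{sec:tree}, the samples are genuine elements of $S$, so they are recovered via $\SELECT_1$ from their ranks.

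The ranks of the samples are stored in a searchable partial-sum over bucket sizes. Since $|S'|=\O(w/g)\le \O(w/\log w)$ and the values are at most $w$, \Cref{coro:partial-sum} gives $\O(1)$-time operations using $\O((n/g)\log w)$ bits. The index over $S'$ costs $\O((n/g)\log w)$ bits as well. Taking $c\ge 2$, the total is $\O(n\log w/g)=\O(n/(\log w)^{\O(1)})$ bits, so the per-element redundancy is $r=1/(\log w)^{\O(1)}$.

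For $\RANK(S,x)$, I first compute $i=\RANK(S',x)$ in $\O(\log w/\log\log w)$ time. This locates the bucket whose rank interval in $S$ is $[\rho,\rho+\Th(g))$, with $\rho$ read from the partial-sum. I then binary search over ranks in that interval, querying $\SELECT_1(S,\cdot)$ at each step. This costs $\O(\log g)=\O(\log\log w)$ dictionary accesses and the same amount of time. Both the time and the access count stay within the claimed bounds.

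Updates work as follows. Inserting or deleting $x$ changes one bucket size by $\pm1$ in the partial-sum, which is $\O(1)$ time. If $x$ is itself a sample, it is deleted from $S'$ and replaced by its successor in $S$, found via $\RANK$ and $\SELECT_1$. That costs one update to the index over $S'$, in $\O(\log w/\log\log w)$ time. Bucket splits and merges are de-amortized so that each operation performs $\O(1)$ insertions or deletions in $S'$.

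The main obstacle is making sure the index over $S'$ answers correctly when it only consults the dictionary of $S$. That index calls $\SELECT_1(S',j)$, which I implement as $\SELECT_1(S,\SUM(\cdot,j))$; this adds $\O(1)$ time per call and does not change the $\O(\log\log w)$ access count. I also need the memory for the sampled structure and the partial-sum to be concatenated through \Cref{lemma:linear} and \Cref{obs:linear} without losing the $\O(n/g)$-word space bound while $n$ varies. That bookkeeping mirrors \Cref{sec:tree}.
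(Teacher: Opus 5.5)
Your proposal is correct and is essentially the paper's argument. You start from the $\tpl{w,\O(\log w),\O(\log w/\log\log w)}$-index given by the exponential tree with $\eps=\lg\lg w/\lg w$, then apply the binary-search sampling trick of \Cref{sec:binary-search} with buckets of $(\log w)^{\O(1)}$ elements; this adds only $\O(\log\log w)$ time and $\O(\log\log w)$ $\SELECT_1$ accesses, and cuts the space by a $(\log w)^{\O(1)}$ factor. Your simulation of $\SELECT_1(S',j)$ through a one-word partial-sum over bucket sizes is the same ``splitters are elements of $S$'' device used in \Cref{sec:tree}, and it spells out details the paper leaves terse.
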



\section{Extensions}\label{sec:extension}

In this section,
we extend our technique to other cases of interest,
namely, different instruction sets and the static case.


\begin{table*}[t]
\renewcommand{\arraystretch}{1}
\centering
\begin{tabular}{|c|c|c|c|c|c|c|}\hline
Reference & Dyn. & Model & $N$ & Space/$N$ & Time & Accesses \\\hline
\cite{ajtai1984hash} & \checkyes & Cell-probe  & $\frac{w}{\lg w}$ & $\O(\log w)$ & $\O(1)$ & $\O(1)$ \\\hline
\cite{fredman1993surpassing} & \checkno & Multiplication & $w^{1/6}$ & $\O(w^{5/6})$ & $\O(1)$ & $\O(1)$ \\\hline
\cite{belazzougui2010dynamic} & \checkyes & Multiplication & $-$ & $\O(\log w)$ & $\O_\text{exp}(\log w)$ & $\O_{\text{exp}}(1)$ \\\hline
\cite{cohen2015minimal} & \checkno & Standard \AC & $\frac{w}{\lg w}$ & $\O(\log w)$ & $\O(\log w)$ & $\O(1)$ \\\hline
\cite{patrascu2014dynamic} & \checkyes & Multiplication & $w^{1/4}$ & $\O(w^{3/4})$ & $\O(1)$ & $\O(1)$ \\\hline
\cite{patrascu2014dynamic}+\cite{andersson1999fusion} & \checkyes  & \cite{andersson1999fusion} \AC & $w^\eps$ & $\O(w^\eps)$ & $\O(1)$ & $\O(1)$ \\\hline
Ours (\Cref{coro:index}) & \checkyes  & Multiplication & $w^{\eps}$ & $\O(w^\eps/\eps)$ & $\O(1/\eps)$ & $\O(1)$ \\\hline
Ours (\Cref{thm:small-rank}) & \checkyes  & Multiplication & $w$ & $\O(w^\eps/\eps)$ & $\O(1/\eps)$ & $\O(\log(1/\eps))$ \\\hline
Ours (\Cref{thm:sqrt-log}) & \checkyes  & Multiplication & $w$ & $1/2^{\O(\sqrt{\log w})}$ & $\O\brk*{\sqrt{\log w}}$ & $\O(\sqrt{\log w})$ \\\hline
Ours (\Cref{coro:log-rank}) & \checkyes  & Multiplication & $w$ & $1/(\log w)^{\O(1)}$ & $\O\brk*{\frac{\log w}{\log\log w}}$ & $\O(\log\log w)$ \\\hline
Ours (\Cref{coro:ac0}) & \checkyes  & \cite{andersson1999fusion} \AC & $w$ & $\O(w^\eps+\log w)$ & $\O(\log(1/\eps))$ & $\O(\log(1/\eps))$ \\\hline
Ours (\Cref{coro:ac0}) & \checkyes  & \cite{andersson1999fusion} \AC & $w$ & $1/(\log w)^{\O(1)}$ & $\O(\log\log w)$ & $\O(\log\log w)$ \\\hline
\end{tabular}
\caption{Survey of rank indices.
Some rows use a parameter $\eps\leq 1/2$.
``Accesses'' is the number of accesses to $\SELECT_1$ dictionary.
``$N$'' is the maximum number of elements in the index.
Some indices, namely~\cite{fredman1993surpassing,cohen2015minimal,patrascu2014dynamic},
always take $\O(w)$ bits, regardless of $N$.
Note the time bounds of~\cite{belazzougui2010dynamic} are randomized.
}
\label{tbl:rank}
\end{table*}

\subsection{Other computational models}\label{sec:other}

The implementation of a $\RANK$ index heavily depends  on the available $\O(1)$ time instructions of the model.
All models we discuss allow for standard \AC operations.
In the ``cell-probe'' model, any operation in $\O(1)$ words takes $\O(1)$ time.
The \AC model of~\cite{andersson1999fusion} adds a set of non-standard \AC operations in
$\O(1)$ time.
See~\Cref{tbl:rank} for a list of results.

Our general construction in
\Cref{lemma:small-FID} is also applicable to Word-RAM models with different instruction sets.
Note that we will also need to replace the $\RANK$ index of~\cite{patrascu2014dynamic} in the internal nodes of leaf-oriented trees of~\Cref{lemma:memory}.

In the cell-probe model~\cite{ajtai1984hash}, where any computation on $\O(1)$ words takes $\O(1)$ time,
there is a $\tpl{w/\log w,\O(\log w), \O(1)}$-index, storing a compressed trie as described in~\cite{ajtai1984hash,patrascu2014dynamic,cohen2015minimal}.
Applying~\Cref{lemma:memory,lemma:small-FID} in this model,
we get a dynamic FID with $\O(n\log w)$ redundancy
that takes $\O\brk*{\log_w(n)}$ time for all operations.

In the non-standard \AC model of~\cite{andersson1999fusion},
a compression, that is, $\tty{compress}(x,C)=\Pack{C}(x)$, can be computed $\O(1)$ time.
Hence, there is a compression using $\O(n\log w)$ bits,
and thus there is a $\tpl{w^\eps,\O(w^\eps+\log w), \O(1)}$-index for any $\eps\leq1/2$, using~\Cref{thm:rank-node}.
In the proof of~\Cref{thm:partial-sum} and~\Cref{lemma:select0},
we use multiplication to replicate a $\O(\log w)$-bit value $k$ times,
but this \AC model includes
a $\tty{replicate}(x, k)=x^k$
operation in $\O(1)$ time.
Applying exponential trees
analogously to~\Cref{thm:small-rank}
to this parameterized index
yields the following.
For $\eps=\omega(1)$, we use binary search as in~\Cref{sec:binary-search}.

\begin{corollary}\label{coro:ac0}
In the 
\AC model of~\cite{andersson1999fusion},
there is a $\tpl{w,\O(w^\eps+\log w), \O(\log(1/\eps))}$-index.
For $\eps=\lg\lg w/\lg w$,
this is a $\tpl{w,\O(\log w), \O(\log\log w)}$-index.
Binary search
reduces the space by a factor of $2^{\O(\log\log w)}=(\log w)^{\O(1)}$,
which yields a $\tpl{w,1/(\log w)^{\O(1)}, \O(\log\log w)}$-index.
\end{corollary}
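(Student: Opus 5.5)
The plan is to mirror the proof of \Cref{thm:small-rank}, replacing the compression of \Cref{thm:compress} by the native $\tty{compress}$ instruction. First I build the parameterized leaf index. In the model of~\cite{andersson1999fusion}, $\Pack{C}(x)=\tty{compress}(x,C)$ is a single $\O(1)$-time operation. Its only ancillary data is the sorted packed array $C$ of significant bit positions ($n\lg w$ bits). Updates to $C$ are indels into a packed array, hence $\O(1)$ time.

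Plugging this compression into \Cref{thm:rank-node} with $n\le w^{\eps}$ (we need $n=\O(\sqrt w)$, and $\eps\le 1/2$ suffices) gives a $\RANK$ index using $\O(n^2+n\log w)=\O(n(w^\eps+\log w))$ bits and $\O(1)$ time. This is a $\tpl{w^\eps,\O(w^\eps+\log w),\O(1)}$-index. I must check that every use of multiplication in \Cref{thm:rank-node} and in the partial sums of \Cref{coro:partial-sum} is only replication of an $\O(\log w)$-bit value. Those uses are then replaced by $\tty{replicate}$, and $\MSB/\LSB$ are $\AC$ operations.

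Second, I build the exponential tree exactly as in the proof of \Cref{thm:small-rank}. It has $\lg(1/\eps)$ levels, level $i$ has degree $w^{\eps_i}$ with $\eps_i=1/2^{i+1}$, and the leaves hold $w^{\eps}$ elements. Each internal node uses the index above with its own exponent $\eps_i$. Each level now costs $\O(1)$ time instead of $\O(1/\eps_i)$, so the total time is $\O(\log(1/\eps))$ with $\O(1)$ $\SELECT_1$ accesses per level. For the space, level $i$ has $\O(n/w^{\eps_{i-1}})$ nodes, each of $\O(w^{\eps_i}(w^{\eps_i}+\log w))$ bits. Since $w^{2\eps_i}=w^{\eps_{i-1}}$, the level costs $\O(n+n\log w/w^{\eps_i})$ bits, and summing gives $\O(n\log(1/\eps)+n\log w)$. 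The leaves cost $\O(n(w^\eps+\log w))$, and the pointers and partial sums cost $\O(n\log w/w^\eps)$. The total is $\O(n(w^\eps+\log w))$ bits, because $\log(1/\eps)\le \log w$ in the relevant range. This proves the first claim.

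Setting $\eps=\lg\lg w/\lg w$ gives $w^\eps=\lg w$ and $\log(1/\eps)=\O(\log\log w)$, which is the second claim. For the third claim I apply the binary-search trick of \Cref{sec:binary-search}. Only every $(\log w)^{c}$-th element is stored in the tree, that is, a sample of $n/(\log w)^c$ elements. A $\RANK$ query first locates the sampled predecessor in the tree. It then performs a binary search over the $(\log w)^c$ unsampled elements between consecutive samples, using $\O(c\log\log w)$ accesses to the $\SELECT_1$ dictionary. This keeps the time $\O(\log\log w)$ and divides the space per element by $(\log w)^{c}$. The factor is chosen so that it also absorbs the $\O(\log w)$ per element, leaving $1/(\log w)^{\O(1)}$.

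I expect the main obstacle to be maintaining the sample under updates in worst-case $\O(\log\log w)$ time. An insertion or deletion may force resampling, which changes which elements are stored in the tree. I would handle this with the lazy merge/split scheme of \Cref{sec:de-amortization}, treating each gap as a leaf of size $\Theta((\log w)^c)$. That scheme costs one tree update per $\Omega((\log w)^c)$ operations, so its cost is easily amortized into $\O(1)$ worst-case work per update.
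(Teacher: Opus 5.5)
Your proposal is correct and follows the paper's own route. The native $\tty{compress}$ instruction plugged into \Cref{thm:rank-node} gives a $\tpl{w^\eps,\O(w^\eps+\log w),\O(1)}$-index, the exponential tree of \Cref{thm:small-rank} then costs $\O(1)$ per level, and the binary-search sampling of \Cref{sec:binary-search} gives the last claim (your level-by-level space accounting is more explicit than the paper's). One small precision: deleting a sampled element forces an immediate splitter replacement, so the tree is not updated only once per $\Omega((\log w)^c)$ operations. This does not affect the bound, since a tree update is $\O(\log\log w)$ worst-case.
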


The ideas of~\Cref{lemma:memory,lemma:small-FID} also apply,
using the index with $\eps=1/4$ as a replacement for~\cite{patrascu2014dynamic}.
Thus, we are able to construct FIDs corresponding to these $\RANK$ indices.


\subsection{Static case}

In the static case,
there are several improvements we can make to our construction.
First, we cite the lower bound of~\cite{patrascu2006time,patrascu2014dynamic} for
the optimal time for predecessor search.
The predecessor over a set of $n$ elements using $\O(nw)$ space takes $\Omega(\OPT(u,n))$ time, where:
\begin{equation}\label{eq:opt}
\OPT(u,n)=
\min\brc*{
\log_w(n), \log\brk*{\log_w\brk*{\frac{u}{n}}},
\frac{\log\log_w(u)}{\log\brk*{\frac{\log\log_w(u)}{\log\log_w(n)}}}
}~.
\end{equation}
In~\cite{patrascu2006time}, there is a static structure achieving this bound.
For polynomial universes ($u=n^{1+\Th(1)}$),
this lower bound is $\OPT(u,n)=\Th(\log\log u)$,
and a static solution with optimal time and redundancy is known~\cite{liang2025optimal}.
We explore our construction for the remaining regimes.

We replace the internal nodes of the leaf-oriented tree (see~\Cref{sec:tree})
with this optimal predecessor data-structure for the splitters
$\brc{z_\lambda}_{\lambda=0}^{\Lambda-1}$,
where $\Lambda=\O(n/N)$,
using $\O(\Lambda w)$ bits and $\OPT(u, \Lambda)$ time
to navigate to the correct leaf for $\RANK$.
For $N=w^2$, we get $\OPT(u, \Lambda)=\Th(\OPT(u, n))$.
This predecessor data-structure
can also be used to implement $\SEARCH$ in a searchable partial-sum
to navigate to the correct leaf for $\SELECT_0$.

We also use the trick of~\cite{grossi2009more} in the static setting (which we adapt in~\Cref{sec:select0} for the dynamic setting)
to reduce $\SELECT_0$ to predecessor search.
In the static case, we can remove the $\O(n\log w)$ term in~\Cref{lemma:select0}.
Next,~\cite{grossi2009more} describes
a static $\SELECT_1$ dictionary,
for any set of $n$ elements,
using $\O(n)$ redundancy,
taking $\O(1)$ time for $\SELECT_1$.
Combining this with our $\RANK$ indices
yields the following.

\begin{corollary}[Static FID]\label{coro:static}
For every $\eps\leq 1/2$,
there is a static FID
with $\O(nw^{\eps}/\eps)$ redundancy,
taking $\O\brk*{1/\eps+\OPT(u,n)}$ time for $\RANK/\SELECT_0$,
and $\O(1)$ time for $\SELECT_1$.
For $\eps=1/\sqrt{\log w}$,
the redundancy is $\O(n)$.
\end{corollary}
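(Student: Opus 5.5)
The proof assembles three static components: the $\SELECT_1$ dictionary of Grossi et al.\ (\texttt{grossi2009more}), the $\RANK$ index of \Cref{thm:small-rank} (or \Cref{thm:sqrt-log}) at the leaves, and an optimal predecessor structure over a sparse set of splitters to reach the right leaf.

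\textbf{The $\SELECT_1$ dictionary.} We store $S$ with the static dictionary of Grossi et al. It uses $\lg\binom{u}{n}+\O(n)$ bits and answers $\SELECT_1$ in $\O(1)$ time. This immediately gives the $\SELECT_1$ bound. Every later component accesses $S$ only through this dictionary, so each access costs $\O(1)$.

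\textbf{The $\RANK$ structure.} We build a one-level leaf-oriented partition as in \Cref{sec:tree}. Choose splitters $z_\lambda=\SELECT_1(S,\lambda N)$ with $N=w^2$, which gives $\Lambda=\O(n/N)$ groups of exactly $N$ consecutive elements. Over $\brc{z_\lambda}$ we build the static predecessor structure of P\u{a}tra\c{s}cu--Thorup (\texttt{patrascu2006time}). It uses $\O(\Lambda w)=\O(n/w)$ bits and takes $\OPT(u,\Lambda)$ time.

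Since $\Lambda\geq n/w^2$, we have $\log_w\Lambda\geq\log_w n-2$, and a similar inspection of the other two terms of \Cref{eq:opt} shows $\OPT(u,\Lambda)=\O(1+\OPT(u,n))$. This case check on the three terms is the step I expect to need the most care, especially the third term when $\log_w n$ is small. There the fallback is that both quantities are $\O(1)$ once $n\le w^{\O(1)}$: then $\Lambda=\O(1)$ and we skip the predecessor structure entirely.

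The rank of a splitter is known implicitly as $\lambda N$, so no partial sums need to be stored. Within group $\lambda$ we use the $\RANK$ index of \Cref{thm:small-rank} for $N=w^2$ elements, implemented as two levels of the $w$-element index. Being static, it needs no update machinery. It takes $\O(1/\eps)$ time and $\O(Nw^\eps/\eps)$ bits per group, and its $\SELECT_1$ queries are routed to the global dictionary with offset $\lambda N$. Hence
\[
\RANK(S,x)=\lambda N+\RANK(S_\lambda,x),
\]
where $\lambda$ is obtained by a predecessor query on $x$. Summing over all groups, the redundancy is $\O(n)+\O(n/w)+\O(nw^\eps/\eps)=\O(nw^\eps/\eps)$.

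\textbf{$\SELECT_0$.} We apply the Grossi et al.\ reduction from \Cref{sec:select0}. Consider the non-decreasing sequence $y_i=\SELECT_1(S,i)-i$. Then $\SELECT_0(S,r)=r+\#\{i: y_i\le r\}$, up to the paper's indexing convention. Computing $\#\{i:y_i\le r\}$ is a $\RANK$ query on the multiset $\brc{y_i}$, which is converted to a set by the usual $y_i+i$ trick, and that set is $S$ itself after a re-indexing. So it suffices to build the same two-tier structure on the $y$-sequence: sampled splitters $y_{\lambda N}$ in a predecessor structure of $\O(n/w)$ bits, with group-local $\RANK$ indices that access $y_i$ in $\O(1)$ time through the $\SELECT_1$ dictionary. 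The static setting is what removes the $\O(n\log w)$ term of \Cref{lemma:select0}: there is no $T$ array and no permutation to maintain, because $D[i]$ is computed on the fly and nothing is updated. The time is $\O(1/\eps+\OPT(u,n))$ and the space is $\O(nw^\eps/\eps)$, matching $\RANK$.

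\textbf{The case $\eps=1/\sqrt{\log w}$.} We replace the leaf indices by \Cref{thm:sqrt-log}, which uses $N/2^{\O(\sqrt{\log w})}$ bits per group. The total redundancy is then dominated by the dictionary's $\O(n)$ and the $\O(n/w)$ predecessor structures, so it is $\O(n)$, while the time is $\O(\sqrt{\log w}+\OPT(u,n))=\O(1/\eps+\OPT(u,n))$.
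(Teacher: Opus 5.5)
\textbf{Where the proposal stands.} Your architecture is the paper's: the $\SELECT_1$ dictionary of Grossi et al., an optimal static predecessor structure over $\Lambda=\O(n/w^2)$ splitters, the indices of \Cref{thm:small-rank} (or \Cref{thm:sqrt-log}) inside the groups, and the reduction $\SELECT_0(S,r)=r+\#\{i: y_i\le r\}$ with $y_i=\SELECT_1(S,i)-i$. Taking splitters at ranks $\lambda N$, so that no partial sums need to be stored, is a nice static simplification.

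\textbf{The gap.} The step that fails as written is treating $\{y_i\}$ as a set. The $y_i$ really do repeat: a run of $m$ consecutive ones in $B$ gives $m$ equal values. Your conversion is circular. The map $y_i\mapsto y_i+i=x_i$ gives back $S$, but $\#\{i: y_i\le r\}=\#\{i: x_i\le r+i\}$ has a threshold that depends on $i$. So it is not a single $\RANK$ query on $S$; answering it from $S$ is just $\SELECT_0$ again, which is why the paper phrases it as $\SEARCH$ in a partial sum. You then build the group-local indices ``on the $y$-sequence''. Those come from \Cref{thm:small-rank,thm:rank-node}, which are stated and proved for sets: they use significant bits $\MSB(x\oplus y)$ of distinct keys, and splitters that are elements of the set. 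So they do not apply to a multiset. The sampled splitters $y_{\lambda N}$ can also coincide, whenever a run contains at least $N$ ones.

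\textbf{A cheap repair.} Inside group $\lambda$, index the keys $\tilde y_i=y_i N+(i-\lambda N)$. They are strictly increasing, computable in $\O(1)$ time from $\SELECT_1$, and give
\[
\#\{i \text{ in group } \lambda: y_i\le r\}=\RANK(\{\tilde y_i\},\, rN+N-1).
\]
These keys are $2\lg w$ bits longer than $\lg u$. If that exceeds $w$, use two-word keys, compressing each word separately and concatenating; this changes time and space only by constant factors. For the splitters, store with each distinct value of $y_{\lambda N}$ the largest $\lambda$ attaining it; this fits in the $\O(\Lambda w)$ bits of the predecessor structure. In the paper's $\SEARCH$-based route the same issue is absorbed by \Cref{thm:partial-sum}, whose rank index is built only on the representative sums $D[R]$, and these are strictly increasing. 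With this fix your time and space bounds go through, including the $\O(n)$ redundancy via \Cref{thm:sqrt-log}.

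\textbf{The $\OPT$ comparison.} This step needs no case analysis. Replacing $n$ by $\Lambda\le n$ does not increase the first or third term of $\OPT$. It raises the second term from $\log\log_w(u/n)$ to at most $\log(\log_w(u/n)+2)$. Hence $\OPT(u,\Lambda)\le\OPT(u,n)+\O(1)$. Also, $n\le w^{\O(1)}$ does not imply $\Lambda=\O(1)$. That case is harmless anyway, since then $\OPT(u,\Lambda)\le\log_w n=\O(1)$.
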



\section{Conclusion and open problems}\label{sec:conclusion}

In this paper, we describe dynamic FID
that use $\lg\binom{u}{n}+\O(nw^{\eps}/\eps)$ bits.
Updates and $\RANK/\SELECT_0$ take $\O\brk*{1/\eps+\log_w(n)}$ time,
and $\SELECT_1$ takes $\O\brk*{\log_w(n)}$ time.
For $\eps=1/4$, this is the first deterministic dynamic FID
in the standard Word-RAM model
with $o(n\sqrt{w})$ bits of redundancy, optimal time for all operations, and worst-case times.

With our generic reductions of~\Cref{lemma:memory,lemma:small-FID},
further research may focus on building more space-efficient $\RANK$ indices.
Alternatively, different constructions of small FIDs such as~\cite{kuszmaul2026succinct} also fit into our framework, where we provide a general procedure to manage memory
with worst-case update times.
We cite a few remaining open questions.

\paragraph*{Lower bounds.}
The best lower bound is for the static case~\cite{patrascu2010cell}
that requires $n/w^{\O(t)}$ redundancy for $\O(t)$ time FID.
A recent lower bound~\cite{li2023tight} holds
for a polynomial universe ($u=n^{1+\Th(1)}$)
and $w=\Th(\log u)$,
and states that any dynamic dictionary
(that is, storing a dynamic set and supporting membership)
with $\O(n)$ redundancy
requires $\Omega(\log^*(w))$ time for membership,
where $\log^*$ is the iterated logarithm.
Considering~\Cref{lemma:memory},
a time lower bound for
a dynamic FID of $N=w^{\O(1)}$ elements
and $\O(n)$ redundancy
in the standard Word-RAM model
is an interesting open question.

\paragraph*{Indices and dictionaries.}
Our results of~\Cref{lemma:select0,lemma:small-FID} have an additive term of $\O(n\log w)$ bits
for the dynamic setting.
These are the dominating terms for FIDs using the $\RANK$ indices of~\Cref{thm:sqrt-log} and~\Cref{coro:log-rank,coro:ac0}.
For polynomial universes, there are some results achieving $\O(n)$ redundancy with $\omega(1)$ time $\SELECT_1$~\cite{pibiri2017dynamic,pibiri2020succinct}.
We leave for further work to improve these results.


\section*{Acknowledgments}
This research was supported by the Israel Science Foundation, grant No. 1948/21.
I wish to thank Guy Even and Boaz Patt-Shamir for their helpful comments and advice,
and the anonymous reviewers for their remarks.

\bibliography{main}

\end{document}